\documentclass[11pt]{article}

\usepackage[margin=1in]{geometry}
\usepackage{amsmath,amssymb,amsthm,mathtools}
\usepackage{booktabs}
\usepackage{enumitem}
\usepackage{microtype}
\usepackage[colorlinks=true,linkcolor=blue,citecolor=blue,urlcolor=blue]{hyperref}
\usepackage{xcolor}

\newtheorem{theorem}{Theorem}
\newtheorem{lemma}{Lemma}

\newtheorem{corollary}{Corollary}
\theoremstyle{definition}
\newtheorem{definition}{Definition}
\newtheorem{remark}{Remark}

\newcommand{\E}{\mathbb E}

\newcommand{\poly}{\operatorname{poly}}

\providecommand{\OPT}{\operatorname{OPT}}
\providecommand{\Snap}{\operatorname{WSnap}}
\providecommand{\PsSnap}{\operatorname{WPsSnap}}
\providecommand{\poly}{\operatorname{poly}}
\providecommand{\norm}[1]{\left\lVert #1 \right\rVert}
\providecommand{\eps}{\varepsilon}
\providecommand{\sgnset}{\{+,-\}}
\providecommand{\clip}{\operatorname{clip}}
\providecommand{\Val}{\operatorname{Val}}
\providecommand{\bits}{\{0,1\}}
\newcommand{\1}{\mathbf 1}
\newcommand{\Qstate}[1]{\left|Q(#1)\right\rangle}
\newcommand{\inc}[1]{\mathrm{inc}(#1)}

\usepackage{braket}
\providecommand{\Cut}{\operatorname{Cut}}
\providecommand{\MaxCut}{\operatorname{MaxCut}}

\providecommand{\ind}{\mathbf 1}

\title{Exponential Quantum Space Advantage for Approximating Max-$k$SAT in the Streaming Setting}
\author{
Haoyu Wang\thanks{Penn State University. Email: \texttt{hjw5492@psu.edu}.}\\
\and
Guangxu Yang\thanks{University of Southern California. Email: \texttt{guangxuy@usc.edu}. Research supported by NSF CAREER award 2141536.}\\
}
\begin{document}
\maketitle

\begin{abstract}
In this paper, we give a one-pass quantum streaming algorithm for Max-$k$SAT that uses $\operatorname{polylog}(n)$ space and achieves a $0.7172$-approximation on instances with $n$ variables. In contrast, prior work by Chou, Golovnev, and Velusamy (FOCS 2020) implies that achieving an approximation ratio better than $\sqrt{2}/2 \approx 0.7071$ for Max-$k$SAT requires $\Omega(\sqrt{n})$ space for any classical streaming algorithm. Therefore, it yields an exponential quantum space advantage for Max-$k$SAT in the streaming setting.

We further give a one-pass quantum streaming algorithm for Max-2OR that uses $\operatorname{polylog}(n)$ space and achieves a $0.7425$-approximation on instances with $n$ variables. Combining with the known results, it gives a complete classification of quantum space advantages for all Boolean Max-2CSPs. 
\end{abstract}
\newpage
\tableofcontents

\newpage
\section{Introduction}\label{sec:introduction}
Streaming algorithms provide a framework for processing massive datasets, especially in settings where the input is too large to be stored in memory \cite{BabcockBabuDatarMotwaniWidom2002, Muthukrishnan2005}. In this model, data elements arrive sequentially, and the algorithm must process each element as it appears while using only limited working space. Because of their space efficiency, streaming algorithms have found many applications, including the computation of statistics over data streams \cite{FlajoletMartin1985, AlonMatiasSzegedy1999} and the estimation of parameters of massive graphs \cite{BarYossefKumarSivakumar2002, McGregor2014}.

Quantum computing offers the possibility of large resource advantages over classical systems. The main focus has traditionally been time complexity \cite{farhi2014quantum}, for example, Shor's algorithm for integer factorization \cite{Shor1999} gives
an exponential speedup over the best-known classical algorithms. However, proving unconditional exponential time separations from all classical algorithms
remains a major challenge \cite{BernsteinVazirani1997, Watrous2018}.  Space is another crucial resource, especially in the
quantum setting, where fault-tolerant qubits are scarce and costly \cite{FowlerMariantoniMartinisCleland2012, BabbushMcCleanGidneyBoixoNeven2021}. Quantum streaming algorithms provide a natural framework for studying space-efficient quantum computation and for seeking quantum advantages in space usage \cite{LeGall2006, NayakTouchette2017, Kallaugher2022}.

Provable exponential space advantages for quantum streaming algorithms were first obtained in the seminal work of Gavinsky, Kempe, Kerenidis, Raz, and de Wolf \cite{GavinskyKempeKerenidisRazDeWolf2008}. Their problem is designed for this separation, and leaves open whether such advantages could be obtained for natural problems of independent classical interest. Several natural candidates such as $\mathrm{Dyck}(2)$ language \cite{JainNayak2014,NayakTouchette2017} and triangle counting \cite{Kallaugher2022} were subsequently studied. Kallaugher, Parekh, and Voronova achieved the first exponential quantum space advantage for a natural streaming problem \cite{KPV23}. They gave a polylogarithmic-space quantum streaming algorithm that achieves a $0.4844$-approximation for Max-DiCut. In contrast, previous work of Chou, Golovnev, and Velusamy~\cite{CGV20} implies that any classical streaming algorithm achieving an approximation ratio better than $4/9 \approx 0.444$ requires $\Omega(\sqrt n)$ space.

Since Max-DiCut is only one particular Max-CSP, it naturally raises a broader question about the scope of quantum space advantages for streaming CSPs. 
\begin{quote}
\textit{Which constraint satisfaction problems (CSPs) admit an exponential quantum space advantage in the streaming model?}
\end{quote}
In this paper, we prove an exponential quantum space advantage for Max-$k$SAT in the streaming setting.  Max-$k$SAT is one of the most standard and widely
studied Max-CSPs, where each constraint is a disjunction of at most $k$ literals. Our result also completes the classification of quantum space advantages for Boolean Max-2CSPs when combined with prior work.

\subsection{Our contributions}
Our main contribution is a polylogarithmic-space one-pass quantum streaming algorithm for Max-$k$SAT with approximation ratio $0.7172$.
\begin{theorem} \label{thm:main}
For every fixed $k\ge 2$ and every $\delta\in(0,1)$, there is a one-pass quantum streaming algorithm that, on every Max-$k$SAT instance with $n$ variables, outputs a $0.7172$-approximation with probability at least $1-\delta$, using $O\!\left(\log^5 n \log\frac{1}{\delta}\right)$ qubits of space.
\end{theorem}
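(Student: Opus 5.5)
The plan follows the Kallaugher--Parekh--Voronova (KPV) template for Max-DiCut, adapted to Max-$k$SAT through a bias-based ``snapshot'' estimator.

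\textbf{Reduction to bias statistics.} First I reduce Max-$k$SAT to a statistic that a small quantum state can estimate. For each variable $i$, let the (weighted) bias $b_i$ be the number of positive occurrences of $i$ minus the number of negative occurrences, normalised by its total degree. Classically, Chou--Golovnev--Velusamy show that quantities depending only on the bias histogram, namely $\sum_i |b_i|$ and degree-weighted versions of it, give exactly a $\sqrt2/2$ approximation. To beat this, one needs joint information about the biases of the literals inside each clause, i.e.\ correlations of the form $\sum_{C}\prod_{\ell\in C} f(\mathrm{sign}_\ell\, b_{\ell})$. Following KPV, I would define a snapshot quantity, presumably the $\Snap$/$\PsSnap$ of the macros, over discretised bias classes $(\clip(b_i))$. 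It counts clauses by the tuple of bias classes of their literals, with $k$-clauses handled by symmetrising over at most two representative literals (``pseudo''-snapshots).

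\textbf{Steps.}
\begin{enumerate}[label=(\roman*)]
\item Prove a combinatorial lemma: for a suitable rounding rule that sets $x_i$ randomly with $\Pr[x_i=1]=g(b_i)$, the expected satisfied weight is at least $\sum_{\text{classes}} \Snap(\cdot)\cdot \Val(\cdot)$. Prove also an upper bound $\OPT \le$ a linear functional of the same snapshot. Computing the worst ratio then reduces to a finite LP, or a factor-revealing optimisation, whose value is $0.7172$. This mirrors the analysis behind $0.4844$ for DiCut and is verified numerically over a grid.
\item Show that the snapshot can be estimated in one pass with polylog qubits. Store a uniform superposition $\frac{1}{\sqrt n}\sum_i \ket{i}\ket{\text{counter}_i}$; counters are implicit, so use KPV's approach. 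Maintain a quantum state with phases that encode signs of occurrences, and at the end apply a measurement that samples a clause together with estimated biases of its endpoints, using hashing and $\ell_2$/Count-Sketch-style estimators of degrees. KPV's ``quantum snapshot'' lemma would be invoked directly, giving additive $\eps m$ error with $O(\log^{c} n)$ qubits per copy.
\item Repeat $O(\log(1/\delta))$ times and take medians for confidence. Clipping to degree classes costs $\poly\log n$ copies, which yields the stated $O(\log^5 n \log(1/\delta))$.
\end{enumerate}

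\textbf{Main obstacle.} The hardest step is (ii) for clauses of width $k>2$. The KPV machinery samples edges, i.e.\ pairs, so I would reduce each $k$-clause to pairs. Each $k$-clause is replaced by $\binom{k}{2}$ weighted 2-clauses, or by a random pair of its literals inserted into the stream, together with a unit-clause contribution. I would then argue that the approximation analysis in (i) still holds when only pairwise bias information is available; this is precisely what a ``pseudo-snapshot'' would capture. The remaining risk is checking that the factor-revealing LP does not degrade below $0.7172$ under this pairwise relaxation. I expect that longer clauses are only easier, since their satisfaction probability under independent biased rounding increases, so the worst case should be $k=2$.
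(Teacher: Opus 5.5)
Your proposal has two genuine gaps, and the step you flag as the ``remaining risk'' is where the paper needs a different idea.

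\textbf{Long clauses.} Replacing a $k$-clause by a random pair of its literals, or by $\binom{k}{2}$ weighted pairs, gives a surrogate that is an average of pair probabilities $1-(1-q_\alpha)(1-q_\beta)$. That quantity does not grow with $k$. So your heuristic that ``longer clauses are only easier'' holds for the true probability $1-\prod_j(1-q_{\alpha_j})$, but not for the quantity you would actually estimate.

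Even with all rounding probabilities in $[0.27,0.73]$, a pair of literals opposing their variables' biases scores only about $0.47$. Meanwhile $\OPT$ may satisfy the clause, possibly through a literal outside the chosen pair. Nothing in your plan gives ratio $0.7172$ uniformly in $k$. Moreover, the true probability is large only if the rounding stays away from $0$ and $1$, which your generic $g(b_i)$ does not guarantee.

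The paper avoids this in three steps:
\begin{itemize}
\item The bias and snapshot use only clauses of length $\le 3$, with weights $4,2,1$, so long clauses do not affect the buckets.
\item The certified rounding vector has every $r_i\in[0.27083,0.72917]$. Hence each clause of length $\ge 4$ is satisfied with probability at least $1-0.72917^4>\rho$, and long clauses are simply credited $\rho\, m_{\ge 4}$. This is valid for the upper bound, and for the lower bound since $\OPT$ gains at most $m_{\ge4}$ from them.
\item Ternary clauses are scored by a pairwise-decomposable surrogate $H=c_0+\sum_j u_j(\alpha_j)+\sum_{j<t}p_{jt}(\alpha_j,\alpha_t)$ whose coefficients are LP variables. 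The LP imposes $H\le 1-\prod_j(1-q_{\alpha_j})$ and the typed-atom inequalities. There, atoms record the reference assignment's bits on all three literals, and the bucket-feasibility constraints enter with multipliers $\lambda\ge 0$.
\end{itemize}
This LP is finite and is checked by an exact rational certificate; ``verified numerically over a grid'' would not constitute a proof.

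\textbf{The streaming step.} Step (ii) misses the central issue. An endpoint's label depends on its variable's final bias. Yet the only moment a small quantum state can jointly test the endpoints of a clause is at the clause's arrival. At the end, the state has no record of which variables formed a clause, and at arrival only prefix counts exist.

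The paper therefore estimates a pseudosnapshot $\PsSnap$. At arrival it queries threshold states encoding the prefix tagged degree and positive count, with the latter subsampled on high-degree scales. After a successful query it counts the suffix classically, giving the arrival-time pseudobias $\widetilde b_v^{\,C}$. Linking this back to the analysis needs three ingredients:
\begin{itemize}
\item random smoothing noise $g(v)$;
\item a stability lemma showing $\PsSnap$ is $O(\eps m)$-close in $\ell_1$ to a fixed-label snapshot whose labels are only $\eps$-admissible;
\item a robust version of the certificate that tolerates such labels.
\end{itemize}
Your ``pseudo-snapshot'' (symmetrising over two literals) is unrelated to this.

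Finally, the KPV sketch cannot be invoked as a black box. It tracks directed in/out degrees of edge endpoints. Here one needs weighted signed counts (tagged units), one-endpoint coordinates for unary clauses and the $u_j$ terms, and sign--bucket labels. The paper re-implements the sketch with the encoding $\Gamma_a(d,c)=d+B_a c$ for a random base $B_a$, and bounds the resulting impersonation error.
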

In contrast, prior work by \cite{CGV20} implies that achieving an approximation ratio better than $\sqrt{2}/2 \approx 0.7071$ for Max-$k$SAT requires $\Omega(\sqrt{n})$ space for any classical streaming algorithm.  

\begin{table}[ht]
\centering
\small
\renewcommand{\arraystretch}{1.2}
\begin{tabular}{p{0.18\textwidth}p{0.22\textwidth}p{0.16\textwidth}p{0.20\textwidth}p{0.12\textwidth}}
\toprule
Type $\mathcal{G}$
& Special case 
& Classical 
& Quantum 
& Advantage \\
\midrule
TR
& Trivial
& $1$ [Folklore]
& $1$ [Folklore]
& No \\

OR
& Max-2EOR
& $3/4$ \cite{CGV20}
& $3/4$ \cite{KP22}
& No \\

TR + OR
& Max-2SAT
& $\sqrt{2}/2$ \cite{CGV20}
& $\mathbf{[0.7425,3/4]}$ 
& \textbf{Yes}\\ 

XOR
& Max-Cut 
& $1/2$ \cite{CGV20}
& $1/2$ \cite{KP22}
& No \\ 

AND
& Max-DiCut 
& $4/9$ \cite{CGV20}
& $[0.4844,1/2]$ \cite{KPV23}
& Yes \\
\bottomrule
\end{tabular}
\caption{Classical and quantum one-pass streaming thresholds for Boolean Max-2CSPs with polylogarithmic space. The ``Classical'' and ``Quantum'' columns list the known threshold values from the cited works; an interval means that the quantum threshold is known to lie between the displayed lower and upper bounds. The last column indicates whether the quantum threshold is strictly better than the classical one. The bold entry is the new quantum advantage for Max-2SAT established in this work. Chou, Golovnev, and Velusamy~\cite{CGV20} show that the streaming approximability of every Boolean Max-2CSP is governed by the five canonical problems. More precisely, for any predicate family $\mathcal F$, $\operatorname{Max-CSP}(\mathcal F)$ is exactly as hard to approximate as the hardest problem in these predicates.}
\label{tab:max2csp-quantum-advantage}
\end{table}

To better understand our result, we classify $2$-variable Boolean predicates $f \colon \{0,1\}^2 \to \{0,1\}$ into four standard types based on their truth tables: TR (depends on $\le 1$ input), OR (exactly one zero), XOR (depends on both inputs; exactly two zeros), and AND (exactly three zeros). For any $\Lambda \in \{\mathrm{OR}, \mathrm{XOR}, \mathrm{AND}\}$, the Max-2CSP problem is denoted $\text{Max-2E}\Lambda$ if its allowed predicate set $\mathcal{F}$ contains only $\Lambda$-type predicates (exact arity two), and $\text{Max-2}\Lambda$ if $\mathcal{F}$ also includes TR-type predicates. Identifying a predicate family with its contained types, every Boolean Max-2CSP is thus uniquely specified by a subset $\mathcal{F} \subseteq \{\mathrm{TR}, \mathrm{OR}, \mathrm{XOR}, \mathrm{AND}\}$.

Theorem~\ref{thm:main} also gives a quantum streaming algorithm for Max-$2$OR, since Max-$2$OR is equivalent to Max-$2$SAT. We also optimize the approximation ratio for Max-$2$OR. The proof is deferred to Appendix~\ref{app:max2sat-certificate}.
\begin{corollary}\label{thm:2sat}
For every $\delta\in(0,1)$, there is a one-pass quantum streaming algorithm that, on every Max-$2$OR instance, with $n$ variables, outputs a $0.7425$-approximation with probability at least $1-\delta$, using $O\!\left(\log^5 n \log\frac{1}{\delta}\right)$ qubits of space.
\end{corollary}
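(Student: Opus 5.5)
The plan is to reuse the algorithmic framework behind Theorem~\ref{thm:main} for $k=2$ and to sharpen only the final analytic step that yields the constant. Max-$2$OR is Max-$2$SAT: every constraint is a disjunction of at most two literals, with the TR-type predicates being unit clauses.

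\textbf{Step 1: the streamed statistics.} In polylogarithmic quantum space, the Max-$k$SAT algorithm estimates a small set of "snapshot" statistics of the instance. These are the weighted bias/snapshot quantities (the $\Snap$ and $\PsSnap$ macros suggest such sketches). They refine the classical bias-based statistics of \cite{CGV20}, in the way the quantum Max-DiCut algorithm of \cite{KPV23} refines them. For $k=2$ I would keep exactly these estimates. This step carries over the space bound $O(\log^5 n\log\frac1\delta)$ and the $1-\delta$ success probability unchanged, via the same median-of-repetitions amplification. So the whole corollary reduces to an approximation-ratio claim.

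\textbf{Step 2: the ratio as an optimization problem.} The algorithm outputs a lower bound on $\OPT$ of the following form: the value achieved by a family of randomized rounding schemes that set each variable according to its bias class, minus estimation slack. The ratio guarantee is then the minimum, over all feasible "profiles" of an instance, of
\[
\max_{\text{rounding in family}} \frac{\E[\text{satisfied weight}]}{\text{upper bound on }\OPT \text{ from the same statistics}}.
\]
Here a profile is the distribution of clause types (unit versus two-literal) and of the bias pairs of their literals. For general $k$ this min-max is bounded crudely, giving $0.7172$. For $k=2$ I would do the following:
\begin{itemize}
\item restrict to clauses of width at most $2$, which removes the worst case contributed by longer clauses;
\item enlarge the rounding family, for example to piecewise-linear or threshold functions of bias with a few free parameters, and optimize those parameters;
\item use the additional pairwise snapshot information available for 2-literal clauses to tighten the $\OPT$ upper bound.
\end{itemize}

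\textbf{Step 3: certifying $0.7425$, the main obstacle.} Establishing that the min-max is at least $0.7425$ is the hard part; the appendix label "certificate" suggests a computer-assisted proof. My plan has three parts.
\begin{enumerate}
\item Discretize bias space into finitely many intervals and reduce the inner minimization over profiles to a linear program, since both numerator and denominator are linear in the profile weights.
\item Fix explicit rounding parameters found numerically.
\item Exhibit a dual solution, a set of LP multipliers, certifying the ratio on each cell. Lipschitz bounds on the rounding functions within each interval control the discretization error, and a uniform $\eps$ margin absorbs the streaming estimation errors.
\end{enumerate}
The difficulty is making the dual certificate rigorous with exact rational arithmetic while keeping the discretization fine enough that the Lipschitz loss does not eat the gap between the numerically optimized value and $0.7425$.
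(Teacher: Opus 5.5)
Your plan follows the paper's route: it reuses the streaming and quantum pseudosnapshot machinery, and it gets the constant from a computer-found finite factor-revealing LP over bucketed typed atoms. The exact rational Farkas dual of that LP certifies that the best of several fixed bucket-constant rounding profiles achieves $\rho_2=0.74252$ (the paper uses nine profiles combined in the dual by convex weights $\alpha_r$, unary bias weight $67/20$, and a nonuniform 450-bucket partition). Two small corrections: the snapshot parameters do change rather than being kept verbatim from the Max-$k$SAT algorithm; and since rounding is constant on buckets and $\OPT$ enters through typed atoms recording the optimal assignment's bits subject to per-(bucket, bit) bias-feasibility constraints, the finite LP is already a valid relaxation, so there is no Lipschitz discretization loss to control.
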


Combining Corollary~\ref{thm:2sat} with the known results \cite{CGV20, KP22} gives a complete classification of quantum space advantages for Boolean Max-2CSPs, summarized in Table~\ref{tab:max2csp-quantum-advantage}. Previous work had already settled several cases.  
\begin{itemize}
    \item For pure OR-type constraints (Max-2EOR) and XOR-type constraints (Max-Cut), quantum streaming algorithms satisfy the same space lower bounds as classical algorithms~\cite{KP22}, and hence give no quantum space advantage. 
    \item For AND-type constraints (Max-DiCut), Kallaugher, Parekh, and Voronova give a polylogarithmic-space quantum streaming algorithm achieving a $0.4844$-approximation~\cite{KPV23}, whereas the classical lower bound of Chou, Golovnev, and Velusamy rules out any ratio better than $4/9$ in $o(\sqrt n)$ space~\cite{CGV20}.
\end{itemize}  
Thus, only the TR+OR case, corresponding to Max-$2$OR / Max-$2$SAT, remained open. Our result closes this gap by showing that this problem also has a quantum space advantage in the streaming setting. 

To complement our results, we also prove a quantum streaming lower bound for Max-$k$SAT. This is achieved via a simple reduction from Max-Cut, leveraging its known quantum streaming space lower bound \cite{KP22}. The proof is deferred to Appendix~\ref{app:maxksat-lower-bound}.

\begin{theorem}[Quantum streaming lower bound for Max-$k$SAT] \label{thm:maxksat-quantum-lb}
Fix any $k\ge2$ and $\gamma>0$, any one-pass quantum streaming algorithm that achieves a $(0.75+\gamma)$-approximation for Max-$k$SAT requires $\Omega(n)$ qubits of space.
\end{theorem}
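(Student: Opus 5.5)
We reduce from Max-Cut, using the quantum streaming lower bound of \cite{KP22}. That result says that for every $\gamma'>0$, any one-pass quantum streaming algorithm that distinguishes Max-Cut instances with cut value at least $1-\gamma'$ from those with value at most $1/2+\gamma'$ needs $\Omega(n)$ qubits. We build a streaming-local map from Max-Cut instances to Max-$2$SAT instances, and hence to Max-$k$SAT instances, under which this gap becomes a gap of $1$ versus $3/4$. A $(0.75+\gamma)$-approximation would then separate the two cases and solve the hard distinguishing problem.

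\textbf{The reduction.} Each edge $(u,v)$ of the Max-Cut stream is replaced on the fly by the two clauses $(x_u\vee x_v)$ and $(\neg x_u\vee\neg x_v)$. This needs no memory, so space is preserved. Under any assignment, a cut edge ($x_u\ne x_v$) satisfies both clauses, and an uncut edge satisfies exactly one. Writing $m$ for the number of edges and $c$ for the fraction of edges cut, the fraction of satisfied clauses is
\[
\frac{cm\cdot 2+(1-c)m}{2m}=\frac{1+c}{2}.
\]
Maximizing over assignments gives $\OPT_{\mathrm{SAT}}=(1+\OPT_{\mathrm{Cut}})/2$. A YES instance (cut value $\ge 1-\gamma'$) therefore has SAT value at least $1-\gamma'/2$, and a NO instance (cut value $\le 1/2+\gamma'$) has SAT value at most $3/4+\gamma'/2$. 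For $k>2$ no change is needed if Max-$k$SAT allows clauses of width at most $k$. If exact width $k$ is required, we pad each clause with fresh literals in a way that keeps the $1$-versus-$3/4$ gap; this should be checked against the paper's definition.

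\textbf{Distinguishing.} Fix $\gamma'$ small relative to $\gamma$. Suppose the algorithm outputs an estimate $\hat v$ with $(0.75+\gamma)\OPT\le \hat v\le \OPT$. On YES instances, $\hat v\ge(0.75+\gamma)(1-\gamma'/2)$. On NO instances, $\hat v\le 3/4+\gamma'/2$. With $\gamma'\le\gamma$ the first bound is strictly larger than the second, so thresholding $\hat v$ decides the Max-Cut gap problem. Repeating the algorithm a constant number of times, if needed, handles the success probability, so the $\Omega(n)$ lower bound of \cite{KP22} carries over, since the number of variables is still $n$.

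\textbf{Main obstacle.} The difficult step is matching the exact form of the \cite{KP22} lower bound. We need to know whether it is stated for the $(1-\gamma',1/2+\gamma')$ gap or only for approximating the value, and whether its hard instances allow multi-edges or impose constraints on $m$. The affine relation $\OPT_{\mathrm{SAT}}=(1+\OPT_{\mathrm{Cut}})/2$ transfers any such form directly. Since an approximation algorithm's output also determines the value up to this affine map, either form of the bound suffices.
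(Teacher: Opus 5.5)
Your proof is correct and is essentially the paper's: it uses the same gadget $(x_u\vee x_v),(\neg x_u\vee\neg x_v)$, which gives $\OPT(\Phi_G)=m_E+\MaxCut(G)$, and the same thresholding against the \cite{KP22} Max-Cut gap bound (the paper uses the perfect-completeness form $\MaxCut(G)=m_E$ versus $\MaxCut(G)\le(1/2+\eps)m_E$ with $\eps<2\gamma$, which avoids your $\gamma'$ bookkeeping). Two small remarks: no padding is needed, since Max-$k$SAT here allows clauses of at most $k$ literals (and fresh-literal padding would in fact destroy the gap); also, $\gamma'\le\gamma$ gives strict separation only when $\gamma<1/4$, so take e.g.\ $\gamma'\le\gamma/2$.
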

Theorem \ref{thm:maxksat-quantum-lb} also holds for  Exact Max-$k$SAT. As a byproduct, it separates Max-$k$SAT from Exact Max-$k$SAT in terms of quantum space advantage.

\subsection{Related work}
\label{sec:prior-work}

\paragraph{Classical streaming bounds for Max-$k$SAT} The streaming approximability of CSPs has been studied for more than a decade; see the surveys and expository articles~\cite{Sud22, Ass23, Sin25}. Chou, Golovnev, and Velusamy \cite{CGV20} established a tight space complexity threshold for Max-$k$SAT around the approximation ratio of $\sqrt{2}/2 \approx 0.707$. Specifically, they demonstrated that a $(\sqrt{2}/2 - \varepsilon)$-approximation  streaming algorithm for Max-$k$SAT can be achieved using only $O(\varepsilon^{-2} \log n)$ space. On the negative side, they proved a matching lower bound: for any $k \ge 2$ and $\varepsilon > 0$, any streaming algorithm achieving a strictly better approximation ratio of $(\sqrt{2}/2 + \varepsilon)$ for Max-$k$-SAT requires $\Omega(\sqrt{n})$ space.

\paragraph{Quantum streaming algorithms for Max-CSPs} Quantum streaming algorithms for CSP-type problems were previously studied by Kallaugher and Parekh \cite{KP22}, who showed any quantum streaming algorithms need $\Omega(n)$ space to beat the trivial $1/2$ approximation for Max-Cut problem. In contrast, Kallaugher, Parekh, and Voronova recently proved an exponential quantum space advantage for Max-DiCut \cite{KPV23}: they gave a polylogarithmic-space quantum streaming algorithm with approximation ratio $0.4844$, while the classical lower bound of \cite{CGV20} rules out any ratio better than $4/9\approx 0.4444$ in $o(\sqrt n)$ space.  

\subsection{Technical overview}
Our proof extends the snapshot-to-pseudosnapshot framework of Kallaugher, Parekh, and Voronova \cite{KPV23} from Max-DiCut to Max-$k$SAT.  At a high level, we first reduce Max-$k$SAT to estimating a weighted snapshot, then replace this final-stream object by a one-pass pseudosnapshot, and finally estimate the pseudosnapshot using quantum sketches.

\paragraph{Step 1: From Max-$k$SAT to a weighted snapshot.}
The first step is to reduce Max-$k$SAT to estimating a weighted snapshot. The snapshot is designed for a fixed oblivious rounding rule: each variable is placed into a bucket according to its weighted signed bias, and the rounding probability of each literal is then determined only by this bucket and the literal sign.

Snapshot-based estimation was introduced for Max-DiCut by Saxena, Singer, Sudan, and Velusamy~\cite{saxena2023improved}, who asked whether the approach can be extended beyond Max-DiCut.  We answer this question for Max-$k$SAT.  Interestingly, combining our approach with the streaming algorithm of \cite{saxena2023improved} also yields a $0.7425$-approximation classical streaming algorithm for Max-2SAT in $\tilde{O}(\sqrt{n})$ space, thereby resolving an open problem from \cite{CGV20}. Since this paper focuses on quantum advantage, we omit the details of this classical variant.

An edge in Max-DiCut has only two ordered endpoints, whereas a SAT clause is characterized by both its length and the signs of its literals.  We therefore defined a \emph{weighted snapshot}. For each variable, we compute a weighted signed bias from its occurrences in clauses with length at most three; the weights make endpoints in shorter clauses count more. Each endpoint is then assigned a sign-bucket pair, consisting of its literal sign and the bias bucket of its variable, and the weighted snapshot $\Snap(\Phi)$ records the resulting  histograms over these pairs.

The goal of this step is to construct a linear score $L_{\le k}(\Snap(\Phi))$ such that
\[
   \rho\,\OPT(\Phi)\le L_{\le k}(\Snap(\Phi))\le \OPT(\Phi).
\]

Compared with \cite{KPV23}, the main barrier is that in Max-$k$SAT, clauses have different lengths. As the number of endpoints processed simultaneously increases, the operations and analysis associated with quantum algorithms become extremely complex and intractable. In order to bypass this technical  barrier, we handle clauses of length at least four separately. Under the certified rounding rule, every such clause is satisfied with probability at least the target ratio $\rho$. Thus these long clauses get the fixed contribution $\rho m_{\ge 4}$, and all nontrivial snapshot work is only for clauses of length at most three. In other words,
\[
    L_{\le k}(\Snap(\Phi))
    =
    L_{\le 3}(\mathrm{WSnap}(\Phi_{\le 3}))
    +
    \rho m_{\ge 4}.
\]

The main difficulty inside $L_{\le 3}(\mathrm{WSnap}(\Phi_{\le 3}))$ is the ternary case. The exact satisfaction probability of a ternary clause depends on three endpoints at once, but directly querying three endpoints would still make the quantum bookkeeping hard to control~\footnote{Our focus is on demonstrating a quantum advantage rather than  optimizing the ration $\rho=0.7172$; better score functions and more involved  quantum operations may further improve the approximation ratio.}. We therefore replace the exact ternary score by a pairwise-decomposable surrogate, namely a linear combination of one-endpoint and two-endpoint quantities. The unary and binary scores are kept as their exact independent-rounding satisfaction probabilities.

We certify that this surrogate is still good enough: the resulting short-clause score $L_{\le 3}(\Snap(\Phi))$ achieves ratio $\rho = 0.717275$. This is done by solving a finite linear program and verifying the resulting rational certificate exactly over $\mathbb{Q}$.

\paragraph{Step 2: The weighted snapshot and pseudosnapshot are close.}
The weighted snapshot uses final variable biases, which are only known after the stream ends.  As in~\cite{KPV23}, we replace it by a weighted \emph{pseudosnapshot}, whose endpoint sign-bucket pairs are assigned when a clause arrives so that  the it can be estimated in one pass.

The idea is to use prefix information at the time of arrival together with exact suffix information: the sketch encodes the prefix before the query, and once a query selects a endpoint, the remaining suffix counts can be computed exactly. 

For Max-$k$SAT, the data being tracked are different from Max-DiCut. We must track tagged total degrees and tagged positive counts, instead of only directed degree and out-degree quantities, and the same sign-bucket assignment rule  must handle unary, binary, and ternary clauses.  Our modification is to duplicate each endpoint from a short clause into multiple copies, called tagged units. If the clause length is $\ell$, the endpoint creates $w_\ell$ copies; this multiplicity records the clause length. Each copy also carries a positive/negative tag, recording the literal sign.  Thus every variable has a tagged total degree and a tagged positive count, which play the role of the degree and out-degree quantities in~\cite{KPV23}.  We then use the same high-degree/low-degree split: On high-degree scales, the prefix total degree is rounded to the lower end of its scale, and the positive prefix count is sampled randomly. On low-degree scales, both prefix quantities are kept exactly.

To compare these arrival-time sign-bucket pairs with a fixed bucket assignment, we add smoothing noise to the final biases and prove that the pseudosnapshot is $O(\eps m)$-close to the fixed-label snapshot in $\ell_1$ distance.  The fixed bucket assignment obtained this way may not satisfy the original bucket constraints exactly; they are only $\varepsilon$-admissible. We therefore prove a robust version of the finite certificate, which tolerates these approximate labels and loses only $O(\varepsilon m)$ in the snapshot score.

\paragraph{Step 3: Quantum estimation of the pseudosnapshot.}
The third step is to estimate the weighted pseudosnapshot in one pass. Kallaugher, Parekh, and Voronova~\cite{KPV23} estimate the Max-DiCut pseudosnapshot by running a hidden-matching-style sketch separately for each pair of head/tail degree ranges. Their sketch maintains four degree-test families of basis states, corresponding to lower and upper degree tests for the head endpoint and lower and upper degree tests for the tail endpoint; sampled out-degree information is encoded in separated counter blocks. Edge updates are implemented by swaps. After each edge measurement, a cleanup measurement handles boundary basis states so that the state keeps the invariant needed for the lower/upper inclusion-exclusion over the queried degree ranges.

We follow the same high-level  idea, but use a different quantum implementation. For the current clause, our sketch tests either one endpoint or a pair of endpoints. Each test asks whether certain threshold basis states are present in the current superposition. The updates are elementary shifts along threshold-state chains, and the measurements are only single-threshold or pair-threshold queries. Thus the coordinates of the weighted pseudosnapshot are estimated by the same basic operation: test whether the relevant threshold basis states are present when the clause arrives.

For each tested endpoint, the sketch needs two prefix numbers of the underlying variable: the tagged total prefix degree $d$ and a positive prefix count $c$. On high-degree scales, $c$ is the sampled positive prefix count; on low-degree scales, it is the exact positive prefix count. To implement the update by shifts, we encode these two numbers into one  counter
\[
    \Gamma(d,c) = d + B c,
\]
where the multiplier  $B$ is chosen randomly for the current degree range.  The price of this simple encoding is impersonation. An endpoint whose true prefix degree is outside the queried scale may still land on a queried level of $\Gamma$ and be counted by mistake. The random choice of $B$ makes these accidental hits rare. We charge impersonation and show that its total contribution is $O(\varepsilon m)$. This avoids the separate cleanup measurement used in~\cite{KPV23}.

Combining the first three steps gives a constant-success one-pass quantum algorithm for Max-$k$SAT. Finally, we amplify over independent public seeds and take a median, obtaining success probability $1-\delta$ while keeping polylogarithmic space.

\section{Preliminaries}\label{sec:preliminaries}

\begin{definition}
For an instance $\Phi$ of a Max-CSP($\Pi$) problem, we denote the number of constraints in $\Phi$ by $m=|\Phi|=\poly(n)$. We denote the set of Boolean variables
of $\Phi$ by $\mathcal{V}=\{v_1,\ldots,v_n\}$.
A constraint in $\Phi$ consists of a predicate from the constraint language $\Pi$ applied to a tuple of variables from $\mathcal{V}$.

For an assignment $\sigma:\mathcal{V}\to\{0,1\}$ of the variables of $\Phi$, we denote the number of constraints of $\Phi$ satisfied by $\sigma$ as $\Val_\Phi(\sigma)$. We denote the maximum number of simultaneously satisfiable constraints in $\Phi$ as $\OPT(\Phi)$:
\[
\OPT(\Phi) =\max_{\sigma}\Val_\Phi(\sigma).
\]

For $\alpha\in [0,1]$, an algorithm $\mathcal{A}$ is an $\alpha$-approximation to the Max-CSP($\Pi$) problem if on any input $\Phi$, $\mathcal{A}$ outputs $Z$, such that with probability $1-\delta$, it holds that
\[
\OPT(\Phi) \ge Z \ge \alpha\cdot\OPT(\Phi).
\]
For example, when $\alpha=1$, the algorithm solves Max-CSP($\Pi$) exactly with probability $1-\delta$.

Suppose each variable $v\in \mathcal{V}$ has been assigned a bucket index $i(v)\in\{0,\ldots,L-1\}$.  Given a rounding vector $r=(r_0,\ldots,r_{L-1})\in[0,1]^L$, let $\mathcal R_r(\Phi)$ be the product distribution over assignments defined by
\[
  \Pr[X_v=1]=r_{i(v)},
  \qquad
  \Pr[X_v=0]=1-r_{i(v)},
\]
independently for all $v\in \mathcal{V}$.  The independent rounding value of $\Phi$
with respect to $r$ is
\[
 \mathcal{E}_r(\Phi) =\E_{\sigma\sim\mathcal R_r(\Phi)}
  \left[\Val_\Phi(\sigma)\right].
\]
\end{definition}
As a well-known specific example, the Max-$k$SAT problem is a special case of Max-CSP where a constraint is a clause, and a clause in $\Phi$ is a disjunction (OR) of at most $k$ literals (where a literal is either a variable $x_i$ or its negation $\neg x_i$).

\paragraph{Quantum streaming for Max-CSP}
In this paper, we consider the Max-CSP($\Pi$) problem processed by a one-pass quantum streaming algorithm. The algorithm has sequential access to an input instance $\Phi$, which it processes left-to-right using a limited quantum workspace. 

First, the algorithm receives the number of variables $n$ and the stream length $m$ (equivalently, any constant-factor upper bound on $m$). It then initializes a quantum work register $W$ consisting of $s(n)$ qubits to a fixed state $\rho_0$, typically
$\rho_0=\ket{0^{s(n)}}\bra{0^{s(n)}}$.
    
The algorithm receives the constraints $C_1, \ldots, C_m$ in $\Phi$ one by one (in a possibly adversarial order). Let $\Sigma$ be the set of all possible constraints on $n$ variables. For each constraint $\sigma \in \Sigma$, the algorithm possesses a corresponding quantum channel $ \mathcal A_\sigma$ acting on $W$. As the stream is processed, the state of the work register evolves as:
    \[
        \rho_i =  \mathcal A_{C_i}(\rho_{i-1}), \qquad i=1,2,\ldots,m.
    \]
    
After the last update, the algorithm measures the work register $W$. Based on the measurement outcome, it outputs a value $Z$.

Throughout this framework, the space complexity of the algorithm is the number of qubits $s(n)$ stored in the work register between two consecutive constraint stream updates, where $s$ is typically a small function of $n$. We place no restriction on the running time or computational complexity of the quantum update channels. We also allow the algorithm read-only access to public random bits, and these random bits are not stored in the work register and are not charged to the streaming space.  

\section{Reducing \texorpdfstring{Max-$k$SAT}{Max-k-SAT} to weighted snapshot estimation} 
\label{subsec:maxksat-snapshot-reduction}
In this section, we reduce Max-$k$SAT to the problem of estimating a weighted snapshot. The snapshot is designed to support a fixed oblivious rounding rule: each variable is first assigned a type according to its signed bias, and the rounding probability of each literal is then determined only by this type and by the literal sign.

We first simplify clauses. First, we can delete the repeated literals. Clauses that contain both a literal and its negation are always satisfied, so they are removed from the stream and accounted for by an exact offset maintained separately. For the remaining simplified instance $\Phi$, let
\[
  \Phi_{\le3}:=\{C\in\Phi: |C|\in\{1,2,3\}\},
  \qquad
  \Phi_{\ge4}:=\{C\in\Phi: |C|\ge4\},
  \qquad
  m_{\ge4}:=|\Phi_{\ge4}|.
\]
Only the short-clause subinstance $\Phi_{\le3}$ is used to define these variable types and the snapshot score.  Clauses in $\Phi_{\ge4}$ do not enter the snapshot; since they have at least four literals, they are handled separately by a deterministic floor $\rho \cdot m_{\ge4}$, where $\rho$ is a constant determined later.

Before defining the "snapshot" of instances, we first define the bias of the variables in $\Phi_{\le3}$.
\begin{definition}
\label{def:maxksat-bias}
For $\ell\in\{1,2,3\}$, let $P_v^{(\ell)}$ and $N_v^{(\ell)}$ denote the numbers of positive and negative occurrences of $v$ in $\ell$-clauses of
$\Phi_{\le3}$. Set
\[
 p_v:=4P_v^{(1)}+2P_v^{(2)}+P_v^{(3)},\qquad q_v:=4N_v^{(1)}+2N_v^{(2)}+N_v^{(3)},\qquad d_v:=p_v+q_v.
\]
The bias of $v$ is defined as
\[
 b_v:= 
 \begin{cases} 
 (p_v-q_v)/d_v, & d_v>0,\\ 
 0, & d_v=0. 
 \end{cases}
\]
\end{definition}
Note that $b_v\in [-1,1]$. The optimized weights $(4,2,1)$  determine how strongly unary, binary, and ternary endpoints contribute to the signed type of a variable.

The next definition turns these signed types into a rounding rule.  
\begin{definition}[Short-clause rounding probability]
\label{def:maxksat-rounding-data}
Let $\mathcal I=\{I_0,\ldots, I_{L-1}\}$ be a partition of $[-1,1]$ into a constant number of intervals, written with deterministic tie-breaking as
\[
  I_i=[\ell_i,u_i)\quad (i<L-1),
  \qquad
  I_{L-1}=[\ell_{L-1},u_{L-1}].
\]
Let $r=(r_0,\ldots,r_{L-1})\in[0,1]^L$ be a vector of rounding probabilities. For a literal sign $s\in\sgnset$ and a bucket $i\in\{0,\ldots,L-1\}$, define
\[
  q_+(i):=r_i,\qquad q_-(i):=1-r_i.
\]
For a label $\alpha=(s,i)$, write $q_\alpha:=q_s(i)$.
\end{definition}
Once $[-1,1]$ is partitioned into buckets, a variable $v$ with $b_v\in I_i$ is rounded to true with probability $r_i$.  Thus, a positive literal is satisfied with probability $r_i$, while a negative literal is satisfied with probability $1-r_i$.  The snapshot will only remember the literal sign and the bucket index, not the exact value of $b_v$.

After the rounding probabilities are fixed, every endpoint of a short clause has a constant-size label $\alpha=(s, i)$: the literal sign $s$, together with the bucket $i$ of its variable.  The snapshot below records only the histograms of these labels that will be needed.  We only keep one-endpoint and two-endpoint projections, because these are the statistics that the quantum estimator can handle.

\begin{definition}[Weighted snapshot]
Let $i(v)$ be the unique bucket containing $v$'s bias $b_v$.  Order every clause canonically, say by increasing variable index.  Denote the variable on the $j$-th endpoint of clause $C$ by $v_{j}(C)$. Write the sign-bucket pair by 
\[
  \alpha_j(C)=(s_j,i(v_j(C)))\in\sgnset\times\{0,\ldots,L-1\},
\]
where $s_j$ is the literal sign of the $j$-th endpoint.  The weighted short snapshot $\Snap(\Phi_{\le 3})$ consists of:
\begin{itemize}
  \item one-endpoint coordinates
  \[
    U_{\ell,j,\alpha}(\Phi):=\#\{C\in\Phi: |C|=\ell,\ \alpha_j(C)=\alpha\},
    \qquad 1\le j\le \ell\le3;
  \]
  \item two-endpoint coordinates
  \[
    P_{\ell,j,t,\alpha,\beta}(\Phi)
    :=\#\{C\in\Phi: |C|=\ell,\ \alpha_j(C)=\alpha,\ \alpha_t(C)=\beta\},
    \qquad 1\le j<t\le \ell\le3;
  \]
  \item the ternary clause count $M_3(\Phi):=\#\{C\in\Phi: |C|=3\}$.
\end{itemize}
\end{definition}

The true satisfaction probability of a ternary clause, $1-\prod_{j=1}^3(1-q_{\alpha_j})$, is a three-variable function. If we used it directly, the snapshot would have to store a full three-dimensional histogram of ternary endpoint labels.  To avoid using ternary endpoint labels, we define a pairwise-decomposable score $H$ that relies only on a one-endpoint function,
\[
c_0\in\mathbb Q,\qquad u_j:\sgnset\times\{0,\ldots,L-1\}\to\mathbb Q \quad (j=1,2,3)
\]
and a two-endpoint function,
\[
  p_{jt}:\bigl(\sgnset\times\{0,\ldots,L-1\}\bigr)^2\to\mathbb Q
  \quad (1\le j<t\le3).
\]
\begin{definition}[Pairwise-decomposable function]
The function $H:\bigl(\sgnset\times\{0,\ldots,L-1\}\bigr)^3\to\mathbb R$ is defined by
\[
H(\alpha_1,\alpha_2,\alpha_3):=c_0+\sum_{j=1}^3u_j(\alpha_j)+\sum_{1\le j<t\le3}p_{jt}(\alpha_j,\alpha_t).
\]
\end{definition}
Thus, the total ternary contribution can be written using only $M_3$, the one-endpoint marginals $U_{3,j,\alpha}$, and two-endpoint marginals $P_{3,j,t,\alpha,\beta}$.  

We can now assign a numerical value to a weighted snapshot.  
\begin{definition}[Weighted snapshot score]
Fix the parameters $\Theta=(\mathcal I,r,c_0,\{u_j\}_{j=1,2,3},\{p_{jt}\}_{1\le j<t\le3})$. For a short instance $\Phi_{\le 3}$, its weighted snapshot score is
\[
\begin{aligned} 
L^{\Theta}_{\le3}(\Snap(\Phi_{\le 3})) &:= \sum_{\alpha} U_{1,1,\alpha}q_\alpha
+ \sum_{\alpha,\beta}P_{2,1,2,\alpha,\beta} \bigl(1-(1-q_\alpha)(1-q_\beta)\bigr)\\ 
&\quad+ c_0\,M_3 +\sum_{j=1}^3\sum_{\alpha}u_j(\alpha)U_{3,j,\alpha}+ \sum_{1\le j<t\le3}\sum_{\alpha,\beta} p_{jt}(\alpha,\beta)P_{3,j,t,\alpha,\beta}. 
\end{aligned}
\]
For a full simplified Max-$k$-SAT instance $\Phi$, define
\[
 L^{\Theta}_{\le k}(\Phi):=L^{\Theta}_{\le3}(\Snap(\Phi_{\le3}))+\rho\,m_{\ge4}.
\]
\end{definition}
The first line scores unary and binary clauses by their exact independent-rounding satisfaction probabilities.  The second line scores ternary clauses using the pairwise-decomposable function $H$, expanded in terms of the coordinates stored in $\Snap(\Phi_{\le 3})$.

\begin{lemma}[Finite short-clause certificate]
\label{lem:maxksat-finite-short-certificate}
There exists a tuple $\Theta^*=(\mathcal I^*,r^*,c^*,u^*,p^*)$ such that every short instance $\Phi_{\le 3}$ satisfies
\[
\rho\,\OPT(\Phi_{\leq 3}) \le L_{\le3}^{\Theta^*}(\Snap(\Phi_{\leq 3})) \leq \mathcal{E}_{r^*}(\Phi_{\le 3}).
\]
where $\rho=0.717275$ and $\mathcal{E}_{r^*}(\Phi_{\le 3})$ denotes the expected number of clauses satisfied by the independent randomized assignment induced by $r^*$.
\end{lemma}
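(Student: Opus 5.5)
The plan is to prove the two inequalities separately. Both reduce to finitely many linear constraints on the unknown parameters, which we then certify by a rational LP solution checked exactly over $\mathbb Q$.

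\textbf{Upper bound.} The inequality $L^{\Theta^*}_{\le3}\le\mathcal E_{r^*}(\Phi_{\le3})$ is clause-local. Unary and binary clauses are scored exactly by their independent-rounding satisfaction probabilities. For ternary clauses it suffices to impose the pointwise constraint
\[
H(\alpha_1,\alpha_2,\alpha_3)\le 1-\prod_{j=1}^3(1-q_{\alpha_j})
\qquad\text{for all }(\alpha_1,\alpha_2,\alpha_3)\in(\sgnset\times\{0,\ldots,L-1\})^3 .
\]
There are $(2L)^3$ such constraints. Once $\mathcal I^*,r^*$ are fixed they are linear in $(c_0,u,p)$. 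Summing over clauses and expanding $H$ through $M_3,U_{3,j,\alpha},P_{3,j,t,\alpha,\beta}$ gives the bound.

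\textbf{Lower bound, reduction to variable-local form.} For $\rho\,\OPT\le L$, fix an optimal assignment $\sigma$. We may flip variable signs so that $\sigma\equiv 1$: flipping $v$ swaps $p_v,q_v$ and negates $b_v$. We therefore choose the buckets symmetric about $0$ with $r_{\bar i}=1-r_i$, and require $u,p$ to be invariant under this symmetry. Then $\OPT$ is at most the number of clauses containing a positive literal.

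The main obstacle is to dominate a global quantity by a sum of per-variable terms. I would follow the standard CGV/KPV-style argument. Each clause's slack $L_C-\rho\ind[C\text{ sat by }\sigma]$ is split among its endpoints in proportion to the weights $4,2,1$, matching the bias weights. Summing by variable, the contribution of $v$ then depends on $(P_v^{(\ell)},N_v^{(\ell)})_\ell$, on $b_v$'s bucket, and on the buckets of co-occurring variables.

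The pairwise terms do not decouple, so I would add auxiliary dual variables (LP multipliers). These would transfer charge between endpoints according to the pair of labels. Each clause type (length, sign pattern, bucket tuple) is then made nonnegative after charging, and for each variable and bucket the charge is a linear function of $(p_v,q_v)$. That linear function must be nonnegative whenever $(p_v-q_v)/(p_v+q_v)\in I_i$, which is a finite set of linear checks at the two interval endpoints.

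\textbf{Certification.} Fix explicit $\mathcal I^*$ (roughly a dozen breakpoints) and $r^*$ by numerical search. Solve the resulting LP in $(c_0,u,p)$ and the charging multipliers, maximizing $\rho$. Then round the solution to rationals, shrinking $\rho$ slightly to $0.717275$ to absorb the rounding, and verify every constraint exactly.

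The hardest step is designing the charging scheme so that the lower bound becomes finitely checkable without losing too much ratio. If it fails, I would refine the buckets or enlarge the family of allowed charge transfers until the LP optimum exceeds $0.717275$.
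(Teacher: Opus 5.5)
This is essentially the paper's proof. The paper's typed atoms are your clause types, and its nonnegative multipliers $\lambda^{\rm lo}_{i,\tau},\lambda^{\rm up}_{i,\tau}$ on the aggregated bucket constraints amount to endpoint charges $w_\ell\,\phi(s_j,i_j,\tau_j)$ that depend only on the endpoint's own label. That is exactly what your condition ``linear in $(p_v,q_v)$ and nonnegative on the closed bucket'' allows, so the pair-dependent transfers are unnecessary. The ternary upper constraints and the exact rational verification are the same as the paper's.

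The one step that needs repair is the sign-flip reduction to $\sigma\equiv 1$. Buckets are recomputed from the bias under the half-open convention, so a variable whose bias lies exactly on a breakpoint is not sent to the mirror bucket after flipping. Hence $L$ is not literally flip-invariant. There are two fixes:
\begin{itemize}
\item flip each variable's label together with the variable, which your closed-interval checks tolerate; or
\item as the paper does, keep the optimal bit $\tau$ in each atom and index the multipliers by $(i,\tau)$, so no flipping is needed.
\end{itemize}
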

\begin{proof} 
\emph{Step 1: Finding conditions for upper bound:} Fix any parameter tuple $\Theta=(\mathcal I,r,c_0,u,p)$.
For unary and binary clauses, the score appearing in $L_{\le3}^{\Theta}(\Snap(\Phi_{\le 3}))$ is already the exact independent-rounding satisfaction probability. Therefore, the only extra condition needed for the upper bound concerns
ternary clauses.

For a ternary clause, if $\Theta$ satisfies
\[
H_\Theta(\alpha_1,\alpha_2,\alpha_3) \le 1-\prod_{j=1}^3(1-q_{\alpha_j}) \qquad \text{for every ternary label triple }(\alpha_1,\alpha_2,\alpha_3),
\]
then $L_{\le3}^{\Theta}(\Snap(\Phi_{\le 3}))$ is at most the expected number of satisfied clauses $\mathcal{E}_{r}(\Phi_{\le 3})$ under the independent randomized assignment induced by $\Theta$.

\emph{Step 2: Finding conditions for lower bound:}
Fix an optimal assignment $x^*$ for the short instance $\Phi$. We refine each clause into \emph{typed atoms}. A typed atom $c$ consists of a clause length
$\ell\in\{1,2,3\}$, endpoint signs
$(s_1,\ldots,s_\ell)\in\sgnset^\ell$, endpoint buckets
$(i_1,\ldots,i_\ell)\in\{0,\ldots,L-1\}^\ell$, and reference-assignment bits
$(\tau_1,\ldots,\tau_\ell)\in\bits^\ell$.  We write
$c=(\ell; s_1,\ldots,s_\ell; i_1,\ldots,i_\ell; \tau_1,\ldots,\tau_\ell)$. For $\alpha_j=(s_j,i_j)$, define the local snapshot score of $c$ by
\[
h_\Theta(c):=
\begin{cases}
q_{\alpha_1}, & \ell=1,\\[0.6ex]
1-(1-q_{\alpha_1})(1-q_{\alpha_2}), & \ell=2,\\[0.6ex]
H_\Theta(\alpha_1,\alpha_2,\alpha_3), & \ell=3.
\end{cases}
\]
Its optimal-assignment contribution is
\[
o(c):= \mathbf 1\!\left[ \bigvee_{j=1}^{\ell} \operatorname{Lit}(s_j,\tau_j)=1\right],
\]
where $\operatorname{Lit}(+,\tau)=\tau$, $\operatorname{Lit}(-,\tau)=1-\tau$.

For each typed atom $c$, let $N_c$ be the number of clauses of $\Phi$ whose length, endpoint signs, endpoint buckets, and endpoint bits under $x^*$ are described by $c$.  Then the typed atoms give the exact
decomposition
\[
  L_{\le3}^{\Theta}(\Snap(\Phi_{\le 3}))=\sum_c N_c h_\Theta(c),
  \qquad
  \OPT(\Phi_{\le 3})=\Val_\Phi(x^*)=\sum_c N_c o(c).
\]

Now fix a bucket $I_i=[\ell_i,u_i]$ and a bit $\tau\in\{0,1\}$. Let $D_{i,\tau} = \sum_{\substack{v:\ i(v)=i,x_v^*=\tau}} (p_v+q_v)$ be the total endpoint mass in bucket $i$ among variables whose optimal bit is $\tau$, and let $A_{i,\tau}= \sum_{\substack{v:\ i(v)=i,x_v^*=\tau}} (p_v-q_v).$ be the corresponding signed endpoint mass.  Endpoint masses use the same weights as the bias: $w_1=4, w_2=2,w_3=1$.

Writing $\chi(+)=1$ and $\chi(-)=-1$, an endpoint of an $\ell$-clause with sign $s_j$, bucket $i_j$, and optimal bit $\tau_j$ contributes $w_\ell$ to $D_{i_j,\tau_j}$ and $w_\ell\chi(s_j)$ to $A_{i_j,\tau_j}$.  The bucket-feasibility constraints are
\[
g^{\rm lo}_{i,\tau}:=\ell_iD_{i,\tau}-A_{i,\tau}\le 0,
  \qquad
  g^{\rm up}_{i,\tau}:=A_{i,\tau}-u_iD_{i,\tau}\le 0.
\]
These are the global constraints, saying that variables assigned to bucket $I_i$ must have average bias inside that interval.

The atom-level contributions to these constraints are defined by
\[
  g^{\rm lo}_{i,\tau}(c)
  :=
  \sum_{\substack{1\le j\le \ell\\ i_j=i,\ \tau_j=\tau}}
  w_\ell(\ell_i-\chi(s_j)),
  \qquad
  g^{\rm up}_{i,\tau}(c)
  :=
  \sum_{\substack{1\le j\le \ell\\ i_j=i,\ \tau_j=\tau}}
  w_\ell(\chi(s_j)-u_i).
\]
Thus, the global constraints also decompose exactly as
\[
  g^{\rm lo}_{i,\tau}(\Phi_{\le 3})=\sum_c N_c g^{\rm lo}_{i,\tau}(c),
  \qquad
  g^{\rm up}_{i,\tau}(\Phi_{\le 3})=\sum_c N_c g^{\rm up}_{i,\tau}(c).
\]

Suppose there exist nonnegative multipliers $\lambda^{\rm lo}_{i,\tau},\lambda^{\rm up}_{i,\tau}\ge 0$ such that every typed atom $c$ satisfies
\[
  h_\Theta(c)-\rho\,o(c)
  +\sum_{i,\tau}
  \left(
    \lambda^{\rm lo}_{i,\tau}g^{\rm lo}_{i,\tau}(c)
    +
    \lambda^{\rm up}_{i,\tau}g^{\rm up}_{i,\tau}(c)
  \right)
  \ge 0.
\]
Multiplying this inequality by $N_c$ and summing over all typed atoms gives
\[
  L_{\le3}^{\Theta}(\Snap(\Phi_{\le 3}))
  -\rho\,\OPT(\Phi_{\le 3})
  +
  \sum_{i,\tau}
  \left(
    \lambda^{\rm lo}_{i,\tau}g^{\rm lo}_{i,\tau}(\Phi_{\le 3})
    +
    \lambda^{\rm up}_{i,\tau}g^{\rm up}_{i,\tau}(\Phi_{\le 3})
  \right)
  \ge 0.
\]
Since $\Phi_{\le 3}$ is a valid instance, all bucket constraints are nonpositive, and since all multipliers are nonnegative, the correction term is nonpositive.
Therefore
\[
L_{\le3}^{\Theta}(\Snap(\Phi_{\le 3})) \ge \rho\,\OPT(\Phi_{\le 3}).
\]

So the lower bound reduces to finding a tuple $\Theta$, a ratio $\rho$, and nonnegative multipliers $\lambda$ satisfying the typed-atom inequalities.

\emph{Step 3: Solving and verifying a finite LP:} 
After fixing the certificate form, all unknowns are finite-dimensional:
\[
  r, c_0,\quad u_j(\alpha),\quad p_{jt}(\alpha,\beta),\quad
  \lambda^{\rm lo}_{i,\tau},\quad \lambda^{\rm up}_{i,\tau},\quad \rho.
\]
This makes the search space finite and manageable. We first fix a constant-size uniform bucket partition $\mathcal I$ of the bias domain $[-1,1]$ and the rounding vector $r$ by a separate finite numerical search. The choices of $I$ and $r$ are recorded in Appendix~\ref{app:exact-certificate}.

After this choice is fixed, all quantities $q_\alpha$, the unary/binary satisfaction probabilities, and the ternary upper bounds $1-\prod_j(1-q_{\alpha_j})$ are rational constants. The resulting linear program maximizes $\rho$ subject to two classes of constraints:

\begin{itemize}
    \item for every ternary endpoint-label triple,
    \[
      H_\Theta(\alpha_1,\alpha_2,\alpha_3)
      \le
      1-\prod_{j=1}^3(1-q_{\alpha_j});
    \]
    \item for every typed atom $c$,
    \[
      h_\Theta(c)-\rho\,o(c)
      +\sum_{i,\tau}
      \left(
        \lambda^{\rm lo}_{i,\tau}g^{\rm lo}_{i,\tau}(c)
        +
        \lambda^{\rm up}_{i,\tau}g^{\rm up}_{i,\tau}(c)
      \right)
      \ge 0.
    \]
\end{itemize}

Because the bucket set and the atom types are both finite, this is a finite LP. We use codes to solve the LP and verify the $\Theta^*$ that satisfies both conditions. The corresponding code is available in the repository: \url{https://github.com/Guangxu-Yang/maxksat-lp-certificate}.

We first solve the LP numerically, obtaining a floating-point candidate $(c,u,p,\lambda,\rho)$.  It is used only to find a good candidate certificate, not to prove the theorem. We then round all floating-point parameters to nearby rational values and run the exact verifier.  This script checks every ternary upper constraint and every typed-atom inequality exactly over $\mathbb Q$.  The exact rational certificate used here has $\rho=0.717275$.  The exact rational tuple that passes all checks is denoted by $\Theta^*$.  Therefore, $\Theta^*$ simultaneously satisfies the upper bound condition and the lower bound condition.

Applying Step~1 and Step~2 to this verified tuple $\Theta^*$ gives
\[
  \rho\,\OPT(\Phi_{\le 3})
  \le
  L_{\le3}^{\Theta^*}(\Snap(\Phi_{\le 3}))
  \le
  \mathcal{E}_{r^*}(\Phi_{\le 3}).
\]
Here $\mathcal{E}_{r^*}(\Phi_{\le 3})$ denotes the expected number of satisfied clauses under the independent randomized assignment induced by $r^*$.
\end{proof}

\begin{lemma}[Reduction to weighted snapshot estimation]
\label{lem:maxksat-weighted-snapshot-bound}
Fix the certificate $\Theta^*$ from Lemma~\ref{lem:maxksat-finite-short-certificate}, then every Max-$k$SAT instance $\Phi$ with $k\ge 2$ satisfies
\[
\rho \OPT(\Phi) \leq L_{\le k}^{\Theta^*}(\Phi)\le \OPT(\Phi).
\]
\end{lemma}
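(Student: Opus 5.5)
We split $\Phi$ into its short part $\Phi_{\le3}$ and its long part $\Phi_{\ge4}$. By definition,
\[
L_{\le k}^{\Theta^*}(\Phi)=L_{\le3}^{\Theta^*}(\Snap(\Phi_{\le3}))+\rho\,m_{\ge4}.
\]
We handle the two summands separately and then combine them. Throughout, $\Phi$ is the simplified instance. Tautological clauses are removed and handled by an exact offset, which enters both sides identically and only helps the ratio. Repeated literals are deleted, so every clause in $\Phi_{\ge4}$ has at least four distinct variables.

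\emph{Upper bound.} By Lemma~\ref{lem:maxksat-finite-short-certificate}, $L_{\le3}^{\Theta^*}(\Snap(\Phi_{\le3}))\le \mathcal E_{r^*}(\Phi_{\le3})$. Under the same product distribution $\mathcal R_{r^*}$, a long clause with $\ell\ge4$ distinct variables fails with probability $\prod_{j}(1-q_{\alpha_j})$. Here each bucket of a variable that never appears in a short clause is determined by $b_v=0$.

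The key fact to verify from the certificate data in Appendix~\ref{app:exact-certificate} is that every literal satisfaction probability obeys $q_\alpha\le 1-\eta$ for a constant $\eta$ with $1-\eta^4\ge\rho$. Equivalently, every $r^*_i$ lies in $[\eta',1-\eta']$ with $(1-\eta')^4\le 1-\rho\approx 0.2827$, i.e.\ $\min_\alpha q_\alpha\ge 1-0.2827^{1/4}\approx 0.271$. Checking this against the finite list $r^*$ is the main obstacle. The paper asserts that ``under the certified rounding rule, every such clause is satisfied with probability at least $\rho$,'' so I expect this to hold, and I would verify it directly from the recorded vector $r^*$. Given it, each long clause is satisfied with probability at least $1-(1-\min q)^4\ge\rho$. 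Hence
\[
L_{\le k}^{\Theta^*}(\Phi)\le \mathcal E_{r^*}(\Phi_{\le3})+\sum_{C\in\Phi_{\ge4}}\Pr[C\text{ sat}]=\mathcal E_{r^*}(\Phi).
\]
The right-hand side is an average of $\Val_\Phi(\sigma)$ over assignments, so it is at most $\OPT(\Phi)$.

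\emph{Lower bound.} Let $x^*$ be an optimal assignment for the whole instance $\Phi$. Then
\[
\OPT(\Phi)=\Val_{\Phi_{\le3}}(x^*)+\Val_{\Phi_{\ge4}}(x^*)\le \OPT(\Phi_{\le3})+m_{\ge4}.
\]
Lemma~\ref{lem:maxksat-finite-short-certificate} gives $L_{\le3}^{\Theta^*}(\Snap(\Phi_{\le3}))\ge\rho\,\OPT(\Phi_{\le3})$. Adding $\rho m_{\ge4}$ to both sides yields
\[
L_{\le k}^{\Theta^*}(\Phi)\ge\rho\bigl(\OPT(\Phi_{\le3})+m_{\ge4}\bigr)\ge\rho\,\OPT(\Phi).
\]
This direction is routine. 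The only subtlety is that $\Snap$ and the biases are computed from $\Phi_{\le3}$ alone, which matches the definitions used in Lemma~\ref{lem:maxksat-finite-short-certificate}.
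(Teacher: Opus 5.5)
Your proposal is correct and follows the paper's proof essentially step for step: the same split into $\Phi_{\le3}$ and $\Phi_{\ge4}$, Lemma~\ref{lem:maxksat-finite-short-certificate} for both directions on the short part, $\OPT(\Phi)\le\OPT(\Phi_{\le3})+m_{\ge4}$ for the lower bound, and a uniform lower bound on literal satisfaction probabilities to make $\rho\,m_{\ge4}$ a valid floor for the long clauses. The check you defer is exactly the one the paper performs using Appendix~\ref{app:exact-certificate} ($\min_i\min(r^*_i,1-r^*_i)=0.27083$, so $1-(0.72917)^4\approx0.71731>\rho$), but be careful on two points: the margin is tiny, since the true threshold is $1-(1-\rho)^{1/4}\approx0.27081$ and comparing $0.27083$ against your rounded $0.271$ would wrongly suggest failure; and your first formulation should read $1-q_\alpha\le\eta$ rather than $q_\alpha\le1-\eta$.
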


\begin{proof}
Recall that
\[
L_{\le k}^{\Theta^*}(\Phi)= L_{\le3}^{\Theta^*}(\Snap(\Phi_{\le3}))+\rho\,m_{\ge4}.
\]

\noindent\emph{Upper bound.} We first handle the long clauses. Under the rounding vector $r^*$, every literal fails with probability at most $0.72917$ \footnote{Under the certified rounding vector $r^*$, recorded in Appendix~\ref{app:exact-certificate}, every literal is satisfied with probability at least $0.27083$}. Therefore, every clause of length $d\ge4$ is satisfied with probability at least
\[
1-\prod_{j=1}^d(1-q_{\alpha_j}) \ge 1-(0.72917)^d \ge
1-(0.72917)^4 > 0.717275=\rho.
\]
Thus, the term $\rho \cdot m_{\ge4}$ is a valid lower floor for the long clauses, and is also at most their expected contribution under the same independent
rounding.

Apply the independent randomized assignment determined by $\Theta^*$.
Lemma~\ref{lem:maxksat-finite-short-certificate} gives
\[
L_{\le3}^{\Theta^*}(\Snap(\Phi_{\le3}))
\le
\mathcal{E}_{r^*}(\Phi_{\leq 3}).
\]
The long-clause estimate above shows that $\rho\cdot m_{\ge4}$ is at most the
expected number of satisfied long clauses. Therefore,
\[
L_{\le k}^{\Theta^*}(\Phi) \le \mathcal{E}_{r^*}(\Phi)  \le \OPT(\Phi).
\]
\smallskip
\noindent\emph{Lower bound.}
Let $x^*$ be an optimal assignment for $\Phi$. Let $S^*$ be the number of
short clauses satisfied by $x^*$, and let $G^*$ be the number of long
clauses satisfied by $x^*$. Then
\[
  \OPT(\Phi)=S^*+G^*.
\]
Since the restriction of $x^*$ to $\Phi_{\le3}$ is feasible for the short subinstance,$\OPT(\Phi_{\le3})\ge S^*$. Hence,  Lemma~\ref{lem:maxksat-finite-short-certificate} gives
\[
L_{\le3}^{\Theta^*}(\Snap(\Phi_{\le3})) \ge
\rho \OPT(\Phi_{\le3}) \ge \rho S^*.
\]
Also $G^*\le m_{\ge4}$, so $\rho \cdot m_{\ge4}\ge \rho G^*$.

Adding the last two inequalities yields
\[
L_{\le k}^{\Theta^*}(\Phi) = L_{\le3}^{\Theta^*}(\Snap(\Phi_{\le3})) + \rho\cdot m_{\ge4}
\ge \rho\cdot (S^*+G^*) = \rho \cdot \OPT(\Phi).
\]
\end{proof}

\section{Closeness of the Weighted Snapshot and Pseudosnapshot}\label{subsec:maxksat-pseudosnapshot-closeness}
In section~\ref{subsec:maxksat-snapshot-reduction}, the Max-$k$-SAT value was reduced to the weighted  snapshot score $L_{\le3}$ together with the deterministic long-clause term $\rho \cdot m_{\ge4}$.  The obstacle is that the true short snapshot $\Snap$ depends on final variable biases, which are not available during the stream. 

To obtain a one-pass streaming algorithm, we replace the weighted snapshot by a weighted pseudosnapshot that can be estimated in the streaming setting. Throughout this subsection, we fix the optimized tuple $\Theta^*$ and the ratio $\rho$ from Lemma~\ref{lem:maxksat-finite-short-certificate} and Lemma~\ref{lem:maxksat-weighted-snapshot-bound}.  We suppress the superscript $\Theta^*$ from $L_{\le3}$.

\begin{definition}[Tagged unit] \label{def:taggedunit}
For a short clause $C$ with $\ell=|C|\le 3$, its $j$-th endpoint $(C,j)$ creates the tagged units
\[
    (C,j,h),\qquad h\in [w_\ell],
\]
where $w_1=4$, $w_2=2$, and $w_3=1$. The tagged unit $(C,j,h)$ is associated to the variable $v_j(C)$, and is positive iff the $j$-th literal of $C$ is positive. A tagged unit is simply a copy of the $j$-th endpoint with index $h\in [w_{\ell}]$. Unless necessary, we don't distinguish between the endpoint and its tagged units.
\end{definition}

We define the degree of a variable $v_j (C)$ as the total number of occurrences of its associated tagged units. We then partition the degree range into the following scales.
Fix $\eps\in(0,1/100)$, and define geometric scales for 
\[
  D_0:=0,\qquad D_a:=\lfloor(1+\eps^3)^a\rfloor \quad (a=1,\ldots,A),
\]
where $A$ is the smallest interger that satisfies   $D_A\ge 4m$. In particular $A=O_{\eps}(\log m)$. When a positive  tagged unit arrives in the streaming, our algorithm employs a hash function to probabilistically decide whether it should be counted. This selective counting strategy is designed to save memory space.

\begin{definition}[Hash function] \label{def:hash}
    For each scale $a$ with $2D_a\ge\kappa$, let $f_a$ be a fully independent hash function on positive tagged units with selection probability $\kappa/(2D_a)$, where $\kappa=\poly(1/\eps)$ is a sufficiently large constant. Namely, for any positive tagged unit $(C,j,r)$, \[
f_a(C,j,r) = \begin{cases} 
1  \;\;  \text{ with probability } \frac{\kappa}{2D_a}.\\ 
0\;\; \text{ with probability } 1-\frac{\kappa}{2D_a}.
\end{cases} 
\]
When $f_a =1$, we say $(C,j,r)$ is selected by $f_a$. Otherwise we say it's not selected.
\end{definition}

When $2D_a<\kappa$, the cost of counting the exact degree is cheap, so no hash function is needed for these scales. Let $g: V\to[-\eps,\eps]$ be independent uniform smoothing noise. We define the weighted pseudobias as follows.

\begin{definition}[Weighted pseudobias]
\label{def:maxksat-pseudobias}
For a short clause $C$ incident to a variable $v$, let $d_v^{\le C},p_v^{\le C}$ and $d_v^{>C},p_v^{>C}$
denote the tagged total and positive counts up to and including $C$, and strictly after $C$, respectively. Empty scales with $D_a=D_{a+1}$ are ignored. Let $a$ be the smallest nonempty scale index satisfying
\[
  D_a<d_v^{\le C}\le D_{a+1}.
\]
If $2D_a<\kappa$, set
\[
  \widetilde d_v^{\le C}:=d_v^{\le C},
  \qquad
  \widetilde p_v^{\le C}:=p_v^{\le C}.
\]
If $2D_a\ge\kappa$, set $\widetilde d_v^{\le C}:=D_a$, let $s_v^{\le C}$ be the number of positive tagged units in the prefix selected by $f_a$, and set
\[
  \widetilde p_v^{\le C}:=\frac{2D_a}{\kappa}\,\min\{s_v^{\le C},\kappa\}.
\]

The weighted pseudobias at the arrival of $C$ is
\[
  \widetilde b_v^{\,C}
  :=
  \clip_{[-1,1]}
  \left(
    \frac{2(\widetilde p_v^{\le C}+p_v^{>C})}
         {\widetilde d_v^{\le C}+d_v^{>C}}
    -1+g(v)
  \right).
\]
\end{definition}

We now use these arrival-time pseudobiases to assign the same sign-bucket labels that were used in the weighted snapshot, but separately for each endpoint at the moment its clause arrives.
\begin{definition}[Weighted pseudosnapshot]
\label{def:maxksat-pseudosnapshot}
For a clause $C$ of length $\ell$, write $v_j(C)$ and $s_j(C)$ for the variable and literal sign of its $j$-th endpoint.  Let $\widetilde i_j(C)$ be the unique bucket index satisfying $\widetilde b_{v_j(C)}^{\,C}\in I_{\widetilde i_j(C)}$, using the same half-open bucket convention as Definition~\ref{def:maxksat-rounding-data}, and set
\[
  \lambda_j(C):=\bigl(s_j(C),\widetilde i_j(C)\bigr)
  \in\sgnset\times\{0,\ldots,L-1\}.
\]
The weighted pseudosnapshot $\PsSnap(\Phi_{\le3})$ is a  vector with the following coordinates:
\[
  U_{\ell,j,\alpha}
  :=
  \#\{C\in\Phi_{\le3}: |C|=\ell,\ \lambda_j(C)=\alpha\},
  \qquad 1\le j\le\ell\le3,
\]
\[
\begin{aligned}
  P_{\ell,j,t,\alpha,\beta}
  &:=
  \#\{C\in\Phi_{\le3}: |C|=\ell,\ 
  (\lambda_j(C),\lambda_t(C))=(\alpha,\beta)\}, \;\;\;\;\; 1\le j<t\le\ell\le3,
\end{aligned}
\]
and
\[
  M_3
  :=
  \#\{C\in\Phi_{\le3}: |C|=3\}.
\]
\end{definition}
Thus $\PsSnap(\Phi_{\le3})$ has  the same coordinate set as $\Snap(\Phi_{\le3})$, but endpoint labels are assigned using the arrival-time pseudobias $\widetilde b_v^{\,C}$ rather than the final bias $b_v$.

The pseudosnapshot still uses labels that may vary with the arrival time of a clause.  For the analysis, it is useful to compare it with a simpler object in which every variable has one fixed label throughout the whole stream.  We obtain this fixed label by applying the same smoothing noise to the final bias $b_v$.
\begin{definition}[Smoothed labels and fixed-label snapshot]
\label{def:maxksat-fixed-label-snapshot}
For each variable $v$, define $\bar b_v:=\clip_{[-1,1]}(b_v+g(v))$. Let $\beta_v$ be the unique bucket index such that $\bar b_v\in I_{\beta_v}$, using the half-open bucket convention from Definition~\ref{def:maxksat-rounding-data}. The fixed-label short snapshot $\Snap_{\beta}(\Phi_{\le3})$ is defined using the same coordinates as $\Snap(\Phi_{\le3})$, except that every endpoint of variable $v$ is assigned the bucket label $\beta_v$.
\end{definition}

Next, we show that with high probability over the public randomness, the pseudosnapshot is close to a fixed-label snapshot.  The fixed labels need not be the true final buckets, but they are close enough: if a variable is labeled by bucket $I_i$, then its true bias lies within an $\eps$-neighborhood of $I_i$.
\begin{lemma}[Short-clause pseudosnapshot stability]
\label{lem:maxksat-short-stability}
With probability at least $1-O(\eps)-e^{-\Omega(\eps^6\kappa)}$, the label map $\beta$ from the smoothed labels construction satisfies
\[
  \norm{
    \PsSnap(\Phi_{\le3})
    -
    \Snap_{\beta}(\Phi_{\le3})
  }_1
  \le C_{\mathrm{st}}\eps m
\]
for an absolute constant $C_{\mathrm{st}}$. Moreover, if $I_i=[\ell_i,u_i]$, then $\beta$ is $\eps$-admissible in the sense that whenever $\beta_v=i$, $b_v\in[\ell_i-\eps,u_i+\eps]$.
\end{lemma}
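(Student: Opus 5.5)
The $\eps$-admissibility claim is deterministic and I would dispose of it first. If $\beta_v=i$, then $\bar b_v=\clip_{[-1,1]}(b_v+g(v))\in I_i$ with $|g(v)|\le\eps$. When the clip is inactive, $b_v=\bar b_v-g(v)\in[\ell_i-\eps,u_i+\eps]$. When the clip is active, say $\bar b_v=1$, we have $b_v+g(v)\ge1$, so $b_v\ge 1-\eps\ge u_i-\eps$; since $b_v\le1=u_i$, the claim holds, and the case $\bar b_v=-1$ is symmetric.

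For the $\ell_1$ bound, every coordinate of $\PsSnap$ and $\Snap_\beta$ counts clauses by at most two endpoint labels. A clause whose pseudolabels $\lambda_j(C)$ all equal $(s_j,\beta_{v_j(C)})$ contributes identically to both vectors. Each remaining clause changes only $O(1)$ coordinates, each by one. It therefore suffices to show
\[
\sum_{v}\#\{\text{short endpoints }(C,j)\text{ at }v:\ \widetilde i_j(C)\ne\beta_v\}\le C\eps m .
\]
I would bound this total in expectation by $O(\eps^2 m)$ and apply Markov's inequality. This costs the $O(\eps)$ failure probability, and the sampling tail accounts for the $e^{-\Omega(\eps^6\kappa)}$ term. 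Because $d_v$ weights endpoints by $w_\ell\ge1$, it is enough to count tagged units, and $\sum_v d_v\le 12m$.

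Fix a variable $v$ and an endpoint at clause $C$. Write the smoothed pseudobias argument as $x_C+g(v)$ and the final one as $b_v+g(v)$, where $x_C=2(\widetilde p^{\le C}+p^{>C})/(\widetilde d^{\le C}+d^{>C})-1$ and $b_v=2p_v/d_v-1$. The labels agree unless some bucket boundary lies between $x_C+g(v)$ and $b_v+g(v)$. Since $g(v)$ is uniform on an interval of length $2\eps$ and there are $L=O(1)$ boundaries, this happens with probability $O(|x_C-b_v|/\eps)$, and always with probability at most $1$. The core estimate is therefore $|x_C-b_v|\le O(\eps^2)$ for most tagged units, splitting by scale:
\begin{itemize}
\item \emph{Low scales ($2D_a<\kappa$).} The prefix counts are exact, so $x_C=b_v$ and there is no error.
\item \emph{High scales, degree rounding.} Replacing $d^{\le C}$ by $D_a$ changes it by a factor of at most $(1+\eps^3)$, which perturbs the ratio by $O(\eps^3)$.
\item \emph{High scales, sampling.} The count $s^{\le C}$ is a sum of independent Bernoulli variables with mean $\kappa p^{\le C}/(2D_a)\le\kappa$. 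A Chernoff bound gives $|\widetilde p^{\le C}-p^{\le C}|\le\eps^3 D_a$ except with probability $e^{-\Omega(\eps^6\kappa)}$, and the truncation at $\kappa$ is harmless because $p^{\le C}\le d^{\le C}\le D_{a+1}\le 2D_a$.
\end{itemize}
The prefix errors are relative to $D_a\le d^{\le C}\le d_v$, so $|x_C-b_v|=O(\eps^3)$. Hence each unit mislabels with probability $O(\eps^2)$ plus the failure probability. The expected mislabeled mass is then $O(\eps^2 m)+e^{-\Omega(\eps^6\kappa)}m$, and Markov's inequality yields the $C_{\mathrm{st}}\eps m$ bound.

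The main obstacle is the dependence across the stream. The hash samples $f_a$ are shared across clauses of the same variable, so the Chernoff events for different prefixes are correlated. I would avoid needing a union bound by bounding only the expectation of the mislabeled mass, using linearity, and applying the per-endpoint tail bound inside it. A second subtlety is the transition at scale boundaries, where $d^{\le C}$ jumps between scales; the degree rounding is still within a factor $(1+\eps^3)$, so no special handling should be needed. I would also check that $g(v)$ is independent of the hash functions, so that conditioning on the $f_a$ leaves $g$ uniform when computing the boundary-crossing probability.
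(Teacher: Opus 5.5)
Your proof is correct, but it is organized differently from the paper's. The paper builds a per-variable good event. It discards the first $\lfloor\eps d_v\rfloor$ tagged units of each $v$, so that the remaining prefixes visit only $O(\eps^{-3}\log\frac1\eps)$ geometric intervals. It then proves concentration of the sampled count on all of those intervals at once: Bernstein at the entry of each interval, Chernoff for the units added inside it, then a union bound. A unit is declared bad if it was discarded, its variable fails concentration, its endpoint overflows, or $b_v+g(v)$ is within $c\eps^3$ of a bucket boundary; each class is bounded separately and clauses with bad units are deleted.

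You instead work per tagged unit. You condition on the hashes, integrate out $g(v)$ to get mislabel probability $O(|x_C-b_v|/\eps)$, use one per-endpoint Chernoff bound, and sum by linearity of expectation. This makes the correlations through the shared $f_a$ and the shared $g(v)$ harmless, so the discard and the union bound over intervals disappear. Your observation that capping at $2D_a\ge p_v^{\le C}$ can only move $\widetilde p_v^{\le C}$ toward $p_v^{\le C}$ also removes the separate overflow class; in fact overflow cannot occur on the concentration event at all. The paper's route gives a structural statement: every surviving endpoint of $v$ carries exactly the label $\beta_v$. Yours gives only an expected count of mislabeled units, but that is all the $\ell_1$ bound needs.

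Two small things to tidy. First, because of the floors, $D_{a+1}/D_a<(1+\eps^3)(1+1/D_a)$, so the degree rounding is a factor $1+O(\eps^3)$ only after using $D_a\ge\kappa/2\gtrsim\eps^{-3}$. Second, Markov turns the tail term into $e^{-\Omega(\eps^6\kappa)}/\eps$, which must be absorbed by taking $\kappa=\poly(1/\eps)$ large enough; the paper makes the same adjustment. Your handling of clipping in the admissibility claim is more careful than the paper's.
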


\begin{proof}
Fix a variable $v$.  An endpoint of a short clause $C$ incident to $v$ uses scale $a=a(C,v)$ if $D_a<d_v^{\le C}\le D_{a+1}$. Call this endpoint \emph{overflowing} if $2D_a\ge\kappa$ and $s_v^{\le C}\ge\kappa$.

We first show that, except for a small initial segment and possible overflow endpoints, the pseudobias is close to the final smoothed bias.  If $d_v<1/\eps$, then all prefix counts are exact once $\kappa>2/\eps$.  If $d_v\ge 1/\eps$, discard the first $\lfloor\eps d_v\rfloor$ tagged units of $v$; their total mass is at most $\eps d_v$.

After this discard, the prefix degree of $v$ visits only $O\!\left(\eps^{-3}\log\frac1\eps\right)$ nonempty geometric intervals.  Fix one sampled interval $(D_a,D_{a+1}]$, and set $\theta_a=\kappa/(2D_a)$.  At the first tagged unit in the interval, let $p$ and $s$ be the true and sampled positive prefix counts.  Then $\mathbb E[s]\le\theta_aD_{a+1}\le(1+\eps^3)\kappa/2$, so Bernstein's inequality gives
\[
  |s-\theta_a p|\le c_1\eps^3\kappa
\]
except with probability $e^{-\Omega(\eps^6\kappa)}$.  During the rest of the interval, the true degree changes by only
\[
  D_{a+1}-D_a=O(\eps^3D_a),
\]
So a Chernoff bound shows that the number of additionally sampled positive tagged units is $O(\eps^3\kappa)$, except with probability $e^{-\Omega(\eps^3\kappa)}$.  Hence, throughout this interval,
\[
  \left|s_v^{\le C}-\theta_a p_v^{\le C}\right|
  \le c_2\eps^3\kappa
\]
for every non-overflow endpoint $C$.

Scaling by $2D_a/\kappa$ and using $D_a<d_v^{\le C}\le D_{a+1}$ gives
\begin{equation}
\label{eq:maxksat-short-stability-positive-prefix}
  \left|\widetilde p_v^{\le C}-p_v^{\le C}\right|
  \le c_3\eps^3 d_v^{\le C}.
\end{equation}
Also, in sampled intervals,
\begin{equation}
\label{eq:maxksat-short-stability-degree-prefix}
  \left|\widetilde d_v^{\le C}-d_v^{\le C}\right|
  =
  d_v^{\le C}-D_a
  \le D_{a+1}-D_a
  \le c_4\eps^3 d_v^{\le C}.
\end{equation}
In low-degree intervals, both prefix quantities are exact, so the same bounds hold with left-hand side $0$.  A union bound over the intervals visited by $v$, with the polynomial factor in $1/\eps$ absorbed into the exponent by choosing the constant in $\kappa=\poly(1/\eps)$ large enough, shows that \eqref{eq:maxksat-short-stability-positive-prefix} and \eqref{eq:maxksat-short-stability-degree-prefix} hold for every non-overflow endpoint of $v$ after the discarded prefix, except with probability $e^{-\Omega(\eps^6\kappa)}$.

On this concentration event, write
\[
  p_v=p_v^{\le C}+p_v^{>C},
  \qquad
  d_v=d_v^{\le C}+d_v^{>C}.
\]
Since $0\le p_v\le d_v$, the two prefix estimates \eqref{eq:maxksat-short-stability-positive-prefix} and \eqref{eq:maxksat-short-stability-degree-prefix}  imply
\[
\begin{aligned}
&\left|
  \frac{2(\widetilde p_v^{\le C}+p_v^{>C})}
       {\widetilde d_v^{\le C}+d_v^{>C}}
  -
  \frac{2p_v}{d_v}
 \right| \le
  O\!\left(\frac{\eps^3 d_v^{\le C}}{d_v}\right)
  =
  O(\eps^3).
\end{aligned}
\]
The clipping map is $1$-Lipschitz.  Therefore
\[
  \left|\widetilde b_v^{\,C}-\bar b_v\right|=O(\eps^3),
  \qquad
  \bar b_v:=\clip_{[-1,1]}(b_v+g(v)),
\]
for every such endpoint.

This bias estimate implies label agreement unless the smoothed bias lies very close to a bucket boundary.  If $b_v+g(v)$ is farther than $c\eps^3$ from every bucket boundary, then every endpoint satisfying the concentration event receives the same bucket in $\PsSnap(\Phi_{\le3})$ as in $\Snap_\beta(\Phi_{\le3})$.  Since $g(v)$ is uniform on $[-\eps,\eps]$ and there are only constantly many bucket boundaries,
\[
  \Pr\bigl[\mathrm{dist}(b_v+g(v),\partial\mathcal I)\le c\eps^3\bigr]
  =O(\eps^2).
\]

It remains to show that the exceptional endpoints have small total mass.  First consider overflow.  If $2D_a<\kappa$, overflow is impossible.  If $2D_a\ge\kappa$, then
\[
  s_v^{\le C}\sim
  \operatorname{Bin}\!\left(p_v^{\le C},\frac{\kappa}{2D_a}\right),
  \qquad
  \mathbb{E}[s_v^{\le C}]
  \le \frac{\kappa D_{a+1}}{2D_a}
  \le 0.6\kappa
\]
for $\eps<1/100$.  Thus
\[
  \Pr[s_v^{\le C}\ge\kappa]\le e^{-\Omega(\kappa)}.
\]
Let $W_{\mathrm{of}}$ be the total tagged mass of overflowing endpoints.  Since $\sum_vd_v=4m_1+4m_2+3m_3\le4m$,
\[
  \mathbb{E}[W_{\mathrm{of}}]\le e^{-\Omega(\kappa)}m.
\]
Increasing the constant in $\kappa=\poly(1/\eps)$ if necessary, we may assume $\mathbb{E}[W_{\mathrm{of}}]\le \eps^2m$.  Markov's inequality gives
\begin{equation}
\label{eq:maxksat-short-stability-overflow-weight}
    \Pr[W_{\mathrm{of}}>\eps m]\le \eps.
\end{equation}

Call a tagged unit bad if it is in a discarded initial segment, its variable fails the concentration event, its endpoint overflows, or $b_v+g(v)$ is within $c\eps^3$ of a bucket boundary.  These four classes have small total mass.  The discarded initial segments have mass at most
\[
  \eps\sum_v d_v\le 4\eps m.
\]
The concentration-failure class has an expected mass at most $e^{-\Omega(\eps^6\kappa)}\sum_v d_v$ and hence has mass $O(\eps m)$ with failure probability $e^{-\Omega(\eps^6\kappa)}$, after the same adjustment of the constant in $\kappa$.  The boundary class has expected mass $O(\eps^2)\sum_v d_v$, and is $O(\eps m)$ with probability $1-O(\eps)$.  The overflow class is controlled by \eqref{eq:maxksat-short-stability-overflow-weight}. Thus, the total bad tagged mass is $O(\eps m)$ with probability
$1-O(\eps)-e^{-\Omega(\eps^6\kappa)}$.

Now delete, only for the analysis, every short clause containing a bad tagged unit.  Each short clause contains at most four tagged units, so the number of deleted clauses is $O(\eps m)$.  On every remaining endpoint, the label used by $\PsSnap(\Phi_{\le3})$ is exactly the bucket of $\bar b_v$, hence it is the same label used by $\Snap_{\beta}(\Phi_{\le3})$.  Since each short clause contributes to only constantly many snapshot coordinates, the deleted clauses contribute only $O(\eps m)$ to the $\ell_1$ distance:
\[
  \norm{
    \PsSnap(\Phi_{\le3})
    -
    \Snap_{\beta}(\Phi_{\le3})
  }_1
  \le C_{\mathrm{st}}\eps m
\]
for a constant $C_{\mathrm{st}}$.

Finally, if $\beta_v=i$, then $\bar b_v\in I_i=[\ell_i,u_i]$.  Since $|g(v)|\le\eps$, we obtain $b_v\in[\ell_i-\eps,u_i+\eps]$.
Hence $\beta$ is $\eps$-admissible.
\end{proof}

Lemma~\ref{lem:maxksat-short-stability} reduces the analysis of the pseudosnapshot to the analysis of a fixed-label snapshot whose labels are only approximately feasible.  We therefore need a robust version of the finite certificate from Lemma~\ref{lem:maxksat-finite-short-certificate}: the certificate should still give a good lower bound when the bucket constraints hold with a small additive slack.  
\begin{lemma}[Approximate-feasibility short-clause certificate]\label{lem:maxksat-approx-short-certificate}
Let $\Psi=\Phi_{\le 3}$. Let $\beta$ be a $\xi$-admissible label map for its variables and $\mathcal D_{\beta}$ be the independent assignment that sets $x_v=1$ with probability $r^*_{\beta_v}$.  Then there exists a constant $C_{\mathrm{cert}}$, depending only on the verified finite certificate, such that for every assignment $x$,
\[
\rho \Val_\Psi(x)-C_{\mathrm{cert}}\xi |\Psi|
\le L_{\le3}(\Snap_{\beta}(\Psi))
\le
\mathbb E_{X\sim\mathcal D_\beta}[\Val_\Psi(X)].
\]
\end{lemma}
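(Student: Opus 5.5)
The plan is to rerun the two-sided argument of Lemma~\ref{lem:maxksat-finite-short-certificate}, with the true bucket map $i(\cdot)$ replaced by the fixed label map $\beta$. The upper bound and the atom decomposition do not depend on which labels are used; only the final feasibility step changes, and there $\xi$-admissibility costs an additive $O(\xi)$ per unit of endpoint mass.

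\emph{Upper bound.} The coordinates of $\Snap_\beta(\Psi)$ are built from the labels $(s_j,\beta_{v_j(C)})$. Under $\mathcal D_\beta$, the $j$-th literal of $C$ is satisfied with probability exactly $q_{(s_j,\beta_{v_j(C)})}$, independently across distinct variables; clauses were simplified so their variables are distinct. Hence for unary and binary clauses the score in $L_{\le3}$ equals the exact satisfaction probability under $\mathcal D_\beta$. For ternary clauses, the verified constraint $H_{\Theta^*}(\alpha_1,\alpha_2,\alpha_3)\le 1-\prod_j(1-q_{\alpha_j})$ holds for \emph{every} label triple, so the ternary score is at most the satisfaction probability. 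Summing over clauses gives $L_{\le3}(\Snap_\beta(\Psi))\le \E_{X\sim\mathcal D_\beta}[\Val_\Psi(X)]$. Step~1 used no property of the labels beyond this, so it transfers verbatim.

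\emph{Lower bound.} Fix an arbitrary assignment $x$ in place of $x^*$; Step~2 never used optimality. Form typed atoms using buckets $\beta$ and reference bits $x$, with counts $N_c$. The exact decompositions still hold:
\[
L_{\le3}(\Snap_\beta(\Psi))=\sum_c N_c h(c),\qquad \Val_\Psi(x)=\sum_c N_c o(c),\qquad g^{\rm lo/up}_{i,\tau}=\sum_c N_c g^{\rm lo/up}_{i,\tau}(c).
\]
Here $D_{i,\tau}$ and $A_{i,\tau}$ are now defined by summing over the $v$ with $\beta_v=i$ and $x_v=\tau$. The verified atom inequalities are purely local, so summing them gives
\[
L_{\le3}-\rho\Val_\Psi(x)+\sum_{i,\tau}\bigl(\lambda^{\rm lo}_{i,\tau}g^{\rm lo}_{i,\tau}+\lambda^{\rm up}_{i,\tau}g^{\rm up}_{i,\tau}\bigr)\ge0.
\]
The only change is that the global constraints are no longer nonpositive. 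Instead, $\xi$-admissibility gives $b_v\ge \ell_i-\xi$ for each $v$ with $\beta_v=i$, that is, $p_v-q_v\ge(\ell_i-\xi)d_v$. Summing over these variables yields $g^{\rm lo}_{i,\tau}=\ell_iD_{i,\tau}-A_{i,\tau}\le \xi D_{i,\tau}$, and symmetrically $g^{\rm up}_{i,\tau}\le\xi D_{i,\tau}$.

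Variables with $d_v=0$ contribute $0$ to every quantity and can be ignored. The endpoint masses satisfy $\sum_{i,\tau}D_{i,\tau}=\sum_v d_v\le 4|\Psi|$. It follows that the correction term is at most $2\Lambda\xi\cdot4|\Psi|$, where $\Lambda:=\max_{i,\tau}\max(\lambda^{\rm lo}_{i,\tau},\lambda^{\rm up}_{i,\tau})$. Setting $C_{\mathrm{cert}}:=8\Lambda$ then gives the claimed lower bound. This constant is finite and depends only on the verified rational multipliers.

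\emph{Main obstacle.} The only delicate point is checking that nothing in the certificate secretly used exact feasibility or the specific map $i(\cdot)$. Concretely, the atom inequalities must be verified for all sign/bucket/bit combinations, including combinations that can arise only under approximate labels. This holds because the LP in Step~3 enumerates every typed atom over $\{1,2,3\}\times\sgnset^\ell\times\{0,\ldots,L-1\}^\ell\times\bits^\ell$ without consistency restrictions. One caveat is that a variable appears with a single bit and a single bucket in all its atoms, so the decomposition remains exact. A second is clipping at $\pm1$ for the extreme buckets: there the slack bound only becomes easier, since $b_v\in[-1,1]$ automatically.
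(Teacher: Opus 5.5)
Your proposal is correct and follows essentially the same route as the paper. The upper bound reuses the verified ternary constraint $H_{\Theta^*}\le 1-\prod_j(1-q_{\alpha_j})$ verbatim. For the lower bound, you sum the verified typed-atom inequalities built from the labels $\beta$ and the bits of $x$, then use $\xi$-admissibility to get $g^{\rm lo}_{i,\tau},g^{\rm up}_{i,\tau}\le \xi D_{i,\tau}$ together with $\sum_{i,\tau}D_{i,\tau}\le 4|\Psi|$. Your explicit choice $C_{\mathrm{cert}}=8\Lambda$ simply makes concrete the constant that the paper leaves implicit.
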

\begin{proof}
For the upper bound, unary and binary clauses are counted with their exact satisfaction probabilities under $\mathcal D_\beta$.  For ternary clauses,
the verified constraint
\[
H_{\Theta^*}(\alpha_1,\alpha_2,\alpha_3)
\le
1-\prod_{j=1}^3(1-q_{\alpha_j})
\]
shows that the surrogate contribution is at most the true satisfaction probability.  Summing over clauses gives
\[
L_{\le3}(\Snap_{\beta}(\Psi))
\le
\mathbb E_{X\sim\mathcal D_\beta}[\Val_\Psi(X)]
\le
\OPT(\Psi).
\]
For the lower bound, fix an assignment $x$.  Refine every clause into typed atoms using the labels from $\beta$ and the endpoint bits of $x$.  More
explicitly, a typed atom records the clause length $\ell\in\{1,2,3\}$, the endpoint signs $s_1,\ldots,s_\ell$, the endpoint labels
$\beta_{v_1},\ldots,\beta_{v_\ell}$, and the bits
$x_{v_1},\ldots,x_{v_\ell}$.  Let $N_c$ be the number of clauses of $\Psi$ of typed atom $c$.  As in the exact certificate, write $h(c)$ for the local contribution of $c$ to $L_{\le3}$, and $o(c)$ for whether the clause represented by $c$ is satisfied by $x$.  Then
\[
  L_{\le3}(\Snap_{\beta}(\Psi))=\sum_c N_c h(c),
  \qquad
  \Val_\Psi(x)=\sum_c N_c o(c).
\]

The bucket constraints decompose over the same typed atoms.  Let $w_1=4,w_2=2,w_3=1$, and write $\chi(+)=1$, $\chi(-)=-1$.  For $I_i=[\ell_i,u_i]$, define $D_{i,\tau}$ and $A_{i,\tau}$ using the label
map $\beta$ and the reference assignment $x$: an endpoint of an $\ell$-clause with sign $s_j$, label $i_j$, and bit $\tau_j$ contributes $w_\ell$ to $D_{i_j,\tau_j}$ and $w_\ell\chi(s_j)$ to $A_{i_j,\tau_j}$.  The atom contributions
\[
g^{\rm lo}_{i,\tau}(c)
  =
  \sum_{\substack{1\le j\le \ell\\ i_j=i,\ \tau_j=\tau}}
  w_\ell(\ell_i-\chi(s_j)),
  \qquad
  g^{\rm up}_{i,\tau}(c)
  =
  \sum_{\substack{1\le j\le \ell\\ i_j=i,\ \tau_j=\tau}}
  w_\ell(\chi(s_j)-u_i)
\]
satisfy
\[
  g^{\rm lo}_{i,\tau}(\Psi)=\sum_c N_c g^{\rm lo}_{i,\tau}(c),
  \qquad
  g^{\rm up}_{i,\tau}(\Psi)=\sum_c N_c g^{\rm up}_{i,\tau}(c).
\]
Multiplying the verified atom inequality by $N_c$ and summing over all typed atoms gives
\[
L_{\le3}(\Snap_{\beta}(\Psi))
-\rho\Val_\Psi(x)
+
\sum_{i,\tau}
\left(
\lambda^{\rm lo}_{i,\tau}g^{\rm lo}_{i,\tau}(\Psi)
+
\lambda^{\rm up}_{i,\tau}g^{\rm up}_{i,\tau}(\Psi)
\right)
\ge0.
\]
Since $\beta$ is $\xi$-admissible, every variable assigned to bucket $I_i=[\ell_i,u_i]$ satisfies
\[
\ell_i-\xi\le b_v\le u_i+\xi.
\]
After summing over variables in bucket $i$ and bit class $\tau$, this gives
\[
g^{\rm lo}_{i,\tau}(\Psi)\le \xi D_{i,\tau},
\qquad
g^{\rm up}_{i,\tau}(\Psi)\le \xi D_{i,\tau}.
\]
Because all multipliers are fixed nonnegative constants and $\sum_{i,\tau}D_{i,\tau}\le4|\Psi|$, the correction term is at most $C_{\mathrm{cert}}\xi |\Psi|$.  Therefore
\[
L_{\le3}(\Snap_{\beta}(\Psi))
\ge
\rho\Val_\Psi(x)-C_{\mathrm{cert}}\xi |\Psi|.
\]
\end{proof}

We now combine the two ingredients.  Lemma~\ref{lem:maxksat-short-stability}
shows that $\PsSnap(\Phi_{\le3})$ is close to
$\Snap_\beta(\Phi_{\le3})$ for an $\eps$-admissible label map $\beta$. Lemma~\ref{lem:maxksat-approx-short-certificate} then applies to this fixed-label snapshot and loses only $O(\eps m)$.  Since $L_{\le3}$ is a
fixed linear functional, the $L_1$-closeness of the two snapshots changes the score by only $O(\eps m)$.  Adding the deterministic long-clause term $\rho m_{\ge4}$ gives the approximate value guarantee below.

\begin{lemma}[From weighted pseudosnapshot to value]
\label{lem:maxksat-pseudosnap-to-value}
There exist constants $C_0, C_1>0$, depending only on the certificate, such that with probability at least $ 1-O(\eps)-e^{-\Omega(\eps^6\kappa)}$, 
\[
\rho\,\OPT(\Phi)-C_1\eps m \leq L_{\le3}(\PsSnap(\Phi_{\le3})) +\rho m_{\ge4}-C_0\eps m \le \OPT(\Phi).
\]
\end{lemma}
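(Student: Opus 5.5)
We work on the event of Lemma~\ref{lem:maxksat-short-stability}, which holds with probability at least $1-O(\eps)-e^{-\Omega(\eps^6\kappa)}$. On this event the label map $\beta$ is $\eps$-admissible and
\[
\norm{\PsSnap(\Phi_{\le3})-\Snap_\beta(\Phi_{\le3})}_1\le C_{\mathrm{st}}\eps m .
\]
The first step is to turn this $\ell_1$ closeness into closeness of scores. The functional $L_{\le3}$ is linear in the snapshot coordinates, and its coefficients are fixed rationals from $\Theta^*$: the values $q_\alpha$, the binary probabilities, $c_0$, $u_j(\alpha)$ and $p_{jt}(\alpha,\beta)$. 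Let $K$ be the maximum absolute value of these coefficients, a constant depending only on the certificate. Then
\[
\bigl|L_{\le3}(\PsSnap(\Phi_{\le3}))-L_{\le3}(\Snap_\beta(\Phi_{\le3}))\bigr|\le K C_{\mathrm{st}}\eps m .
\]
We set $C_0:=K C_{\mathrm{st}}$.

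For the upper inequality, we apply the upper half of Lemma~\ref{lem:maxksat-approx-short-certificate} with $\Psi=\Phi_{\le3}$ and $\xi=\eps$. This gives $L_{\le3}(\Snap_\beta(\Phi_{\le3}))\le \E_{X\sim\mathcal D_\beta}[\Val_{\Phi_{\le3}}(X)]$. Every literal is satisfied with probability at least $0.27083$ under $r^*$, so as in Lemma~\ref{lem:maxksat-weighted-snapshot-bound} each long clause is satisfied under $\mathcal D_\beta$ with probability greater than $\rho$. Hence $\rho m_{\ge4}\le \E_{\mathcal D_\beta}[\Val_{\Phi_{\ge4}}(X)]$. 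Adding the two bounds and using that an expectation is at most the maximum gives
\[
L_{\le3}(\Snap_\beta(\Phi_{\le3}))+\rho m_{\ge4}\le \E_{\mathcal D_\beta}[\Val_\Phi(X)]\le\OPT(\Phi).
\]
Combined with the closeness bound, this yields $L_{\le3}(\PsSnap(\Phi_{\le3}))+\rho m_{\ge4}-C_0\eps m\le\OPT(\Phi)$.

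For the lower inequality, let $x^*$ be an optimal assignment for $\Phi$, and let $S^*$ and $G^*$ be the numbers of short and long clauses it satisfies. Lemma~\ref{lem:maxksat-approx-short-certificate} with $x=x^*$ and $\xi=\eps$ gives
\[
L_{\le3}(\Snap_\beta(\Phi_{\le3}))\ge\rho S^*-C_{\mathrm{cert}}\eps m ,
\]
using $|\Phi_{\le3}|\le m$. Since $G^*\le m_{\ge4}$,
\[
L_{\le3}(\Snap_\beta(\Phi_{\le3}))+\rho m_{\ge4}\ge\rho\,\OPT(\Phi)-C_{\mathrm{cert}}\eps m .
\]
Passing to the pseudosnapshot costs another $C_0\eps m$, and subtracting $C_0\eps m$ costs a further $C_0\eps m$. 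So we take $C_1:=C_{\mathrm{cert}}+2C_0$, which gives $\rho\,\OPT(\Phi)-C_1\eps m\le L_{\le3}(\PsSnap(\Phi_{\le3}))+\rho m_{\ge4}-C_0\eps m$.

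The main point needing care is the upper bound. It must use the randomized assignment $\mathcal D_\beta$ for short and long clauses simultaneously, not the true-bucket rounding $r^*$ of Lemma~\ref{lem:maxksat-weighted-snapshot-bound}. This works because the long-clause floor depends only on the uniform lower bound on literal satisfaction probabilities, which holds for every bucket label, including approximately feasible ones. We also need the constant in $|L_{\le3}(\cdot)-L_{\le3}(\cdot)|\le K\norm{\cdot}_1$ to be independent of $m$, which holds because the coordinate set is finite for fixed $L$.
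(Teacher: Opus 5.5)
Your proposal is correct and follows the paper's proof essentially step for step. You work on the stability event, convert the $\ell_1$ bound into closeness of the linear score, and use $\mathcal D_\beta$ on all of $\Phi$ (short and long clauses together) for the upper bound. For the lower bound you use the approximate-feasibility certificate with $x=x^*$ together with $G^*\le m_{\ge4}$; your explicit choices $C_0=KC_{\mathrm{st}}$ and $C_1=C_{\mathrm{cert}}+2C_0$ just make concrete the paper's ``sufficiently large multiple of $\eps m$.''
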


\begin{proof}
Work on the good event of Lemma~\ref{lem:maxksat-short-stability}. Since $L_{\le3}$ is a fixed linear functional with constant coefficients,
\[
  \left|
    L_{\le3}(\PsSnap(\Phi_{\le3}))
    -
    L_{\le3}(\Snap_{\beta}(\Phi_{\le3}))
  \right|
  \le
  C_{\mathrm{lin}}\eps m
\]
for some absolute constant $C_{\mathrm{lin}}$.

For the upper bound, apply Lemma~\ref{lem:maxksat-approx-short-certificate} with $\xi=\eps$ and $\Psi=\Phi_{\le3}$:
\[
L_{\le3}(\Snap_{\beta}(\Phi_{\le3}))
\le
\mathbb E_{X\sim\mathcal D_\beta}[\Val_{\Phi_{\le3}}(X)].
\]
Moreover, the same label map $\beta$ defines an independent randomized assignment on all variables of $\Phi$. Under this assignment, every long clause has a satisfaction probability at least $\rho$. Hence
\[
  L_{\le3}(\Snap_{\beta}(\Phi_{\le3}))+\rho m_{\ge4}
  \le
  \mathbb E_{X\sim\mathcal D_\beta}[\Val_{\Phi}(X)]
  \le
  \OPT(\Phi).
\]
Together with the previous display, this gives
\[
  L_{\le3}(\PsSnap(\Phi_{\le3}))+\rho m_{\ge4}
  \le \OPT(\Phi)+C_{\mathrm{lin}}\eps m.
\]
Subtracting a sufficiently large multiple of $\eps m$ yields the upper bound.

For the lower bound, let $x^*$ be an optimal assignment for $\Phi$. Let $S^*$ and $G^*$ be the numbers of satisfied short and long clauses, respectively, so that $\OPT(\Phi)=S^*+G^*$. Applying Lemma~\ref{lem:maxksat-approx-short-certificate} with $\Psi=\Phi_{\le3}$, $\xi=\eps$, and $x=x^*|_{\Phi_{\le3}}$, we obtain
\[
  L_{\le3}(\Snap_{\beta}(\Phi_{\le3}))
  \ge \rho S^* - C_{\mathrm{cert}}\eps m.
\]
Also $G^*\le m_{\ge4}$, so $\rho m_{\ge4}\ge \rho G^*$. Therefore
\[
  L_{\le3}(\Snap_{\beta}(\Phi_{\le3}))+\rho m_{\ge4}
  \ge
  \rho\OPT(\Phi)-C_{\mathrm{cert}}\eps m.
\]
Using the linear closeness bound again,
\[
  L_{\le3}(\PsSnap(\Phi_{\le3}))+\rho m_{\ge4}
  \ge
  \rho\OPT(\Phi)-C_1'\eps m
\]
for some constant $C_1'>0$. After subtracting $C_0\eps m$, we obtain
\[
  L_{\le3}(\PsSnap(\Phi_{\le3})) +\rho m_{\ge4}-C_0\eps m \ge \rho\OPT(\Phi)-C_1\eps m.
\]
Finally, every nonconstant clause is satisfied by a uniformly random assignment with probability at least $1/2$, so $\OPT(\Phi)\ge m/2$. Hence, the additive loss $C_1\eps m$ is an $O(\eps)\OPT(\Phi)$ multiplicative loss.
\end{proof}

\section{Quantum algorithm for weighted pseudosnapshot estimation}

In this section, we give a quantum streaming  algorithm  for the pseudosnapshot.

\begin{lemma}[Quantum estimation]
\label{lem:maxksat-pairwise-estimator}
For any fixed  constant $\eps\in (0,1/100]$, there is a one-pass quantum streaming algorithm for estimating $L_{\le3}(\PsSnap(\Phi_{\le3}))$  using $O_\eps\!\left(\log^5 (n+m)\right)$ qubits and $O_\eps\!\left(\log^5 (n+m)\right)$ classical bits.  The output  $\widehat L_{\le3}$ satisfies
\[
  \left|
  \widehat L_{\le3}-L_{\le3}(\PsSnap(\Phi_{\le3}))
  \right|
  \le \eps m
\]
with probability at least $15/16$.
\end{lemma}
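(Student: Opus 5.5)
Since $L_{\le3}$ is a fixed linear functional with constant coefficients on a constant number of coordinates, the plan reduces the lemma to estimating every coordinate of $\PsSnap(\Phi_{\le3})$ to additive error $\eps' m$, where $\eps'=\eps/C$ for a constant $C$ depending on the certificate $\Theta^*$. $M_3$ is counted exactly classically. Each coordinate $U_{\ell,j,\alpha}$ or $P_{\ell,j,t,\alpha,\beta}$ is a count of arriving clauses whose one or two tested endpoints receive given arrival-time buckets $\widetilde i(C)$. By Definition~\ref{def:maxksat-pseudobias}, the bucket of endpoint $v$ at clause $C$ is a function of the prefix pair $(\widetilde d_v^{\le C},\widetilde p_v^{\le C})$, the suffix pair $(d_v^{>C},p_v^{>C})$, and the public noise $g(v)$. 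Every coordinate then becomes a sum, over clauses $C$, of indicators of the event ``the prefix pair of $v_j(C)$ (and of $v_t(C)$) lies in a prescribed set.'' The suffix counts are not known at arrival, so, following \cite{KPV23}, I would sample clauses. A sampled clause $C$ triggers a quantum query that reveals which prefix pairs its endpoint(s) currently have, up to a small number of candidate values. The suffix $(d_v^{>C},p_v^{>C})$ is then tracked exactly by a classical counter on the $O(1)$ revealed variables for the rest of the stream, and the bucket is decided at the end.

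The core sketch is a hidden-matching-style superposition $\frac{1}{\sqrt{n}}\sum_v \ket{v}\ket{\Gamma_v}$, run separately for each scale $a$ (and each pair of scales $(a,a')$ for two-endpoint coordinates). Here $\Gamma_v=\Gamma(\widetilde d_v,\widetilde c_v)=\widetilde d_v+B\widetilde c_v$ is a counter of $O(\log m)$ qubits, with $B$ random per scale. On arrival of a tagged unit of $v$ we apply a controlled shift of $\Gamma$ by $1$; if the unit is positive and selected by $f_a$, or if we are at a low scale, we also shift by $B$. The shift amounts are determined by public randomness and the clause, so the update is unitary. When $C$ arrives and is sampled with probability about $\poly(\eps')\cdot\polylog/m$, we measure, for its $1$ or $2$ variables, the projector onto threshold states: ``$\ket{v}$ present with $\Gamma_v\ge$ some level'' for the single test, and the corresponding product or pair (matching-style parity on $\ket{u}\pm\ket{v}$) for the pair test. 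Lower and upper threshold tests combined by inclusion–exclusion give the indicator that the prefix pair lies in a given range.

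To estimate a count of size up to $m$ to accuracy $\eps' m$, the approach keeps $\poly(1/\eps')\cdot\log$ parallel copies, each with success amplitude $\Theta(1/n)$ per query, and uses the sampled, rescaled outcomes as an unbiased estimator. The $O(\log^5(n+m))$ bound should follow from multiplying $O(\log m)$ scales by the $O(\log m)$ copies of scale pairs, the $O(\log n+\log m)$ qubits per register, and the $\log$ repetitions. Variance and Chebyshev then give the $15/16$ success probability. The suffix is handled exactly because after a query selects $v$ we classically count its later units. Every sketch register is only ever shifted along threshold chains, so a single projective test does not disturb the other branches beyond collapsing the tested one, which I would reinitialise.

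The main obstacle is \emph{impersonation}. Because $\Gamma=d+Bc$ mixes two numbers, an endpoint whose true prefix degree lies outside scale $a$ can land on a tested level of $\Gamma$ and be miscounted. I would bound this by averaging over $B$ uniform in a range of size $\poly(1/\eps)\cdot D_{a+1}$: for any fixed out-of-scale pair $(d,c)\neq(d',c')$, the collision $d+Bc\in$ window has probability $O(\eps^{3}/\poly)$. The expected impersonated mass, summed over all tagged units (total $\le 4m$), is then $O(\eps' m)$, and Markov's inequality turns this into a constant-probability statement. A secondary difficulty is the pair test: I must show the pair measurement's acceptance probability factorises into the product of the single-threshold events with only $O(1/n)$ cross-terms. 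I would first verify this in the Max-DiCut setting of \cite{KPV23} and then transfer it.
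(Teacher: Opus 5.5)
There is a genuine gap. Your strategy matches the paper's: scale refinement, a quantum test when a clause arrives, exact classical suffix counters after a success, a random base $B$ in $\Gamma=d+Bc$ against impersonation, inclusion--exclusion of lower/upper thresholds, and Chebyshev. But the register you propose cannot carry it out, and the step you call ``secondary'' is exactly where it fails.

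\emph{The pair test.} In $\frac{1}{\sqrt n}\sum_v\ket{v}\ket{\Gamma_v}$ the counter is stored in binary and entangled with $\ket{v}$. For a parity measurement on $\ket{u}\pm\ket{v}$, the interference term is $\langle\Gamma_u|\Gamma_v\rangle$, or its restriction to whatever counter subspaces you project onto first. Such a measurement therefore detects whether the two counters are equal or matched. It never detects the product event $\1[\Gamma_u\ge t_1]\,\1[\Gamma_v\ge t_2]$. Covering that event would need a separate test for each of the $\Theta_\eps(D_a)$ counter values on a high scale.

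\emph{The collapse.} A failed test with $\Gamma_v\ge t$ annihilates the entire $v$-branch. ``Reinitialising'' it would require knowing $\Gamma_v$. Without that, every later query on $v$ in that copy fails, so the estimator is biased.

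\emph{The parameters.} Classical clause sampling at rate $\operatorname{polylog}(n+m)/m$ on top of quantum success probability $1/n$ gives $o(1)$ expected successes per copy when $n$ is large. Reaching error $\eps m$ then needs roughly $n$ copies, not polylogarithmically many.

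The missing idea is the paper's unary live-set encoding.
\begin{itemize}
\item The state is $\Qstate{T}=(\ket{\bot}+\sum_{x\in T}\ket{x})/\sqrt{1+|T|}$, whose live set starts as $M-1=\Theta(m)$ spare labels.
\item Each tagged unit applies a cyclic permutation that pushes a fresh spare label onto a chain $F(u,1)\to F(u,2)\to\cdots$.
\item On the no-wrap events of Lemmas~\ref{lem:good_label_budget} and~\ref{lem:two-endpoint-label-budget}, the event $d_u+B_ac_u\ge t$ is then exactly the event that the single basis label $F(u,t)$ is live (Lemma~\ref{lem:nonwrapping-threshold-invariant}).
\end{itemize}
This supplies what your design lacks:
\begin{itemize}
\item A pair-query on two fixed labels has mean $\frac{2}{|T|+1}\1[x\in T]\1[y\in T]$, an exact product.
\item A failed query removes at most the queried level of each chain. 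The next unconditional unit shift on that chain repairs it before that label is queried again.
\item The anchor keeps every post-measurement state valid, and Lemma~\ref{lem:active_prob} gives $\Pr[\text{active}]=M'/M$. This cancels the conditional success probability $1/M'$, so each sub-sketch, stopped after its first success and rescaled by $M$ or $M/2$, is exactly unbiased. Its output is bounded by $O_\eps(m)$ independently of $n$.
\item Lower/upper thresholds with telescoping over $q$ (or enumeration of $(d,c)$ on low scales) isolate scale and prefix count with $O_\eps(1)$ sub-sketches per coordinate.
\end{itemize}

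Your impersonation estimate also omits two things the paper must handle. First, an endpoint can impersonate several scales. For two-endpoint coordinates each such event is multiplied by the $O(\log(K_{\mathrm{base}}+2))$ scales available to the other endpoint (the $|\mathcal I(C,j)|\,|\mathcal R(C,t)|$ count). Second, two-endpoint sketches only query sampled counts below $\kappa$, so overflowing endpoints must be charged separately through $W_{\mathrm{of}}$.
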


Throughout this section, we use the tagged units from
Definition~\ref{def:taggedunit}, the public hash functions $f_a$ from
Definition~\ref{def:hash}, the weighted pseudobias notation from
Definition~\ref{def:maxksat-pseudobias}, and the buckets from
Definition~\ref{def:maxksat-rounding-data}. We take $A$ to be the smallest index with $D_A\ge4m$. Hence every $D_{a}=O_\eps(m)$ and $A=O_{\eps}(\log m)$. Empty scales with $D_a=D_{a+1}$ are ignored. We also choose $\kappa$ to be a sufficiently large positive integer.

For a clause $C$ and $j\in\{1,\ldots,|C|\}$, let $v_j(C)$ be the variable of the $j$-th literal. The endpoint $v_j(C)$ has a sign $s\in\sgnset$, and the  pseudobias $\widetilde b_{v_j(C)}$.   $\widetilde b_{v_j(C)}$ can be computed at the arrival of $C$. It lies in a bucket among $[-1,1]$, which is  indexed by $i\in\{0,\ldots,L-1\}$. Let
\[
  \lambda_j(C) := (s,i) \in\sgnset\times\{0,\ldots,L-1\}
\]
denote the sign-bucket pair. All clauses considered in this section are short clauses, with $\ell:=|C|\le 3$. Each endpoint of an $\ell$-clause generates $w_\ell$ tagged units on the same variable, where $w_1=4$, $w_2=2$, and $w_3=1$. A tagged unit is positive if the corresponding literal is positive.

For later reference, define the true scale of the endpoint $(C,r)$ by
\[
a(C,r)
:=
\min\left\{
s:\ s\text{ is nonempty and }
D_s<d_{v_r(C)}^{\le C}\le D_{s+1}
\right\}.
\]
Thus $a(C,r)$ is the scale that the $r$-th endpoint of $C$ actually belongs
to, according to its prefix total degree at the arrival of $C$.

We will use a scale refinement of the coordinates of $\PsSnap(\Phi_{\le3})$ as follows. Each original one-endpoint or two-endpoint coordinate is split according to the scale of the relevant prefix total degree. For each original coordinate, the sum of its scale-refined coordinates over all relevant scales is exactly the corresponding coordinate of $\PsSnap(\Phi_{\le3})$.  Moreover, we partition clauses by scales $(D_a,D_{a+1}]$.  For the indices  $a,b\in\{0,\ldots,A-1\}$ where $A=O_{\eps}(\log  m )$,   and the sign-bucket pairs $\alpha,\beta\in\sgnset\times\{0,\ldots,L-1\}$, define
\[
\begin{aligned}
U_{1,1,\alpha}^{(a)}
  &:=\#\Bigl\{C\in\Phi_{\le3}:
      \substack{|C|=1,
      D_a<d_{v_1(C)}^{\le C}\le D_{a+1},\
      \lambda_1(C)=\alpha}\Bigr\},\\
P_{2,1,2,\alpha,\beta}^{(a,b)}
  &:=\#\Bigl\{C\in\Phi_{\le3}:
      \substack{|C|=2,
      D_a<d_{v_1(C)}^{\le C}\le D_{a+1},
      D_b<d_{v_2(C)}^{\le C}\le D_{b+1},
      (\lambda_1(C),\lambda_2(C))=(\alpha,\beta)}\Bigr\},\\
U_{3,j,\alpha}^{(a)}
  &:=\#\Bigl\{C\in\Phi_{\le3}:
      \substack{|C|=3,
      D_a<d_{v_j(C)}^{\le C}\le D_{a+1}, \lambda_j(C)=\alpha}\Bigr\},\\
P_{3,j,t,\alpha,\beta}^{(a,b)}
  &:=\#\Bigl\{C\in\Phi_{\le3}:
      \substack{|C|=3,
      D_a<d_{v_j(C)}^{\le C}\le D_{a+1},
      D_b<d_{v_t(C)}^{\le C}\le D_{b+1},
      (\lambda_j(C),\lambda_t(C))=(\alpha,\beta)}\Bigr\}.
\end{aligned}
\]

With this scale refinement, the weighted pseudosnapshot score admits the decomposition 
\begin{equation}
\label{eq:maxksat-pseudosnap-score-decomp}
\begin{aligned}
L_{\le3}(\PsSnap(\Phi_{\le3}))
&= 
\sum_a\sum_{\alpha} q_\alpha U_{1,1,\alpha}^{(a)}
+\sum_{a,b}\sum_{\alpha,\beta}
  \bigl(1-(1-q_\alpha)(1-q_\beta)\bigr) \cdot P_{2,1,2,\alpha,\beta}^{(a,b)}\\
&
+c_0M_3
+\sum_{j=1}^3\sum_a\sum_{\alpha}u_j(\alpha)U_{3,j,\alpha}^{(a)} +\sum_{1\le j<t\le3}\sum_{a,b}\sum_{\alpha,\beta}
  p_{jt}(\alpha,\beta)  \cdot P_{3,j,t,\alpha,\beta}^{(a,b)}.
\end{aligned}
\end{equation}

The coefficients $q_\alpha,q_\beta,c_0,u_j(\alpha)$, and $p_{jt}(\alpha,\beta)$ are fixed constants from Lemma~\ref{lem:maxksat-finite-short-certificate}. The term $M_3$ is the number of length-$3$ clauses, and it can be counted exactly with $O(\log m)$ classical bits.

It remains to estimate the four kinds of scale-refined coordinates above. We call $U_{1,1,\alpha}^{(a)}$ and $U_{3,j,\alpha}^{(a)}$ \textit{one-endpoint} coordinates, because they test one endpoint of a clause. We call $P_{2,1,2,\alpha,\beta}^{(a,b)}$ and $P_{3,j,t,\alpha,\beta}^{(a,b)}$ \textit{two-endpoint} coordinates. For each coordinate, we run a quantum sketch algorithm, described in Sections~\ref{subsec:One-endpoint_estimation} and \ref{subsec:Two-endpoint_estimation}. The sketches are combined to form the target quantum algorithm, and we prove Lemma~\ref{lem:maxksat-pairwise-estimator} in Section~\ref{sec:proof_of_quantum_algorithm}. We first record the quantum-register notation used by all sketches.

\subsection{Quantum register}

We use a \textit{label} $S$ to denote the corresponding basis state $\ket{S}$. For convenience, we do not distinguish between them unless necessary. Let $U$ be the set of all labels. If a basis state has a nonzero amplitude, we call it \textit{live}. Let $T\subseteq U$ be the set of live non-anchor labels. We write
\[
  \ket{Q(T)}
  :=
  \frac{\ket{\bot}+\sum_{x\in T}|x\rangle\,}{\sqrt{1+|T|}}.
\]
Here $\ket{\bot}$ is a fixed anchor basis state. It is never measured; it is included only so that after every failed measurement, the post-measurement state is still valid, including the case $T=\emptyset$. In all sketches below, we will use the quantum register with size $$M:=\lceil  C_{\eps} m \rceil, $$where $C_{\eps}$ is a large enough constant. 

We mainly use the following three quantum operations in our algorithm.
\begin{itemize}[leftmargin=1.5em,itemsep=0.2em,topsep=0.2em]
  \item \emph{Permutation.}  For a permutation $\pi$ of $U$, apply the unitary that fixes $\ket{\bot}$ and sends $\ket{x}$ to $\ket{\pi(x)}$.
  \item \emph{Single-query.}  For $x\in U$, measure $\ket{Q(T)}$ with $\{\ket{x}\!\bra{x},I-\ket{x}\!\bra{x}\}$.  
  
  For $x\in T$: if we get $\ket{x}$, output $1$, and the quantum state collapses. Otherwise, output 0 and the new
  state becomes $\ket{Q(T\setminus\{x\})}$; 
 
  For $x\notin T$: it outputs 0, and the state is unchanged.
  \item \emph{Pair-query.}  For distinct $x,y\in U$, measure $\ket{Q(T)}$ with 
  \[
    \ket{\psi^+_{x,y}}:=\frac{\ket{x}+\ket{y}}{\sqrt2},
    \qquad
    \ket{\psi^-_{x,y}}:=\frac{\ket{x}-\ket{y}}{\sqrt2},
  \]
  and their orthogonal complement.  
  The query outputs $1$ on outcome $\ket{\psi^+_{x,y}}$, outputs $-1$ on outcome $\ket{\psi^-_{x,y}}$, and outputs $0$ on the orthogonal-complement outcome.  
The state update and probabilities are:
    \begin{enumerate}
        \item  If both $x$ and $y$ belong to $T$, then
    \[
    \Pr[+1]=\frac{2}{1+|T|}, \qquad \Pr[-1]=0.
    \]
  If we get the orthogonal complement, the new state becomes $\ket{Q(T\setminus \{x,y\})}$.
        \item If exactly one of them, say $z$, belongs to $T$, then
  \[
  \Pr[+1]=\Pr[-1]=\frac{1}{2(1+|T|)}.
  \]
  If the outcome is the orthogonal complement, the new state is
  $\ket{Q(T\setminus\{z\})}$.
       \item If neither belongs to $T$, the state is unchanged. 
    \end{enumerate}
\end{itemize}

\paragraph{Formulas of single/pair-query.}

The single-query and pair-query operations satisfy the following expectation formulas. If the state before the query is $\Qstate{T}$, then:
\begin{itemize}
    \item if $x\in U$ and $\zeta_x\in\{0,1\}$ is the output of the single-query, then
    \[
    \mathbb E[\zeta_x\mid \Qstate{T}] = \frac{1}{|T|+1}\,\1[x\in T];
\]   
\item if $x,y\in U$ and $\chi_{x,y}\in\{-1,0,+1\}$ is the output of the pair-query, then
\[
  \mathbb E[\chi_{x,y}\mid \Qstate{T}] = \frac{2}{|T|+1}\,\1[x\in T]\,\1[y\in T].
\]
\end{itemize}

\subsection{One-endpoint estimation} \label{subsec:One-endpoint_estimation}

Fix a one-endpoint coordinate $U^{(a)}_{\ell,j,\alpha}$, where $\alpha = (s_\alpha, i_\alpha)$ denotes a sign-bucket pair. Recall that when $D_a\ge\kappa/2$, we call $U^{(a)}_{\ell,j,\alpha}$ \textit{high-degree}. Otherwise, so $D_a<\kappa/2$, call it \textit{low-degree}. We handle these two cases with slightly different strategies in Section~\ref{sec:highdegree} and Section~\ref{Sec:lowdegree}, respectively.

The rough idea is to sample a clause during the stream with low probability. If a sampled clause belongs to $U^{(a)}_{\ell,j,\alpha}$, the output is scaled by $M$ or $M/2$ to cancel the low  probability; otherwise it outputs $0$. 

Thus, to test if the clause belongs to $U^{(a)}_{\ell,j,\alpha}$,  we should check all the relative constraints on the scale $a$, clause length $\ell$, endpoint position $j$, literal sign $s_\alpha$ and bucket index $i_{\alpha}$.  Indeed, the clause length, endpoint position, and literal sign are known when the clause arrives. The scale test is handled by quantum labels that track prefix degree. The bucket test is evaluated only after a successful query, using the queried candidate prefix value and the exact suffix counts.

Moreover, for the scale and bucket tests, the sketch needs the prefix total degree $d$ and the prefix positive count $c$. To implement the updates by simple shifts, we store these two counts through the single threshold value $\Gamma_a(d,c)=d+B_ac$. A different pair $(d',c')$ can sometimes reach the same queried threshold, especially when the endpoint is outside the queried scale. These accidental threshold crossings are the only extra counts considered below. We define the degree condition under which they can occur and record the properties needed later.

\paragraph{Random-base threshold encoding.}
Fix a sufficiently large constant $K_{\mathrm{base}}:=K_{\mathrm{base}}(\eps)$. For every nonempty scale $a$, choose an independent public-random integer $B_a$ uniformly from
\[
  \Bigl\{\lceil K_{\mathrm{base}}D_{a+1}\rceil,\,
  \lceil K_{\mathrm{base}}D_{a+1}\rceil+1,\,
  \ldots,\,
  \lceil 2K_{\mathrm{base}}D_{a+1}\rceil\Bigr\}.
\]
In particular, after enlarging $K_{\mathrm{base}}$ if needed, we have $B_a>D_{a+1}+1$ for every scale $a$. For a prefix positive count $c$, define
\[
  \Gamma_a(d,c):=d+B_a c.
\]
In a high-degree sketch, $c$ is the sampled positive-prefix count $s_v^{\le C}$, while in a low-degree sketch, $c$ is the exact positive-prefix count $p_v^{\le C}$.

For every $q\in\{0,\ldots,\kappa-1\}$ and every in-range degree $D_a<d\le D_{a+1}$,
\begin{equation}\label{eq:one-endpoint-ordinary-high-id}
\1[\Gamma_a(d,c)\ge B_a q+D_a+1]
-\1[\Gamma_a(d,c)\ge B_a q+D_{a+1}+1]
=
\1[c=q].
\end{equation}
Likewise, for every $D_a<d,d_0\le D_{a+1}$ and every $c_0\ge0$,
\begin{equation}\label{eq:one-endpoint-ordinary-low-id}
\1[\Gamma_a(d,c)\ge B_a c_0+d_0]
-\1[\Gamma_a(d,c)\ge B_a c_0 +d_0+1]
=
\1[d=d_0,\ c=c_0].
\end{equation}
These are the only identities needed on genuine scale-$a$ endpoints. For the high-degree one-endpoint coordinate output, we will also use the cumulative
form: for every $q\ge1$, every integer $c\ge0$, and every
$D_a<d\le D_{a+1}$,
\begin{equation}\label{eq:one-endpoint-cumulative-high-id}
\1[\Gamma_a(d,c)\ge qB_a]=\1[c\ge q].
\end{equation}
This follows from $B_a>D_{a+1}$.

For later use, set the largest multiplier of $B_a$ that can appear in the false-scale condition
\[
H_a:=
\begin{cases}
\kappa, & D_a\ge \kappa/2,\\
D_{a+1}, & D_a< \kappa/2.
\end{cases}
\]

\paragraph{Impersonation.}
We say that an endpoint with prefix degree $d$ \emph{impersonates scale $a$} if $d>D_{a+1}$ and
\begin{equation}\label{eq:scale-a-impersonation-def}
d\in
\bigcup_{h=1}^{H_a}
\bigl[hB_a+D_a+1,\ hB_a+D_{a+1}\bigr].
\end{equation}

To make the subsequent formulas more concise, let $\mathsf{Imp}^{a}_{C,r}$ be the event that endpoint $(C,r)$
impersonates scale $a$.

This condition is deliberately broader than the event that the endpoint actually gives a nonzero contribution to the wrong  queried scale. Every such nonzero contribution from outside the queried scale satisfies this condition, but some endpoints satisfying it may still contribute zero because their positive-prefix count does not put the encoded value $d+B_a c$ at the queried level.

\begin{lemma}[Impersonation]\label{lem:low-degree-singleton-impersonation}
Assume $D_a<\kappa/2$. Fix a queried low-degree level $B_a c_0 +d_0$, where $D_a<d_0\le D_{a+1}$ and $0\le c_0 \le d_0$. Let $d^\ast,c^\ast$ be nonnegative integers. If
\[
\Gamma_a(d^\ast,c^\ast)=B_a c_0 +d_0
\]
and $d^\ast\notin(D_a,D_{a+1}]$, then $d^\ast>D_{a+1}$ and the endpoint with prefix degree $d^\ast$ impersonates scale $a$.
\end{lemma}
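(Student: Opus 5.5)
The plan is a direct arithmetic argument: take the equation $d^\ast+B_ac^\ast=B_ac_0+d_0$, rule out $d^\ast\le D_a$, and then read off the impersonation window from the value of $c_0-c^\ast$.

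\emph{Step 1: setting up the difference.} Write $h:=c_0-c^\ast$, so that the hypothesis $\Gamma_a(d^\ast,c^\ast)=B_ac_0+d_0$ becomes
\[
d^\ast = hB_a + d_0.
\]
Since $d^\ast\ge 0$, $d_0\le D_{a+1}$, and $B_a>D_{a+1}+1$, the integer $h$ cannot be negative: $h\le -1$ would give $d^\ast\le d_0-B_a<0$. So $h\ge 0$.

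\emph{Step 2: the case $h=0$.} If $h=0$, then $d^\ast=d_0\in(D_a,D_{a+1}]$. This contradicts the assumption $d^\ast\notin(D_a,D_{a+1}]$, so $h\ge 1$.

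\emph{Step 3: $d^\ast$ lies in the window.} With $h\ge1$ we get
\[
d^\ast=hB_a+d_0\ge B_a+D_a+1>D_{a+1},
\]
so $d^\ast>D_{a+1}$, which is the first claim. Moreover, $D_a<d_0\le D_{a+1}$ gives
\[
d^\ast\in[hB_a+D_a+1,\ hB_a+D_{a+1}],
\]
which is exactly the $h$-th interval in \eqref{eq:scale-a-impersonation-def}.

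\emph{Step 4: bounding $h$, the only step needing care.} Impersonation requires $1\le h\le H_a$. Because $D_a<\kappa/2$, we are in the low-degree case, so $H_a=D_{a+1}$. Using $c^\ast\ge0$ and the hypothesis $c_0\le d_0$,
\[
h=c_0-c^\ast\le c_0\le d_0\le D_{a+1}=H_a.
\]
Hence $1\le h\le H_a$, and the endpoint with prefix degree $d^\ast$ impersonates scale $a$.
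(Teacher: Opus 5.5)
Your proof is correct and uses the same argument as the paper: write $d^\ast=(c_0-c^\ast)B_a+d_0$, show $h=c_0-c^\ast\ge 1$, and bound $h\le c_0\le d_0\le D_{a+1}=H_a$. The only difference is organizational. The paper first excludes $d^\ast\le D_a$ by a residue-mod-$B_a$ argument and then shows $h>0$, whereas your single case split on the sign of $h$ (using $d^\ast\ge 0$ and $B_a>D_{a+1}+1$) handles both cases at once and yields $d^\ast>D_{a+1}$ directly.
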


Thus, in a low-degree single-level query, every nonzero contribution from outside the queried scale is counted by the impersonation condition.

\begin{proof}
Because $B_a>D_{a+1}+1$, both $d_0$ and every positive $d^\ast\le D_a$ are below $B_a$. If $d^\ast=0$, equality modulo $B_a$ is also impossible because $d_0>0$. Thus $d^\ast\le D_a$ cannot satisfy $d^\ast+B_ac^\ast=B_a c_0 +d_0$: equality modulo $B_a$ would force $d^\ast=d_0>D_a$.

It remains to consider $d^\ast>D_{a+1}$. From the equality,
\[
d^\ast=(c_0-c^\ast)B_a+d_0.
\]
The integer $h:=c_0-c^\ast$ must be positive; otherwise $d^\ast\le d_0\le D_{a+1}$. Also $h\le c_0\le d_0\le D_{a+1}=H_a$ in the low-degree case. Hence
\[
d^\ast\in[hB_a+D_a+1,\ hB_a+D_{a+1}]
\]
for some $1\le h\le H_a$, which is exactly the event that the endpoint impersonates scale $a$.
\end{proof}

\subsubsection{High-degree case} \label{sec:highdegree}

For each $q\in\{0,1,\ldots,\kappa\}$, we maintain an independent
\textit{$q$-Sketch}. The case $q=\kappa$ accounts for all endpoints with
$s_v^{\le C}\ge\kappa$ through the capped value
$\min\{s_v^{\le C},\kappa\}$, so high-degree one-endpoint coordinates need no
separate overflow term.

Each $q$-Sketch consists of two independent sub-sketches, called the \textit{Upper-$q$-Sketch} and the \textit{Lower-$q$-Sketch}. Their outputs are combined to give an estimate for the high-degree coordinate.

We describe the explicit form of the quantum register as follows. There are six kinds of labels/basic states in $U$. We use the global register size $M=\lceil C_\eps m\rceil$, with $C_\eps$ chosen large enough for the fixed $\eps>0$.
\begin{enumerate}
    \item The fixed anchor basis state $\ket{\bot}$;
    \item Spare labels: $S_1,\ldots, S_M$;
    \item $H$-labels: $H(v,t)$, where $v\in [n]$, $t\in [M]$;
    \item $E$-labels: $E(v,t)$, where $v\in [n]$, $t\in [M]$;
    \item $H$-garbage-labels: $G_{H}(v,t)$,where $v\in [n]$, $t\in [M]$;
    \item $E$-garbage-labels: $G_{E}(v,t)$,where $v\in [n]$, $t\in [M]$;
\end{enumerate}
Besides, we maintain a pointer $\mathcal{P}$ for the spare labels via classical register, which costs $O(\log M)=O(\log m)$ bits. Because the basis state is specified by a variable index and a level in $[M]$, this quantum register costs $O(\log (nm))=O(\log n )$ qubits, for $m=\mathrm{poly} (n)$.

\paragraph{What the labels are for.}
The spare labels are only temporary labels used to make every update a permutation. They are not queried directly.

The $H$-labels stores the prefix total degree information. Every time a tagged unit on $u$ arrives, we increase the $H(u,\cdot)$ labels, so the queried label $H(u,t)$ should be thought of as the level $t$ of the prefix total degree.The queries on $H$-labels  will tell us whether the prefix degree has already entered the scale $(D_a,D_{a+1}]$.

The $E$-labels stores the encoded quantity. Besides the one-step update for every tagged unit, a selected positive tagged unit applies $B_a$ additional one-step shifts to the $E$-labels. So the queried label $E(u,t)$ should be thought of as recording
\[
\Gamma_a(d_u^{\le C},s_u^{\le C})
=B_a s_u^{\le C}+d_u^{\le C}.
\]
The queries on $E$-labels will tell us which sampled positive-prefix count is consistent with the current prefix.

The garbage labels do not store any statistic we need. They are only there so that every update is a permutation on a finite set of basis labels.

\begin{definition}[Increment operation]
Given a variable $u\in[n]$ and the current  pointer $\mathcal P$, define $\inc{H,u,1}$ as follows.
\begin{enumerate}
    \item Apply the cyclic permutation of basis labels whose cycle is
    \[
    S_{\mathcal P}\to H(u,1)\to H(u,2)\to \cdots \to H(u,M)
    \to G_H(u,M)\to \cdots \to G_H(u,1)\to S_{\mathcal P}.
    \]
    We call this one elementary shift. Its quantum part is unitary because it is
    a permutation of basis labels.
    \item Move the pointer to the next spare label. If $\mathcal P<M$,
set $\mathcal P\leftarrow \mathcal P+1$. If $\mathcal P=M$, set
$\mathcal P\leftarrow 1$. Thus the pointer runs through
$S_1,S_2,\ldots,S_M$ and then starts again from $S_1$.
\end{enumerate}
For any integer $r\ge1$, $\inc{H,u,r}$ means repeating this elementary shift $r$ times, updating $\mathcal P$ after each repetition, saying \[
  \mathrm{inc}(H,u,r):=\underbrace{\mathrm{inc}(H,u,1)\circ\cdots\circ\mathrm{inc}(H,u,1)}_{r\text{ times}}.
\]
For $E$-labels, we use the same shift operation on the chain $E(u,\cdot)$; in particular, $\inc{E,u,B_a}$ means applying this one-step shift $B_a$ times, so it advances $\mathcal P$ by $B_a$.
\end{definition}

The concrete implementations differ slightly for the four sub-sketches: Lower-$q$-Sketch ($q\ge 1$), Upper-$q$-Sketch ($q\ge 1$), Lower-$0$-Sketch and Upper-$0$-Sketch. But the overall structure of the algorithm is the same. Therefore, we present the \textbf{Lower-$q$-Sketch ($q\ge 1$)} in detail, while for other sub-sketches we focus primarily on the aspects that differ. In this subsection, call a clause $C$ \textit{target-shape} for the  one-endpoint coordinate $U^{(a)}_{\ell,j,\alpha}$ if $|C|=\ell$ and the $j$-th literal has
sign $s_\alpha$. To make the subsequent formulas more concise, let $\mathsf{Tgt}^{\ell,j,\alpha}_{C}$ be the event that $C$ is target-type for $U^{(a)}_{\ell,j,\alpha}$.

\paragraph{Initialization:} The quantum register is initialized as: 
\[
  \Qstate{T_0}
  =
  \frac{|\perp\rangle+|S_1\rangle+\cdots+|S_{M-1}\rangle}{\sqrt M},
\]
where the live set \[
  T_0:=\{S_1,S_2,\dots,S_{M-1}\}.
\]
Initialize $\mathcal{P}=1$.

When a clause $C$ arrives, the quantum algorithm performs two types of operations in turn: update and query.

\paragraph{Update.} Decompose the clause $C$ into tagged units and handle them one by one. Suppose the tagged unit is on variable $u$ and the live set is $T$. Perform the following increment operations in turn. Each operation applies a unitary permutation to the quantum register and updates the pointer as specified above.

\medskip
\textbf{Step-1.} $
      \mathrm{inc}(H,u,1).
    $

\medskip    
\textbf{Step-2.}      $
      \mathrm{inc}(E,u,1).
    $
    
\medskip    
\textbf{Step-3.} If this tagged unit is positive and selected by $f_a$ (See Definition~\ref{def:hash} for details),  perform   $
      \mathrm{inc}(E,u,B_a).
    $

\paragraph{Query.} If the clause $C$ is not target-shape, do nothing and end. Otherwise denote the $j$-th literal by $
  v:=v_j(C).
$ Perform the pair-query $$(H(v,D_{a}+1),  E(v,qB_a ))$$ on the current quantum register. Denote the output of pair-query by $\chi_{C}^{\ell}\in \{-1,0,+1\}$.    

\medskip
\textbf{Branch 1:} if $\chi_{C}^{\ell} = 0$, we say this query \textit{fails}. Then do nothing and end.

\medskip
\textbf{Branch 2:} if  $\chi_{C}^{\ell}\in \{-1,+1\}$, we say this query is \textit{successful}. Then this quantum register performs no further updates or queries. During the rest of the stream, the sub-sketch only keeps the following classical data: \begin{enumerate}
    \item store the variable $v$ and clause $C$;
    \item store the nonzero measurement outcome $\chi_C^\ell$;
    \item store $q$;
    \item in the following streaming of clauses, count the suffix total degree  $d_{v}^{>C}$ and suffix positive count $p_{v}^{>C}$ accurately.
\end{enumerate}
The classical part costs $O(\log (n+m))$ bits. We say a sub-sketch is \textit{active} if the quantum part has not ended, equivalently, if it has not entered Branch 2 yet.

\paragraph{Output.} If a high-degree sub-sketch succeeds on $C$, then after the rest of the stream has been processed it has the exact suffix counts $p_v^{>C}$ and $d_v^{>C}$. For every $r\in\{0,\ldots,\kappa\}$, define
\[
F_r(C)
:=
\1 \left[
\clip_{[-1,1]}
\left(
\frac{2\left(\frac{2D_a}{\kappa}r+p_v^{>C}\right)}
     {D_a+d_v^{>C}}
-1+g(v)
\right)
\in I_{i_\alpha}
\right].
\]
Finally,  the Lower-$q$-Sketch $(q\ge 1)$ outputs 
\[
  L_q(C):=\frac M2\cdot \chi_{C}^{\ell}\cdot (F_q(C)-F_{q-1}(C)). 
\]
In all other cases,  the Lower-$q$-Sketch $(q\ge 1)$ outputs $0$.

\medskip

Next we describe the other three sub-sketches. Most of the algorithm is the same, so we only describe the parts that differ from the Lower-$q$-Sketch $(q\ge1)$.

\paragraph{Upper-$q$-Sketch ($q\ge 1$)} In the query step, we perform the pair-query on \[
(H(v,D_{a+1}+1), 
E(v,qB_a ))
\]
rather than $(H(v,D_{a}+1), E(v,qB_a ))$ for Lower-$q$-Sketch.

Similarly, define $\chi^{u}_{C}\in \{-1,0,1\}$ for the output of the pair-query. In its successful branch, the Upper-$q$-Sketch stores the nonzero value of $\chi_C^u$ together with the same classical data. For Branch $2$, define the final output as
\[
  U_q(C):=\frac M2\cdot \chi_{C}^{u}\cdot (F_q(C)-F_{q-1}(C)).
\]
In all other cases, it outputs  $0$.

\paragraph{Lower-$0$-Sketch and Upper-$0$-Sketch.} For the $0$ case, the main difference is that the assumed prefix positive count is $0$. Hence we do not need to maintain this information, and we no longer measure the $E$-labels or $E$-garbage-labels. The register only needs the $H$-labels, and the update step per tagged unit is the single increment $\inc{H,u,1}$.

In the query step, the Lower-$0$-Sketch performs a single-query on $H(v,D_a+1)$ and the Upper-$0$-Sketch performs a single-query on $
H(v,D_{a+1}+1)$.

Define $\xi_{C}^{\ell}$ and $\xi_{C}^{u}$ to denote the outputs of the single-query from Lower-$0$-Sketch and  Upper-$0$-Sketch  respectively. In a successful branch the sketch stores this nonzero output, which equals $1$, and the same classical data and suffix counters as above. Define the final outputs as \[
  L_0(C):=M\cdot \xi_{C}^{\ell}\cdot F_0(C),
  \qquad
  U_0(C):=M\cdot \xi_{C}^{u}\cdot F_0(C),
\]
In all other cases, output $0$.

\paragraph{Correctness.}

Fix a one-endpoint sub-sketch and a label $F(u,t)$ that it may query, where $F$ could denote $H$ or $E$-labels.
We first show that, as long as the pointer has not wrapped around, $F(u,t)$ is live exactly when $d_u+Bc_u\ge t$. Lemma~\ref{lem:good_label_budget} will then show that the assumption fails with probability at most $1/160$.

For this fixed $F$ and $u$, the sub-sketch queries only this level $t$. Just before such a query, let $d_u$ count the calls  $\operatorname{inc}(F,u,1)$, and let $c_u$ count the calls $\operatorname{inc}(F,u,B_a)$. If this family has no such $\operatorname{inc}(F,u,B_a)$ calls, set $c_u=0$.

\begin{lemma}\label{lem:nonwrapping-threshold-invariant} Suppose $t<M$, and suppose that before every query of this same  label, the whole sub-sketch has used fewer than $M$ elementary shifts in total. Then, with $d_u$ and $c_u$ as defined above, immediately before every such query,
\[
    F(u,t)\in T
    \quad\Longleftrightarrow\quad
    d_u+B_a c_u\ge t .
\]
\end{lemma}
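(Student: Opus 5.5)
We will prove the equivalence by tracking how the chain of labels attached to $u$ moves under the elementary shifts, and then reading off which level is live. Fix the family $F\in\{H,E\}$ and the variable $u$. Think of the cycle used by $\inc{F,u,1}$ as a path: the current spare label $S_{\mathcal P}$, then $F(u,1),\dots,F(u,M)$, then $G_F(u,M),\dots,G_F(u,1)$, and back to $S_{\mathcal P}$. The key invariant, proved by induction over all elementary shifts of the sub-sketch (including shifts on other variables and on the other family, and the measurements), is the following. Immediately before each query of $F(u,t)$, the live labels among $F(u,1),\dots,F(u,M)$ are exactly $F(u,1),\dots,F(u,N)$ truncated at $M$, where $N:=d_u+B_ac_u$ is the number of elementary shifts applied on the chain $F(u,\cdot)$. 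Queries, however, can kill labels; we treat that separately below.

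\emph{Step 1 (spares are fresh).} Each elementary shift consumes the spare at the pointer and then advances it. Since fewer than $M$ shifts have been used before the query, the pointer has not wrapped. Hence every shift consumes a distinct spare $S_{\mathcal P}$ with $\mathcal P\le M-1$, and each of these was live at initialization ($T_0=\{S_1,\dots,S_{M-1}\}$). We must also check that no spare was removed by an earlier measurement. This holds because spares are never queried, and a failed query only deletes the queried label(s), by the state-update rules. So at each shift the cycle brings a live label into position $F(u,1)$.

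\emph{Step 2 (monotone filling of the chain).} Suppose the live part of the chain is a prefix $F(u,1..k)$, possibly with holes created by measurements, which we address next. A shift moves each $F(u,i)$ to $F(u,i+1)$ and puts a live spare into $F(u,1)$, so the live prefix grows by one. Labels $F(u,M)$ only move into garbage when $k$ would exceed $M$, and $N<M$ rules that out. Other variables' chains, and the other family's chain, are disjoint cycles; they affect only the pointer, which is already accounted for in Step 1. By induction, after $N$ shifts the set $\{t: F(u,t)\text{ live}\}$ is $\{1,\dots,N\}$, which gives the claimed equivalence $F(u,t)\in T\iff d_u+B_ac_u\ge t$ (note $t<M$).

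\emph{Step 3 (effect of earlier failed queries) — main obstacle.} A failed measurement of this same label $F(u,t)$, or of the paired label, removes it from $T$. Subsequent shifts would then move the hole along the chain and break the prefix structure. The resolution we intend uses two facts. First, a sub-sketch stops after any successful query. Second, a failed query on a label that is live removes exactly that basis state; we must argue it does not later land at level $t$ in a way that changes the answer. Our plan is to observe that, within a single sub-sketch, the queried levels for $u$ are a fixed $t$, and the hole created by a failed query at level $t$ is then shifted to $t+1,t+2,\dots$. Meanwhile, the label at level $t$ after further shifts is a label that came from below $t$, and it is live if and only if $N\ge t$. So holes lie only strictly above the queried level at later query times, and the statement about level $t$ remains correct. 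Formally, we strengthen the invariant to the following: every label at chain position $\le t$ is live if and only if its position is $\le N$, and holes only occur at positions $>t$. Shifts preserve this, and failed queries at level $t$ create holes only at level $t$, which the next shift pushes above $t$, or which is re-queried only when the hole sits exactly at $t$. The last possibility we exclude by noting that any later query is preceded by at least one shift on $u$, namely the tagged unit of the queried clause itself.
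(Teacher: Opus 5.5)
Your proposal is correct and follows the paper's route. Because the pointer does not wrap, each elementary shift feeds a fresh live spare (never queried) into $F(u,1)$, so without failed queries the live part of the chain is exactly levels $1,\dots,d_u+B_ac_u$; a hole left by a failed query at the fixed level $t$ is then refilled from level $t-1$ (or from a spare when $t=1$) by the unconditional one-step update of the next clause containing $u$, which always precedes the next query of $F(u,t)$. Your version is somewhat more explicit than the paper's, for example in checking that spares are never removed by measurements and that the garbage labels are never reached, but the key idea is the same.
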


\begin{remark}
The same proof also applies to the $Z_1$- and $Z_2$-labels used later, because those sketches also update first and query afterward. See Section~\ref{subsec:Two-endpoint_estimation} for details. Therefore, we will not repeat the proof for the two-endpoint case, but instead directly cite this lemma.
\end{remark}

\begin{proof}
First ignore failed queries. Because the whole sub-sketch has used fewer than $M$ elementary shifts, the pointer has not wrapped around. On the chain $F(u,\cdot)$, the calls counted by $d_u$ and $c_u$ move live labels forward by $d_u+B_a c_u$ levels. Hence $F(u,t)$ is live exactly when $d_u+B_a c_u\ge t$.

It remains to check that failed queries do not break this statement. A failed query can only remove the label being tested. Suppose it removes $F(u,t)$, and suppose the same label is queried again later. That later query can occur only on a later clause whose queried endpoint is again $u$. In the algorithm, the update step of that later clause is performed before the query, and this update applies at least one one-step shift to the same chain $F(u,\cdot)$. Since the total number of shifts is still below $M$, the   pointer has not returned to a previously used spare position.

If $t=1$, that shift moves a live spare label into $F(u,1)$. If $t>1$, then $F(u,t-1)$ was live when $F(u,t)$ was removed: the live labels had
already reached level $t$, and this sub-sketch does not query level $t-1$ of the same chain. The next shift moves that live label into $F(u,t)$. Thus a failed deletion of $F(u,t)$ is repaired before the same label is queried again. The same condition for being live therefore holds before every query of $F(u,t)$.
\end{proof}

The previous lemma needs two assumptions: the queried level $t$ is below $M$, and the total number of elementary shifts in the sub-sketch is below $M$. The next lemma checks both. The level check is deterministic after $C_\eps$ is chosen large enough; the shift bound is the part that uses the public randomness. Although the low-degree one-endpoint sketches are described later, we include them in the same statement because they use the same shift operation.

\begin{lemma}[No-wrap event for one-endpoint sketches]\label{lem:good_label_budget}

Choose the constant $C_\eps$ in $M=\lceil C_{\eps} m\rceil$ large enough. Then every queried level in these sub-sketches lies in $[M]$. Moreover, there is an event $\mathcal G_1$, depending only on the public hash functions, such that $\Pr[\mathcal G_1]\ge 1-1/160$, and on $\mathcal G_1$, every one-endpoint sub-sketch performs fewer than $M$ elementary shifts during the whole stream.
\end{lemma}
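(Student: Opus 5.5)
The plan has two independent parts: a deterministic bound on every queried level, and a probabilistic bound on the total number of elementary shifts per sub-sketch. Both reduce to counting tagged units.

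For the first part, list the queried levels. The high-degree sketches query $H(v,D_a+1)$, $H(v,D_{a+1}+1)$ and $E(v,qB_a)$ with $q\le\kappa$. The low-degree sketches query levels of the form $B_ac_0+d_0$, with $c_0\le d_0\le D_{a+1}<\kappa/2$. Since $D_{a+1}\le (1+\eps^3)D_A$ and $D_A=O(m)$ (because $A$ is minimal with $D_A\ge 4m$), every $D_{a+1}=O(m)$. Likewise $B_a\le 2K_{\mathrm{base}}D_{a+1}+1=O_\eps(m)$. Hence every queried level is at most
\[
\max\{\kappa,\,D_{a+1}\}\cdot B_a+D_{a+1}+1 = O_\eps(m),
\]
using that $\kappa=\poly(1/\eps)$ is a constant. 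Choosing $C_\eps$ larger than the implied constant puts every queried level in $[M]$. A detail to check is that the low-degree multiplier is bounded by $H_a\le\max(\kappa,D_{a+1})$.

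For the shift bound, fix a sub-sketch. Each tagged unit causes at most two unit shifts (Step-1 and Step-2). If it is positive and selected by $f_a$, it causes $B_a$ more shifts. The number of tagged units is $\sum_v d_v\le 4m$, so the total is at most
\[
8m + B_a\cdot Y_a,
\]
where $Y_a$ is the number of positive tagged units selected by $f_a$. In low-degree sketches every positive unit is counted exactly, so $Y_a\le 4m$; but there $B_a=O_\eps(\kappa)$ is a constant, so the total is still $O_\eps(m)$ deterministically. In high-degree sketches, $Y_a\sim\mathrm{Bin}(\le 4m,\kappa/(2D_a))$, so
\[
\E[B_aY_a]\le 2K_{\mathrm{base}}D_{a+1}\cdot\frac{\kappa}{2D_a}\cdot 4m=O_\eps(m),
\]
using $D_{a+1}\le(1+\eps^3)D_a+1$.

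The main obstacle is that Markov's inequality per scale, followed by a union bound over the $A=O_\eps(\log m)$ scales and the $\kappa+1$ values of $q$, fails, because the per-scale failure probability is only a constant. The fix is to note that the randomness depends only on the scale $a$ (all $q$-sketches for the same $a$ share $f_a$), and to use concentration. When $\E[Y_a]=4m\kappa/(2D_a)$ is large, a Chernoff bound gives $Y_a\le 2\E[Y_a]+O(\log A)$ except with probability $e^{-\Omega(\E Y_a)}$.

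This bound then gives $B_aY_a\le O_\eps(m)+O(B_a\log\log m)$. But $B_a$ can be as large as $\Theta(m)$ at the top scales, so the additive $B_a\log\log m$ term is not $O(m)$. At those top scales, $\E[Y_a]$ is itself $O(\kappa)$. Instead, bound $\Pr[Y_a>t\kappa]$ directly, and split the scales in two:
\begin{itemize}
\item for scales with $D_a\le \eps m$, Chernoff gives exponentially small failure, summable over $a$;
\item for the $O_\eps(1)$ top scales with $D_a>\eps m$, $\E[Y_a]=O_\eps(\kappa)$, and Markov with a large constant threshold gives failure at most $1/(160\cdot O_\eps(1))$ each.
\end{itemize}
Absorbing these thresholds into $C_\eps$ gives $\Pr[\mathcal G_1]\ge 1-1/160$.
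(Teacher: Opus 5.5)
Your proposal is correct and follows essentially the same route as the paper: low-degree scales get a deterministic $O_\eps(m)$ bound. For high-degree scales you use a per-scale event $Y_a=O_\eps(m/D_a)$ that depends only on $f_a$, together with a union bound that exploits the geometric growth of $m/D_a$. The $O_\eps(1)$ top scales are handled by a large constant threshold, where you use Markov and the paper uses Chernoff with an enlarged $K_\eps$.

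One small fix to the level bound: the multiplier of $B_a$ should be $q\le\kappa$ on high-degree scales and $c_0\le D_{a+1}=O(\kappa)$ on low-degree scales. The factor $\max\{\kappa,D_{a+1}\}$ you wrote would be $\Theta(m)$ at the top scales and give a $\Theta(m^2)$ bound.
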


\begin{proof}
The stream contains at most $4m$ tagged units in total. If $m=0$, no update is applied and the claim is immediate. Assume $m\ge1$.

First consider a low-degree sub-sketch, so $D_a<\kappa/2$. Then $D_{a+1}=O_\eps(1)$. Every tagged unit applies $O_\eps(1)$ elementary shifts: it always applies the one-step updates, and a positive tagged unit applies at most $B_a=O_\eps(1)$ additional shifts. Therefore the total number of elementary shifts in this sub-sketch is deterministically $O_\eps(m)$.

Now consider a high-degree scale $a$, so $D_a\ge\kappa/2$. Let $X_a$ be the number of positive tagged units selected by the public hash $f_a$ in the whole stream. There is some $N_a\le 4m$ such that
\[
X_a\sim \mathrm{Bin}\!\left(N_a,\frac{\kappa}{2D_a}\right),
\qquad
\mathbb E[X_a]\le \frac{2\kappa m}{D_a}.
\]
Choose a large constant $K_\eps$. A Chernoff bound gives
\[
\Pr\!\left[
X_a > K_\eps\frac{m}{D_a}
\right]
\le
\exp\!\left(-c_\eps\frac{m}{D_a}\right)
\]
for some constant $c_\eps>0$.

On the event
\[
X_a \le K_\eps\frac{m}{D_a},
\]
the extra number of elementary shifts caused by sampled positive tagged units is at most
\[
B_aX_a
\le
\bigl(2K_{\mathrm{base}}D_{a+1}+1\bigr)\cdot K_\eps\frac{m}{D_a}
=
O_\eps(m),
\]
because $D_{a+1}\le (1+\eps^3)D_a+1$. The deterministic one-step updates contribute another $O(m)$ shifts, so the total number of elementary shifts is $O_\eps(m)$.

It remains to check the queried levels. In the low-degree one-endpoint sketches, the queried levels are
\[
B_a c+d,\qquad B_a c+d+1,
\]
where $D_a<d\le D_{a+1}$ and $0\le c\le d$. Since $D_{a+1}=O_\eps(1)$, these levels are $O_\eps(1)$.

In the high-degree one-endpoint sketches, the queried levels are
\[
D_a+1,\qquad D_{a+1}+1,\qquad qB_a\quad (0\le q\le \kappa).
\]
Here $D_{a+1}=O_\eps(m)$, $B_a=O_\eps(D_{a+1})$, and $\kappa=O_\eps(1)$, so these levels are $O_\eps(m)$. After increasing $C_\eps$ if necessary, all queried levels lie in $[M]$.

The event above depends only on the scale $a$, not on the endpoint position, sign-bucket label, or the value of $q$. Thus, once it holds for scale $a$, every one-endpoint sub-sketch on this scale uses $O_\eps(m)$ elementary shifts. After increasing $C_\eps$ again if necessary, this number is less than
$M$.

Now let $\mathcal G_1$ be the event that $X_a \le K_\eps\frac{m}{D_a}$ holds for every high-degree scale $a$. Low-degree scales need no probability bound, as shown above. Because the scales are geometric, there are only
$O_\eps(1)$ scales in each dyadic range $D_a\in[m/2^{r+1},m/2^r]$. There are also $O_\eps(1)$ high-degree scales with $m<D_a\le4m$. Their total failure probability can be made no larger than the first term of the dyadic sum by increasing $K_\eps$. Thus,
\[
\sum_{a:\,D_a\ge\kappa/2}\Pr\!\left[
X_a > K_\eps\frac{m}{D_a}
\right]
\le
C'_\eps\sum_{r\ge 0}\exp(-c'_\eps 2^r)
<
\frac1{160}
\]
after increasing $K_\eps$ if necessary. Therefore, $\Pr[\mathcal G_1]\ge 1-\frac1{160}$ on $\mathcal G_1$, every one-endpoint sub-sketch uses fewer than $M$ elementary shifts over the whole stream, and every queried level lies in $[M]$. This proves the lemma.
\end{proof}

If $\mathcal G_1$ fails, the algorithm still runs legally: the pointer is cyclic, each update is still a permutation of the finite set of labels, and each query is still one of the measurements defined earlier. What may fail is only the simple interpretation of queried labels by prefix counts.

On $\mathcal G_1$, Lemma~\ref{lem:nonwrapping-threshold-invariant} applies to every label queried by a one-endpoint sketch. Thus, immediately before each query, an $H$-label is live exactly when the corresponding prefix degree has reached its queried level, and an $E$-label is live exactly when the prefix value encoded by $\Gamma_a$ has reached its queried level. These are the only facts about the live set used in the estimates below.

\paragraph{Estimation and Analysis.} All the following arguments are conditioned on the event $\mathcal G_1$ from Lemma~\ref{lem:good_label_budget}.

The lower and upper sketches first isolate the degree range $(D_a,D_{a+1}]$. The $q$-sketches then recover the sampled positive-prefix count by a telescoping sum. Thus the proof below has two steps: first isolate the scale, and then use the
telescoping sum over $q$ to evaluate the bucket condition from the sampled
positive-prefix count.

Extend the outputs by $0$ to every clause that is not target-shape. For each clause $C$, define
\begin{equation} \label{eq:estimation}
    \widehat U^{(a)}_{\ell,j,\alpha}(C)
:=
L_0(C)-U_0(C)+\sum_{q=1}^{\kappa}\bigl(L_q(C)-U_q(C)\bigr).
\end{equation}
Set
\[
\widehat U^{(a)}_{\ell,j,\alpha}
:=
\sum_{C\in\Phi_{\le3}}\widehat U^{(a)}_{\ell,j,\alpha}(C).
\]

Fix a target-shape clause $C$ and write $v:=v_j(C)$. For each $q\in\{0,1,\ldots,\kappa\}$, let
\[
F_q(C)
:=
\1 \left[
\mathrm{clip}_{[-1,1]}\!\left(
\frac{2\left(\frac{2D_a}{\kappa}q+p_v^{>C}\right)}{D_a+d_v^{>C}}-1+g(v)
\right)\in I_{i_\alpha}
\right].
\]

Recall that $s_v^{\le C}$ denotes the number of positive tagged units on variable $v$ up to clause $C$, which are selected by $f_a$ in Step-3 of the update step. By Definition~\ref{def:maxksat-pseudobias}, on the scale $(D_a,D_{a+1}]$ the weighted pseudobias uses
\[
\widetilde d_v^{\le C}=D_a,
\qquad
\widetilde p_v^{\le C}
=
\frac{2D_a}{\kappa}\min\{s_v^{\le C},\kappa\}.
\]
Hence
\[
\1[\lambda_j(C)=\alpha]
=
F_{\min\{s_v^{\le C},\kappa\}}(C).
\]
Therefore, \begin{equation} \label{eq:Fq}
    \1 [\lambda_j(C)=\alpha]
=
F_0(C)+
\sum_{q=1}^{\kappa}
\bigl(F_q(C)-F_{q-1}(C)\bigr)\,\1[s_v^{\le C}\ge q].
\end{equation}

Recall that a sub-sketch is \textit{active} if its quantum part has not ended, equivalently, if it has not entered Branch 2 in the query step.

\begin{lemma}[Active probability]\label{lem:active_prob}
Fix one sub-sketch and fix the public randomness. Suppose that, conditioned on the event that this sub-sketch is active just before the query on a clause $C$, its live set $T$ has size $M'-1$, where $M'=1+|T|$.
Then
\[
\Pr[\text{this sub-sketch is active just before the query on } C]
=
\frac{M'}{M}.
\]
\end{lemma}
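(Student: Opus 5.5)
The plan is to track the norm of the unnormalized state along the sub-sketch's history of failed queries. The key structural fact is that every operation performed while the sub-sketch is active preserves the uniform-superposition form $\ket{Q(T)}$. Updates are permutations of basis labels fixing $\ket{\bot}$, so they only rename the live set and keep $|T|$ unchanged. A failed query, that is, the outcome $0$ of a single-query or pair-query, projects the state onto a subspace spanned by basis labels. Its effect is therefore to delete the queried live labels, while the anchor and all remaining live labels keep equal amplitudes.

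First, I will write the state without renormalization. Start from $\frac{1}{\sqrt M}\bigl(\ket{\bot}+\sum_{x\in T_0}\ket{x}\bigr)$, with $|T_0|=M-1$. Every amplitude is $1/\sqrt M$. After each permutation the amplitudes are unchanged. After each failed-outcome projection, the surviving components keep amplitude exactly $1/\sqrt M$: the projector onto the orthogonal complement of $\ket{x}$, or of $\ket{\psi^\pm_{x,y}}$, annihilates the components on $x$ (and $y$) and fixes all other basis vectors.

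One point needs care here. In the pair-query with exactly one label $z$ live, the complement projection applied to $\ket{z}$ leaves $\tfrac12\ket{z}-\tfrac12\ket{\text{other}}$ minus its projection onto the span, so I have to check what remains. Projecting onto the complement of $\mathrm{span}\{\ket{x},\ket{y}\}$ removes both components, and that gives exactly the paper's stated update $\ket{Q(T\setminus\{z\})}$. I will use this as the definition of the outcome-$0$ branch, since the stated probabilities $\Pr[\pm1]=1/(2(1+|T|))$ sum with the removed mass correctly only under this reading. So the unnormalized vector remains $\frac{1}{\sqrt M}\bigl(\ket{\bot}+\sum_{x\in T}\ket{x}\bigr)$ for the current live set $T$.

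Second, I will invoke the standard fact that the probability of a sequence of measurement outcomes equals the squared norm of the unnormalized post-measurement vector. Being active just before the query on $C$ is exactly the event that all earlier queries produced outcome $0$; the updates are unitary and play no role in this probability. The squared norm is $(1+|T|)/M=M'/M$, where $T$ is the live set in the conditioned branch. The hypothesis fixes $|T|=M'-1$, and this is consistent because $T$ is determined by the public randomness and the stream, since all active branches coincide.

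The only delicate step is verifying that each failed projection keeps the uniform-amplitude form, in particular in the pair-query case when exactly one of the two labels is live. Everything else is bookkeeping.
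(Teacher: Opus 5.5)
Your argument is correct, and it differs from the paper's mainly in packaging.

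\textbf{The paper's route.} It lists the earlier queries and uses the stated single- and pair-query outcome probabilities. From these it shows that the conditional probability of surviving $\text{Query}_i$ is $M_{i+1}/M_i$, since a failure deletes exactly the $M_i-M_{i+1}\in\{0,1,2\}$ live queried labels. It then telescopes the product to $M'/M$.

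\textbf{Your route.} You compute the joint probability of the all-zero outcome sequence in one shot, as the squared norm of the unnormalized state. This rests on the invariant that the unnormalized vector stays equal to $\frac{1}{\sqrt M}\bigl(\ket{\bot}+\sum_{x\in T}\ket{x}\bigr)$. Permutations fixing $\ket{\bot}$ only relabel it, and every outcome-$0$ projector is diagonal in the label basis. You also correctly note that $T$ is deterministic on the unique active branch.

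\textbf{Comparison.} The product of the paper's ratios is exactly your ratio of final to initial squared norms, so the two proofs perform the same computation. Yours derives the post-failure update rules from the projector structure and never needs the individual outcome probabilities; the paper simply states those rules in its description of the queries. The paper's proof is shorter once those rules are taken as given.

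\textbf{One correction.} The pair-query step you flag as delicate is not a matter of choosing a reading. The vectors $\ket{\psi^+_{x,y}},\ket{\psi^-_{x,y}}$ form an orthonormal basis of $\mathrm{span}\{\ket{x},\ket{y}\}$. Hence the complement outcome is identically the projector $I-\ket{x}\bra{x}-\ket{y}\bra{y}$, and it removes both components whichever subset of $\{x,y\}$ is live. The intermediate vector $\tfrac12\ket{z}-\tfrac12\ket{\text{other}}$ you mention comes from removing only the $\psi^+$ component. It never arises under this measurement, so that remark can be dropped.
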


\begin{proof}
List the earlier queries of this sub-sketch as $\text{Query}_1,\dots,\text{Query}_r$. For $i=1,\dots,r+1$, let $M_i$ be the value of $1+|T|$ just before $\text{Query}_i$, conditioned on the event that the sub-sketch is still active at that time. Let $M_{r+1}=M'$. Then $M_1=M$.

A failed query can delete zero, one, or two queried live labels, so
$M_i-M_{i+1}\in\{0,1,2\}$. A failed query leaves the sub-sketch active. Therefore
\[
\Pr[\text{the sub-sketch is active after }\text{Query}_i \mid  \text{it is active before }\text{Query}_i]
=
1-\frac{M_i-M_{i+1}}{M_i}
=
\frac{M_{i+1}}{M_i}.
\]
Multiplying these conditional probabilities over $i=1,\dots,r$ gives
\[
\Pr[\text{the sub-sketch is still active just before the query on } C]
=
\prod_{i=1}^r \frac{M_{i+1}}{M_i}
=
\frac{M'}{M}.
\]
\end{proof}

Next, we analyze the levels $q\in \{1,\ldots,\kappa\}$ and $q=0$ separately. Recall that  $\mathsf{Tgt}^{\ell,j,\alpha}_{C}$ denotes the event that
$|C|=\ell$ and the $j$-th literal of $C$ has sign $s_\alpha$, i.e. $C$ is the target-type for $U^{(a)}_{\ell,j,\alpha}$.

\paragraph{For $q\in\{1,\ldots,\kappa\}$.}
Suppose the live set just before the query on $C$ has size $M'-1$.

By definition,
\[
L_q(C)=\frac{M}{2}\chi_C^{\ell}\bigl(F_q(C)-F_{q-1}(C)\bigr).
\]
The Lower-$q$-Sketch queries the pair
\[
\bigl(H(v,D_a+1),\,E(v,qB_a)\bigr).
\]
So the pair-query formula gives
\[
\mathbb E[\chi_C^{\ell}\mid |Q(T)\rangle]
=
\frac{2}{M'}
\1[H(v,D_a+1)\in T]\1[E(v,qB_a)\in T].
\]
Therefore
\[
\begin{aligned}
\mathbb E[L_q(C)\mid |Q(T)\rangle]
&=
\frac{M}{M'}
\bigl(F_q(C)-F_{q-1}(C)\bigr)\cdot 
\1[\mathsf{Tgt}^{\ell,j,\alpha}_{C}]
\cdot
\1[H(v,D_a+1)\in T]\cdot
\1[E(v,qB_a)\in T].
\end{aligned}
\]
On the event $\mathcal G_1$ from Lemma~\ref{lem:good_label_budget}, the queried label 
$H(v,D_a+1)$ is live if and only if
\[
d_v^{\le C}\ge D_a+1,
\]
and the queried label $E(v,qB_a)$ is live if and only if
\[
\Gamma_a(d_v^{\le C},s_v^{\le C})\ge qB_a.
\]
So
\[
\begin{aligned}
\mathbb E[L_q(C)\mid |Q(T)\rangle]
&=
\frac{M}{M'}
\bigl(F_q(C)-F_{q-1}(C)\bigr)\cdot
\1[\mathsf{Tgt}^{\ell,j,\alpha}_{C}]\cdot
\1[d_v^{\le C}\ge D_a+1]\cdot
\1[\Gamma_a(d_v^{\le C},s_v^{\le C})\ge qB_a].
\end{aligned}
\]

Similarly,
\[
\begin{aligned}
\mathbb E[U_q(C)\mid |Q(T)\rangle]
&=
\frac{M}{M'}
\bigl(F_q(C)-F_{q-1}(C)\bigr)\cdot
\1[\mathsf{Tgt}^{\ell,j,\alpha}_{C}]\cdot
\1[d_v^{\le C}\ge D_{a+1}+1]\cdot
\1[\Gamma_a(d_v^{\le C},s_v^{\le C})\ge qB_a].
\end{aligned}
\]

By Lemma~\ref{lem:active_prob},
\[
\Pr[\text{this sub-sketch is active just before the query on }C]
=
\frac{M'}{M}.
\]
Multiplying by this factor gives
\[
\begin{aligned}
\mathbb E[L_q(C)-U_q(C)]
&=
\bigl(F_q(C)-F_{q-1}(C)\bigr)\cdot
\1[\mathsf{Tgt}^{\ell,j,\alpha}_{C}]\cdot
\1[D_a<d_v^{\le C}\le D_{a+1}]\cdot
\1[\Gamma_a(d_v^{\le C},s_v^{\le C})\ge qB_a].
\end{aligned}
\]

Now assume
\[
D_a<d_v^{\le C}\le D_{a+1}.
\]
By \eqref{eq:one-endpoint-cumulative-high-id}, the last condition is equivalent
to $s_v^{\le C}\ge q$.
Hence \begin{equation}\label{eq:bigq}
    \mathbb E[L_q(C)-U_q(C)]
=
\bigl(F_q(C)-F_{q-1}(C)\bigr)
\1[\mathsf{Tgt}^{\ell,j,\alpha}_{C},\ D_a<d_v^{\le C}\le D_{a+1},\ s_v^{\le C}\ge q].
\end{equation}

\paragraph{For $q=0$.}
Now the query is a single-query. Since
\[
L_0(C)=M\cdot \xi_C^{\ell}\cdot F_0(C),
\]
and the Lower-$0$-Sketch queries $H(v,D_a+1)$, the single-query formula gives
\[
\mathbb E[\xi_C^{\ell}\mid |Q(T)\rangle]
=
\frac{1}{M'}\1[H(v,D_a+1)\in T].
\]
Therefore
\[
\mathbb E[L_0(C)\mid |Q(T)\rangle]
=
\frac{M}{M'}F_0(C)\,
\1[\mathsf{Tgt}^{\ell,j,\alpha}_{C}]
\1[H(v,D_a+1)\in T].
\]
On the event $\mathcal G_1$, this becomes
\[
\mathbb E[L_0(C)\mid |Q(T)\rangle]
=
\frac{M}{M'}F_0(C)\,
\1[\mathsf{Tgt}^{\ell,j,\alpha}_{C},\ d_v^{\le C}\ge D_a+1].
\]

Similarly, the Upper-$0$-Sketch queries $H(v,D_{a+1}+1)$, so
\[
\mathbb E[U_0(C)\mid |Q(T)\rangle]
=
\frac{M}{M'}F_0(C)\,
\1[\mathsf{Tgt}^{\ell,j,\alpha}_{C},\ d_v^{\le C}\ge D_{a+1}+1].
\]
Using Lemma~\ref{lem:active_prob} again, we get \begin{equation} \label{0q}
    \mathbb E[L_0(C)-U_0(C)]
=
F_0(C)\,
\1[\mathsf{Tgt}^{\ell,j,\alpha}_{C},\ D_a<d_v^{\le C}\le D_{a+1}].
\end{equation}

\medskip
Finally, combining  Equality~\eqref{eq:estimation}~\eqref{eq:Fq}~\eqref{eq:bigq} and ~\eqref{0q}, define
\[
A_a(C):=
\1\!\left[
\mathsf{Tgt}^{\ell,j,\alpha}_{C},\
D_a<d_v^{\le C}\le D_{a+1}
\right].
\]
Then we have
\begin{equation}\label{eq:one-endpoint-high-coordinate}
\begin{aligned}
\mathbb E\!\left[\widehat U^{(a)}_{\ell,j,\alpha}(C)\right]
&=
A_a(C)
\left(
F_0(C)+
\sum_{q=1}^{\kappa}
\bigl(F_q(C)-F_{q-1}(C)\bigr)\1[s_v^{\le C}\ge q]
\right) \\
&=
\1\!\left[
\mathsf{Tgt}^{\ell,j,\alpha}_{C},\
D_a<d_v^{\le C}\le D_{a+1},\
\lambda_j(C)=\alpha
\right].
\end{aligned}
\end{equation}

Therefore
\[
\mathbb E\!\left[\widehat U^{(a)}_{\ell,j,\alpha}\right]
=
U^{(a)}_{\ell,j,\alpha}.
\]

\subsubsection{Low-degree case} \label{Sec:lowdegree}

For the low-degree $U^{(a)}_{\ell,j,\alpha}$, the relevant prefix total degree and prefix positive count of $v_j(C)$ range over a constant-size set. We perform sketches for all the possible $(d,c)$ pairs, where $d\in(D_a,D_{a+1}]$ and $c\in[0,d]$. Each pair represents the candidate values $d_v^{\le C}=d$ and $p_v^{\le C}=c$. Moreover, for each $(d,c)$ we maintain a $(d,c)$-Sketch.

Each $(d,c)$-Sketch consists of analogous Lower-$(d,c)$-Sketch and  Upper-$(d,c)$-Sketch. They  each have an identical quantum register. The quantum register and update operations are the same as in the high-degree case, except for the positive-count update described next.

For the update step, Step-3 is replaced as follows: if this tagged unit is positive, perform
$
      \mathrm{inc}(E,u,B_a).
    $

For the query step, the Lower-$(d,c)$-Sketch performs a single-query on \[
E(v, B_a c+d )
\] and the Upper-$(d,c)$-Sketch performs single-query on \[
E(v, B_a c+d+1 ).
\]

In the low-degree sketch, the query tests only the value
$\Gamma_a(d,c)$; it does not separately test whether the prefix degree lies
in $(D_a,D_{a+1}]$. Therefore an endpoint whose true prefix degree is outside
this interval may still pass the queried threshold. The condition
\eqref{eq:scale-a-impersonation-def} was chosen to count every such possible
extra contribution. If this condition does not occur, the queried difference
isolates the intended low-degree state $(d,c)$.

\paragraph{Output.}
After a successful query on a target-shape clause $C$, the corresponding Lower-$(d,c)$-Sketch or Upper-$(d,c)$-Sketch stores the nonzero single-query output, which equals $1$, and computes
\[
F_{d,c}(C)
:=
\1\!\left[
\mathrm{clip}_{[-1,1]}\!\left(
\frac{2(c+p_v^{>C})}{d+d_v^{>C}}-1+g(v)
\right)\in I_{i_\alpha}
\right]
\]
by substituting the exact prefix values
\[
d_v^{\le C}=d,\qquad p_v^{\le C}=c
\]
into the weighted pseudobias formula of Definition~\ref{def:maxksat-pseudobias}. Define
\[
L_{d,c}(C):=M\cdot \xi^{\ell}_{d,c}(C)\cdot F_{d,c}(C),
\qquad
U_{d,c}(C):=M\cdot \xi^{u}_{d,c}(C)\cdot F_{d,c}(C),
\]
where $\xi^{\ell}_{d,c}(C),\xi^{u}_{d,c}(C)\in\{0,1\}$ are the one-query outputs of the Lower-$(d,c)$-Sketch and Upper-$(d,c)$-Sketch, respectively. Extend both outputs by $0$ to every clause that is not target-shape.

\paragraph{Estimation and Analysis.}
For each clause $C$, define
\[
\widehat U^{(a)}_{\ell,j,\alpha}(C)
:=
\sum_{d=D_a+1}^{D_{a+1}}
\sum_{c=0}^{d}
\bigl(L_{d,c}(C)-U_{d,c}(C)\bigr),
\]
and
\[
\widehat U^{(a)}_{\ell,j,\alpha}
:=
\sum_{C\in\Phi_{\le3}}
\widehat U^{(a)}_{\ell,j,\alpha}(C).
\]

Fix a clause $C$ and write $v:=v_j(C)$. If the live set just before the query in the Lower-$(d,c)$-Sketch has size $M'-1$, then the one-query property gives
\[
\mathbb E\!\left[L_{d,c}(C)\mid |Q(T)\rangle\right]
=
\frac{M}{M'}
F_{d,c}(C)\,
\1\!\left[
\mathsf{Tgt}^{\ell,j,\alpha}_{C},\,
\Gamma_a(d_v^{\le C},p_v^{\le C})\ge B_a c+d
\right].
\]
Similarly,
\[
\mathbb E\!\left[U_{d,c}(C)\mid |Q(T)\rangle\right]
=
\frac{M}{M'}
F_{d,c}(C)\,
\1\!\left[
\mathsf{Tgt}^{\ell,j,\alpha}_{C},\,
\Gamma_a(d_v^{\le C},p_v^{\le C})\ge B_a c+d+1
\right].
\]
Since the probability that the quantum part of this sub-sketch is active just before the query is $M'/M$,
\[
\mathbb E\!\left[L_{d,c}(C)-U_{d,c}(C)\right]
=
F_{d,c}(C)\,
\1\!\left[
\mathsf{Tgt}^{\ell,j,\alpha}_{C},\,
\Gamma_a(d_v^{\le C},p_v^{\le C})=B_a c+d
\right].
\]
If the endpoint $v_j(C)$ does not impersonate scale $a$, then Lemma~\ref{lem:low-degree-singleton-impersonation} rules out hits from degrees outside $(D_a,D_{a+1}]$, and the threshold identity~\eqref{eq:one-endpoint-ordinary-low-id} applies inside the interval. Thus
\[
\Gamma_a(d_v^{\le C},p_v^{\le C})=B_a c+d
\iff
d_v^{\le C}=d,\quad p_v^{\le C}=c.
\]
Therefore, on the event that $v_j(C)$ does not impersonate scale $a$,
\[
\mathbb E\!\left[L_{d,c}(C)-U_{d,c}(C)\right]
=
F_{d,c}(C)\,
\1\!\left[
\mathsf{Tgt}^{\ell,j,\alpha}_{C},\,
d_v^{\le C}=d,\,
p_v^{\le C}=c
\right].
\]
Whenever
\[
d_v^{\le C}=d,\qquad p_v^{\le C}=c,
\]
the values substituted in $F_{d,c}(C)$ are exactly those in the
weighted pseudobias of Definition~\ref{def:maxksat-pseudobias}. By the
sign-bucket pair definition in Definition~\ref{def:maxksat-pseudosnapshot},
\[
F_{d,c}(C)=\1[\lambda_j(C)=\alpha].
\]
Summing over all pairs $(d,c)$ therefore gives
\[
\mathbb E\!\left[\widehat U^{(a)}_{\ell,j,\alpha}(C)\right]
=
\1\!\left[
\mathsf{Tgt}^{\ell,j,\alpha}_{C},\,
D_a<d_v^{\le C}\le D_{a+1},\,
\lambda_j(C)=\alpha
\right]
\]
\[
\text{whenever the endpoint }v_j(C)\text{ does not impersonate scale }a\text{.}
\]

For a fixed clause $C$ and scale $a$, at most one pair $(d,c)$ can satisfy $\Gamma_a(d_v^{\le C},p_v^{\le C})=B_a c+d$ with $D_a<d\le D_{a+1}$, because equality for two pairs $(d,c)$ and $(d',c')$ would imply
$d-d'=B_a(c'-c)$, and $|d-d'|<B_a$ forces $c=c'$ and then $d=d'$. Hence, for this fixed clause and scale, the total contribution of all
low-degree $(d,c)$-sketches has absolute value at most $1$. Therefore, in all cases
\begin{equation}\label{eq:one-endpoint-low-coordinate}
\begin{aligned}
\left|
\mathbb E\!\left[\widehat U^{(a)}_{\ell,j,\alpha}(C)\right]
-
\1\!\left[
\substack{
\mathsf{Tgt}^{\ell,j,\alpha}_{C},\ 
D_a<d_v^{\le C}\le D_{a+1},\  \lambda_j(C)=\alpha
}
\right]
\right|
&\le
\1[\mathsf{Tgt}^{\ell,j,\alpha}_{C}]\cdot
\1[\mathsf{Imp}^{a}_{C,j}].
\end{aligned}
\end{equation}
Recall that $\mathsf{Imp}^{a}_{C,j}$ denotes the event that endpoint $(C,j)$
impersonates scale $a$.

Hence
\begin{equation}\label{eq:one-endpoint-low-coordinate-total}
\left|
\mathbb E\!\left[\widehat U^{(a)}_{\ell,j,\alpha}\right]
-
U^{(a)}_{\ell,j,\alpha}
\right|
\le
\sum_{C\in\Phi_{\le3}}
\1\!\left[ 
\mathsf{Tgt}^{\ell,j,\alpha}_{C},  \mathsf{Imp}^{a}_{C,j}
\right].
\end{equation}

\subsection{Two-endpoint estimation} \label{subsec:Two-endpoint_estimation}
Fix a two-endpoint coordinate $P^{(a,b)}_{\ell,j,t,\alpha,\beta}$. Here $1\le j<t\le \ell$, $\alpha=(s_\alpha,i_\alpha)$ and $\beta=(s_\beta,i_\beta)$ are sign-bucket pairs. $a$ and $b$ correspond to the scales
$(D_a,D_{a+1}]$ and $(D_b,D_{b+1}]$.

We call a clause $C$ \emph{target-shape} for the two-endpoint coordinate $P^{(a,b)}_{\ell,j,t,\alpha,\beta}$  if $|C|=\ell$, the $j$-th literal of $C$ has sign $s_\alpha$, and the $t$-th literal of $C$ has sign $s_\beta$. To make the subsequent formulas more concise, let $\mathsf{Tgt}^{\ell,j,t,\alpha,\beta}_{C}$ be the event that $C$ is the  target-type for $P^{(a,b)}_{\ell,j,t,\alpha,\beta}$.

For the first endpoint $v_{j}(C)$, when $2D_a\ge \kappa$, recall the definition
\[
s^{\le C}_{v_j(C)}
:=
\#\Bigl\{\text{positive tagged units of }v_j(C)
\text{ that are selected by }f_a \text{ up to  }C\Bigr\}.
\]
For the second endpoint $v_{t}(C)$, when $2D_b\ge \kappa$,
\[
s^{\le C}_{v_t(C)}
:=
\#\Bigl\{\text{positive tagged units of }v_t(C)
\text{ that are selected by }f_b \text{ up to }C\Bigr\}.
\]

Impersonation also plays an important role in the error counting here. Recall that $a(C,r)$ denotes the true scale of the endpoint $(C,r)$, namely the scale satisfying
\[
D_{a(C,r)}<d^{\le C}_{v_r(C)}\le D_{a(C,r)+1}.
\]

For any queried nonempty scale $s$, we say that the endpoint $(C,r)$ impersonates scale $s$ if it satisfies the scale-$s$ impersonation condition from \eqref{eq:scale-a-impersonation-def}. In the error count below, we only count such events when $s\neq a(C,r)$. This condition only checks the prefix degree, and it is chosen to include every case in which $(C,r)$ actually gives a nonzero contribution to a queried wrong scale $s$.

To see this, let $c_{\mathrm{enc}}$ be the positive-prefix count encoded by the scale-$s$ sketch on this endpoint: it is the sampled count on a high-degree scale and the exact positive-prefix count on a low-degree scale. If such a nonzero contribution occurs, then for some queried count index $c_0$ and some $d_0$ with $D_s<d_0\le D_{s+1}$,
\[
d^{\le C}_{v_r(C)}+B_s c_{\mathrm{enc}}=B_s c_0+d_0.
\]
Hence
\[
d^{\le C}_{v_r(C)}
=
(c_0-c_{\mathrm{enc}})B_s+d_0.
\]
For a wrong-scale nonzero contribution, the integer $h:=c_0-c_{\mathrm{enc}}$ is one of the values allowed in \eqref{eq:scale-a-impersonation-def}. Therefore this endpoint is included in
the larger set defined by the impersonation condition. The converse is not claimed: an endpoint may satisfy the interval condition in \eqref{eq:scale-a-impersonation-def} but still contribute zero, because its encoded positive-prefix count may not place $d+B_s c_{\mathrm{enc}}$ at the queried level.

The coordinate output defined in this section is the actual algorithmic output for $P^{(a,b)}_{\ell,j,t,\alpha,\beta}$. We compare it directly with the target coordinate, which counts clauses whose two queried endpoints have true scales $a$ and $b$. Once the no-wrap event below makes the queried labels represent the intended threshold inequalities, a fixed queried scale pair $(a,b)$ can differ from the target coordinate only in the following two ways:
\begin{enumerate}
  \item a clause whose true scale pair is not $(a,b)$ can still give a nonzero term to the coordinate output for $(a,b)$, because the first endpoint satisfies the scale-$a$ impersonation condition or the second endpoint satisfies the scale-$b$ impersonation condition;
  \item a true high-degree endpoint overflows, meaning that its sampled positive-prefix count is at least $\kappa$ while the sketch only queries the values $0,1,\ldots,\kappa-1$.
\end{enumerate}
The first case can make the queried coordinate include a clause that the target coordinate does not include; the second case can make the queried coordinate miss a clause that the target coordinate includes.

\subsubsection{High-degree on both endpoints} \label{sec:highhigh}
We first discuss the case $D_a\ge\kappa/2$ and $D_b\ge\kappa/2$. The remaining cases, where at least one scale is low-degree, are handled in Section~\ref{sec:othercases}; only the queried levels and positive-count updates change.

In this case both endpoints use the sampled positive-prefix counts $s_{v}^{\le C}$ from Definition~\ref{def:maxksat-pseudobias}, and we keep only the levels  $\{0,1,\ldots,\kappa-1 \}$. We use the random-base threshold encoding from the one-endpoint section on both sides:
\[
  \Gamma_a(d,c)=d+B_a c,
  \qquad
  \Gamma_b(d,c)=d+B_b c,
\]
where $B_a$ and $B_b$ are the public-random bases chosen in the one-endpoint section. Set
\[
\Sigma_a:=\{0,1,\ldots,\kappa-1\},
\qquad
\Sigma_b:=\{0,1,\ldots,\kappa-1\}.
\]
For each pair $(\sigma,\tau)\in\Sigma_a\times\Sigma_b$, we maintain four
independent sub-sketches:
\begin{enumerate}
  \item Lower-Lower-$(\sigma,\tau)$-Sketch,
  \item Lower-Upper-$(\sigma,\tau)$-Sketch,
  \item Upper-Lower-$(\sigma,\tau)$-Sketch,
  \item Upper-Upper-$(\sigma,\tau)$-Sketch.
\end{enumerate}

\paragraph{Quantum register.}
Each of the four sub-sketches has an identical quantum register. We use the global register size $M=\lceil C_{\eps} m\rceil$, with $C_\eps$ chosen large enough for the fixed $\eps>0$. The basis labels are:
\begin{enumerate}
  \item a fixed anchor basis state $\ket{\bot}$;
  \item spare labels $S_1,\ldots,S_M$;
  \item first-family labels $Z_1(v,r)$, where $v\in[n]$ and $r\in[M]$;
  \item second-family labels $Z_2(v,r)$, where $v\in[n]$ and $r\in[M]$;
  \item first-family garbage labels $G_1(v,r)$, where $v\in[n]$ and $r\in[M]$;
  \item second-family garbage labels $G_2(v,r)$, where $v\in[n]$ and $r\in[M]$.
\end{enumerate}
As in the one-endpoint case, we maintain a pointer $\mathcal P$ in a
classical register. The pointer runs through the spare labels
$S_1,S_2,\ldots,S_M$: after it reaches $S_M$, the next shift uses $S_1$
again.

In the one-endpoint sketch, the $H$- and $E$-labels are both used for the
same endpoint. Here $Z_1$ is used for $v_j(C)$ and $Z_2$ is used for
$v_t(C)$, so each side uses one label family to encode the threshold value
for that endpoint. Thus each side uses one label family for the combined value
$\Gamma_a(d,c)$ or $\Gamma_b(d,c)$, rather than keeping separate labels for
the total degree and the positive-prefix count.

For $i=1,2$, $\inc{Z_i,u,r}$ means applying the same one-step shift to the
chain $Z_i(u,\cdot)$ exactly $r$ times; after each one-step shift, the 
pointer moves to the next spare label, with $S_M$ followed by
$S_1$. As long as fewer than $M$ elementary shifts have been applied, the queried labels behave exactly as in the one-way shifts
\[
S_{\mathcal{P}}\to Z_1(u,1)\to Z_1(u,2)\to \cdots \to Z_1(u,M)
\to G_{1}(u,M)\to \cdots \to G_1(u,1)\to S_{\mathcal{P}},
\] and \[
S_{\mathcal{P}}\to Z_2(u,1)\to Z_2(u,2)\to \cdots \to Z_2(u,M)
\to G_2(u,M)\to \cdots \to G_2(u,1)\to S_{\mathcal{P}}.
\]

The initial state is
\[
  \Qstate{T_0}
  =
  \frac{|\perp\rangle+|S_1\rangle+\cdots+|S_{M-1}\rangle}{\sqrt M},
  \qquad
  T_0:=\{S_1,S_2,\ldots,S_{M-1}\}.
\] And initialize $\mathcal{P}=1$.
This register uses $O(\log (n+m))$ qubits, and the pointer uses $O(\log(m))$ classical bits.

\paragraph{Update.}
Process the tagged units of the current clause one by one. Suppose the tagged unit is on variable $u$. In each of the four sub-sketches, perform
\begin{enumerate}
  \item $\inc{Z_1,u,1}$;
  \item $\inc{Z_2,u,1}$;
  \item if this tagged unit is positive and selected by $f_a$, perform
  $\inc{Z_1,u,B_a}$;
  \item if this tagged unit is positive and selected by $f_b$, perform
  $\inc{Z_2,u,B_b}$.
\end{enumerate}

\paragraph{Query.}
If the current clause $C$ is not target-shape, do nothing. Otherwise let
\[
  u:=v_j(C),
  \qquad
  v:=v_t(C).
\]
The Lower-Lower-$(\sigma,\tau)$-Sketch applies the pair-query to
\[
  \Bigl(Z_1\bigl(u,B_a\sigma+D_a+1\bigr),
        Z_2\bigl(v,B_b\tau+D_b+1\bigr)\Bigr).
\]
The other three sub-sketches use the same pair-query with the corresponding threshold:
\[
\begin{aligned}
&\Bigl(Z_1\bigl(u,B_a\sigma+D_a+1\bigr),
       Z_2\bigl(v,B_b\tau+D_{b+1}+1\bigr)\Bigr),\\
&\Bigl(Z_1\bigl(u,B_a\sigma+D_{a+1}+1\bigr),
       Z_2\bigl(v,B_b\tau+D_b+1\bigr)\Bigr),\\
&\Bigl(Z_1\bigl(u,B_a\sigma+D_{a+1}+1\bigr),
       Z_2\bigl(v,B_b\tau+D_{b+1}+1\bigr)\Bigr).
\end{aligned}
\]     
Denote the four outputs of pair-query by
\[
  \chi^{\ell\ell}_{\sigma,\tau}(C),
  \qquad
  \chi^{\ell u}_{\sigma,\tau}(C),
  \qquad
  \chi^{u\ell}_{\sigma,\tau}(C),
  \qquad
  \chi^{uu}_{\sigma,\tau}(C) \in \{-1,0,1\}.
\]
If the output is $0$, the sketch continues. If the output is $\pm1$, the quantum part of that sub-sketch stops, and we store the nonzero output sign together with $u$, $v$, $\sigma$, and $\tau$. During the rest of the stream, we count
\[
  d_u^{>C},\ p_u^{>C},\ d_v^{>C},\ p_v^{>C}
\]
exactly using classical counters.

The no-wrap lemma below justifies these threshold tests before each query.

\paragraph{Output.}
If the pair-query output is nonzero, the sub-sketch computes\[
\widetilde b_{u} := \clip_{[-1,1]}
\left(
\frac{2\left(\frac{2D_a}{\kappa}\sigma+p_u^{>C}\right)}{D_a+d_u^{>C}}
-1+g(u)
\right),\qquad \widetilde b_{v}:= \clip_{[-1,1]}
\left(
\frac{2\left(\frac{2D_b}{\kappa}\tau+p_v^{>C}\right)}{D_b+d_v^{>C}}
-1+g(v)
\right), 
\]
and 
\[
\begin{aligned}
F_{\sigma,\tau}(C):=
\1 \left[\widetilde b_{u} \in I_{i_\alpha},\;
\widetilde b_{v} \in I_{i_\beta} \right].
\end{aligned}
\]
This is exactly the indicator that the two endpoints have sign-bucket pairs $(\alpha,\beta)$, when $\sigma$ and $\tau$ are used as the candidate sampled positive-prefix counts for the two endpoints. The four final outputs  are
\[
  Y^{\ell\ell}_{\sigma,\tau}(C)
  :=\frac M2\,\chi^{\ell\ell}_{\sigma,\tau}(C)F_{\sigma,\tau}(C),
  \qquad
  Y^{\ell u}_{\sigma,\tau}(C)
  :=\frac M2\,\chi^{\ell u}_{\sigma,\tau}(C)F_{\sigma,\tau}(C),
\]
\[
  Y^{u\ell}_{\sigma,\tau}(C)
  :=\frac M2\,\chi^{u\ell}_{\sigma,\tau}(C)F_{\sigma,\tau}(C),
  \qquad
  Y^{uu}_{\sigma,\tau}(C)
  :=\frac M2\,\chi^{uu}_{\sigma,\tau}(C)F_{\sigma,\tau}(C).
\]
If a sub-sketch never has a successful pair-query, its output is $0$.

\paragraph{Correctness.}

The same no-wrap issue appears here. Each two-endpoint sub-sketch has two label families, $Z_1$ and $Z_2$, but they share one pointer. The analysis only needs the following fact: each queried label represents its intended threshold inequality as long as the sub-sketch performs fewer than $M$ elementary shifts in total and every queried level lies in $[M]$. Under these two conditions, the pointer never returns to an earlier spare position, so the queried labels still represent the thresholds of $\Gamma_a$ and $\Gamma_b$. The following lemma verifies these two conditions for all two-endpoint sub-sketches.

\begin{lemma}[No-wrap event for two-endpoint sketches]\label{lem:two-endpoint-label-budget}
Choose the constant $C_\eps$ in $M=\lceil C_{\eps} m\rceil$ large enough. Then every level queried in these
sub-sketches lies in $[M]$. Moreover, there is an event $\mathcal G_2$, depending only on the public hash functions, such that
\[
\Pr[\mathcal G_2]\ge 1-\frac1{160},
\]
and on $\mathcal G_2$, in every sub-sketch appearing in the two-endpoint estimation, the total number of elementary shifts applied during all update steps, counting both the $Z_1$- and $Z_2$-families, is less than $M$.
\end{lemma}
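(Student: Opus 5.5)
The plan is to mirror the proof of Lemma~\ref{lem:good_label_budget} and separate the deterministic level check from the probabilistic shift count. For the levels, the two-endpoint sketches query $Z_1$-levels of the form $B_a\sigma+D_a+1$ or $B_a\sigma+D_{a+1}+1$ with $\sigma\le\kappa-1$, and analogous $Z_2$-levels with $b$ and $\tau$. In the mixed or low-degree cases of Section~\ref{sec:othercases}, they query levels $B_s c+d$ or $B_s c+d+1$ with $D_s<d\le D_{s+1}$ and $c\le d$. Since $D_{s+1}\le D_A=O_\eps(m)$, $B_s\le 2K_{\mathrm{base}}D_{s+1}+1=O_\eps(D_{s+1})$, and $\kappa=O_\eps(1)$, every queried level is $O_\eps(m)$ on high-degree scales and $O_\eps(1)$ on low-degree scales. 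Enlarging $C_\eps$ therefore puts all of them in $[M]$, with no randomness involved.

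For the shift budget, I would fix a sub-sketch associated with a scale pair $(a,b)$ and split its elementary shifts into three parts: the unconditional one-step updates on $Z_1$ and $Z_2$, the $B_a$-jumps on $Z_1$, and the $B_b$-jumps on $Z_2$. The first part is at most $2\cdot 4m$ deterministically. If scale $a$ is low-degree, the $Z_1$-jumps occur on every positive tagged unit, but $B_a=O_\eps(1)$, so they contribute $O_\eps(m)$ deterministically. If scale $a$ is high-degree, the $Z_1$-jumps number $B_aX_a$, where $X_a$ is the number of positive tagged units selected by $f_a$. This is exactly the quantity controlled in Lemma~\ref{lem:good_label_budget}. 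I would define $\mathcal G_2$ to be the event that $X_s\le K_\eps m/D_s$ for every high-degree scale $s$. On this event, $B_aX_a\le(2K_{\mathrm{base}}D_{a+1}+1)K_\eps m/D_a=O_\eps(m)$ by $D_{a+1}\le(1+\eps^3)D_a+1$. The same bound applies to the $Z_2$-side with $b$. The total is then $O_\eps(m)$ uniformly over all $(a,b)$, sign-bucket pairs, positions and $(\sigma,\tau)$, because the event depends only on the scales. Enlarging $C_\eps$ once more makes this total less than $M$.

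The probability bound repeats the dyadic-sum argument. By Chernoff, $\Pr[X_s>K_\eps m/D_s]\le\exp(-c_\eps m/D_s)$. There are $O_\eps(1)$ scales per dyadic range of $D_s$, plus $O_\eps(1)$ scales with $D_s>m$, so $\sum_s\Pr[\cdot]\le C'_\eps\sum_{r\ge0}\exp(-c'_\eps2^r)<1/160$ once $K_\eps$ is large. In fact $\mathcal G_2$ can be taken equal to $\mathcal G_1$.

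I expect the only real subtlety to be the observation that the two families share a single pointer. The relevant budget is therefore the \emph{sum} of shifts across $Z_1$ and $Z_2$, not each family separately. The bound must also hold simultaneously for two possibly different scales $a\ne b$. Because the event is a single union over all scales, both concerns are handled by adding the two $O_\eps(m)$ bounds.
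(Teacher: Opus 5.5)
Your proposal is correct and follows essentially the same route as the paper: you check deterministically that all queried levels are $O_\eps(m)$, bound the unconditional and low-degree shifts by $O_\eps(m)$ deterministically, and control the high-degree jumps $B_aX_a$ on the scale-wise event $X_s\le K_\eps m/D_s$ via Chernoff plus the dyadic union bound, adding the $Z_1$ and $Z_2$ bounds because the two families share one pointer. Your remark that $\mathcal G_2$ can be taken equal to $\mathcal G_1$ is accurate, since the paper defines both events by the identical condition over all high-degree scales.
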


\begin{proof}
The stream contains at most $4m$ tagged units in total. If $m=0$, no update is applied and the claim is immediate. Assume $m\ge1$.

First, consider a low-degree side, say $D_a<\kappa/2$. Then $D_{a+1}=O_\eps(1)$. The deterministic one-step updates for this family apply $O(m)$ elementary shifts over the whole stream, and every positive tagged unit applies at most $B_a=O_\eps(1)$ additional elementary shifts. Therefore, this side applies $O_\eps(m)$ elementary shifts deterministically. The same argument applies to the second side when $D_b<\kappa/2$.

Now consider a high-degree scale $a$, so $D_a\ge\kappa/2$. Let $X_a$ be the number of positive tagged units selected by the public hash $f_a$ in the whole stream. There is some $N_a\le 4m$ such that
\[
X_a\sim \mathrm{Bin}\!\left(N_a,\frac{\kappa}{2D_a}\right),
\qquad
\mathbb E[X_a]\le \frac{2\kappa m}{D_a}.
\]
Choose a large constant $K_\eps$. A Chernoff bound gives
\[
\Pr\!\left[
X_a > K_\eps\frac{m}{D_a}
\right]
\le
\exp\!\left(-c_\eps\frac{m}{D_a}\right)
\]
for some constant $c_\eps>0$.

On the event
\[
X_a \le K_\eps\frac{m}{D_a},
\]
the extra number of elementary shifts coming from the sampled positive tagged units on this scale is at most
\[
B_aX_a
\le
\bigl(2K_{\mathrm{base}}D_{a+1}+1\bigr)\cdot K_\eps\frac{m}{D_a}
=
O_\eps(m),
\]
because $D_{a+1}\le (1+\eps^3)D_a+1$. The deterministic one-step updates contribute another $O(m)$ elementary shifts. If a two-endpoint sub-sketch has two high-degree sides, the total number of elementary shifts is the sum of the two corresponding bounds, and is still $O_\eps(m)$.

It remains only to check that the queried levels themselves are below $M$. On a low-degree side with scale $a$, the queried levels are
\[
B_a c+d,\qquad B_a c+d+1,
\]
with $D_a<d\le D_{a+1}$ and $0\le c\le d$. Since
$D_{a+1}=O_\eps(1)$, these levels are $O_\eps(1)$.

On a high-degree side with scale $a$, the queried levels are
\[
B_a\sigma+D_a+1,
\qquad
B_a\sigma+D_{a+1}+1
\quad (0\le \sigma\le \kappa-1),
\]
and the setup at the start of this section gives $D_{a+1}=O_\eps(m)$. Since $B_a=O_\eps(D_{a+1})$ and $\kappa=O_\eps(1)$, these levels are $O_\eps(m)$. The same bounds hold for the second side. After enlarging $C_\eps$ if needed, all queried levels in both families are below $M$ deterministically.

The event above depends only on the scale $a$. It controls the number of elementary shifts caused by selected positive tagged units for every two-endpoint sub-sketch that uses scale $a$ on either  side, because these sub-sketches use the same selected positive tagged units.

Now let $\mathcal G_2$ be the event that $X_a \le K_\eps\frac{m}{D_a}$ holds for every high-degree scale $a$. Low-degree sides need no probability bound, as shown above. Because the scales are geometric, there are only $O_\eps(1)$ scales in each dyadic range $D_a\in[m/2^{r+1},m/2^r]$. There are also $O_\eps(1)$ high-degree scales with $m<D_a\le4m$. Their total failure probability can be made no larger than the first term of the dyadic sum by increasing $K_\eps$. Thus, the following union bound covers all high-degree scales:
\[
\sum_{a:\,D_a\ge\kappa/2}\Pr\!\left[
X_a > K_\eps\frac{m}{D_a}
\right]
\le
C'_\eps\sum_{r\ge 0}\exp(-c'_\eps 2^r)
<
\frac1{160}
\]
after enlarging $K_\eps$ if needed. $C'_
{\eps}$ and $c'_\eps$ are constants from the  Chernoff bound.  Therefore
\[
\Pr[\mathcal G_2]\ge 1-\frac1{160}.
\]
On $\mathcal G_2$, every two-endpoint sub-sketch applies fewer than $M$ elementary shifts, and all queried levels in both families are below $M$. This proves the lemma.
\end{proof}

As in the one-endpoint case, the correctness argument only needs the fact from
Lemma~\ref{lem:nonwrapping-threshold-invariant} that a fixed queried label has
the intended threshold meaning before each query. Work on $\mathcal G_2$, fix one elementary sub-sketch, and let $T_a$ and $T_b$ be the two thresholds queried by that sub-sketch. Immediately before a query on a target-shape clause $C$,
\[
Z_1(u,T_a)\in T
\iff
\Gamma_a(d_u^{\le C},s_u^{\le C})\ge T_a,
\]
\[
Z_2(v,T_b)\in T
\iff
\Gamma_b(d_v^{\le C},s_v^{\le C})\ge T_b.
\]
A failed pair-query can delete one queried label from the $Z_1$-family, one queried label from the $Z_2$-family, or both. Before any later query of the same label, the next clause containing the relevant variable applies the unconditional one-step update on that label family, even if the tagged unit is not positive or is not selected. Therefore the fixed queried level is restored before it is tested again. Each elementary sub-sketch fixes one queried threshold in each label family throughout the run.

\paragraph{Estimation of $P^{(a,b)}_{\ell,j,t,\alpha,\beta}$.}
All the following argument is on the event $\mathcal G_2$ from
Lemma~\ref{lem:two-endpoint-label-budget}.

Extend the  outputs of the four sub-sketches by $0$ to every clause that is not target-shape.
For each target-shape clause $C$, define
\[
\begin{aligned}
\widehat P^{(a,b)}_{\ell,j,t,\alpha,\beta}(C)
:=\sum_{\sigma=0}^{\kappa-1}\sum_{\tau=0}^{\kappa-1}
\Bigl(
Y^{\ell\ell}_{\sigma,\tau}(C)
-Y^{\ell u}_{\sigma,\tau}(C)
-Y^{u\ell}_{\sigma,\tau}(C)
+Y^{uu}_{\sigma,\tau}(C)
\Bigr),
\end{aligned}
\]
and then set
\[
  \widehat P^{(a,b)}_{\ell,j,t,\alpha,\beta}
  :=
  \sum_C \widehat P^{(a,b)}_{\ell,j,t,\alpha,\beta}(C).
\]

Recall that $\mathsf{Tgt}^{\ell,j,t,\alpha,\beta}_{C}$ denotes the event that
$|C|=\ell$, the $j$-th literal has sign $s_\alpha$, and the $t$-th
literal has sign $s_\beta$, i.e. $C$ is the target-type for $P^{(a,b)}_{\ell,j,t,\alpha,\beta}$.

Fix $\sigma\in\Sigma_a$, $\tau\in\Sigma_b$, and one of the four sub-sketches. If the live set just before its query on $C$ has size $M'-1$, then by the pair-query property we have
\[
\begin{aligned}
\mathbb E\!
\left[
Y^{\ell\ell}_{\sigma,\tau}(C)
\,\middle|\,
\Qstate{T}
\right]
=
\frac{M}{M'}F_{\sigma,\tau}(C)
\cdot
\1\Bigl[\mathsf{Tgt}^{\ell,j,t,\alpha,\beta}_{C}, \Gamma_a(d_u^{\le C},s_u^{\le C})\ge B_a\sigma+D_a+1,
\Gamma_b(d_v^{\le C},s_v^{\le C})\ge B_b\tau+D_b+1\Bigr].
\end{aligned}
\]
Exactly as in the one-endpoint case, the probability that the sub-sketch is still active just before that query is $M'/M$, so the unconditional expectation is the same indicator without the prefactor $M/M'$. The corresponding formulas for $Y^{\ell u}_{\sigma,\tau}(C)$, $Y^{u\ell}_{\sigma,\tau}(C)$, and $Y^{uu}_{\sigma,\tau}(C)$ are obtained by replacing the lower threshold on the first endpoint, the second endpoint, or both, by the corresponding upper threshold.

For the first endpoint, assume that $v_j(C)$ does not impersonate scale $a$. Recall that $a(C,j)$ means the true scale of
the first queried endpoint $v_j(C)$.  Then the difference of the two threshold indicators for the first endpoint has the direct interpretation
\[
\begin{aligned}
&\1[\Gamma_a(d_u^{\le C},s_u^{\le C})\ge B_a\sigma+D_a+1]
-\1[\Gamma_a(d_u^{\le C},s_u^{\le C})\ge B_a\sigma+D_{a+1}+1]\\
&\qquad=
\begin{cases}
\1[D_a<d_u^{\le C}\le D_{a+1},\ s_u^{\le C}=\sigma],
& a=a(C,j)\text{ and }s_u^{\le C}<\kappa,\\
0,
& a\neq a(C,j)\text{ or }s_u^{\le C}\ge\kappa.
\end{cases}
\end{aligned}
\]
Indeed, if $a=a(C,j)$ and $s_u^{\le C}<\kappa$, this is exactly \eqref{eq:one-endpoint-ordinary-high-id}. If $a\neq a(C,j)$ and the endpoint does not impersonate scale $a$, then this difference is $0$. To see this, if $d_u^{\le C}\le D_a$, the two threshold indicators are equal; if $d_u^{\le C}>D_{a+1}$, a nonzero difference would force scale-$a$
impersonation, contrary to the assumption. If $a=a(C,j)$ but $s_u^{\le C}\ge\kappa$, then for every $\sigma\in\{0,1,\ldots,\kappa-1\}$ both indicators are $1$, so the difference is again $0$. The same statement holds for the second endpoint with $(b,\tau)$ in place of $(a,\sigma)$.

Therefore, whenever neither endpoint impersonates its queried scale, we have
\[
\begin{aligned}
&\mathbb E\Bigl[
Y^{\ell\ell}_{\sigma,\tau}(C)
-Y^{\ell u}_{\sigma,\tau}(C)
-Y^{u\ell}_{\sigma,\tau}(C)
+Y^{uu}_{\sigma,\tau}(C)
\Bigr]\\
&\qquad=
F_{\sigma,\tau}(C)\cdot 
\1[\mathsf{Tgt}^{\ell,j,t,\alpha,\beta}_{C}]\\
&\qquad\quad\cdot
\1\bigl[a=a(C,j),\ D_a<d_u^{\le C}\le D_{a+1},\ s_u^{\le C}=\sigma<\kappa\bigr]\\
&\qquad\quad\cdot
\1\bigl[b=a(C,t),\ D_b<d_v^{\le C}\le D_{b+1},\ s_v^{\le C}=\tau<\kappa\bigr].
\end{aligned}
\]
Whenever the last two indicators are $1$, the prefix values substituted
in $F_{\sigma,\tau}(C)$ are exactly those in the weighted pseudobias of
Definition~\ref{def:maxksat-pseudobias}. By the sign-bucket pair definition in
Definition~\ref{def:maxksat-pseudosnapshot},
\[
  F_{\sigma,\tau}(C)
  =
  \1\bigl[(\lambda_j(C),\lambda_t(C))=(\alpha,\beta)\bigr].
\]
Summing over $\sigma$ and $\tau$ gives
\[
  \mathbb E\bigl[
  \widehat P^{(a,b)}_{\ell,j,t,\alpha,\beta}(C)
  \bigr]
  =
  \1\bigl[C\text{ contributes to }
  P^{(a,b)}_{\ell,j,t,\alpha,\beta}\bigr]
\]
whenever neither endpoint impersonates the queried scale, and neither relevant sampled positive-prefix count overflows.

In all cases, we use a crude bound of absolute value at most $1$. For a fixed endpoint and fixed queried scale, at most one value of the index can make the corresponding difference of the two threshold indicators nonzero, because the intervals
\[
  [B_a\sigma+D_a+1,\,B_a\sigma+D_{a+1}]
\]
are disjoint. The same holds for the second endpoint.

Recall that $\mathsf{Imp}^{a}_{C,r}$ denotes the event that endpoint $(C,r)$
impersonates scale $a$. Since $0\le F_{\sigma,\tau}(C)\le1$, every clause contributes at most $1$ in absolute value to one fixed queried coordinate. Therefore, for every clause $C$, 
\[
\begin{aligned}
&\left|
\mathbb E\bigl[
\widehat P^{(a,b)}_{\ell,j,t,\alpha,\beta}(C)
\bigr]
-
\1\bigl[C\text{ contributes to }
P^{(a,b)}_{\ell,j,t,\alpha,\beta}\bigr]
\right|\\
&\le
\1[\mathsf{Tgt}^{\ell,j,t,\alpha,\beta}_{C}]\Bigl(
\1[\mathsf{Imp}^{a}_{C,j}]
+\1[\mathsf{Imp}^{b}_{C,t}]
+\1[a=a(C,j),\ s^{\le C}_{v_j(C)}\ge\kappa]
+\1[b=a(C,t),\ s^{\le C}_{v_t(C)}\ge\kappa]
\Bigr).
\end{aligned}
\]
Consequently,
\begin{equation}\label{eq:two-endpoint-high-coordinate-total}
\begin{aligned}
&
\left|
\mathbb E\!\left[
  \widehat P^{(a,b)}_{\ell,j,t,\alpha,\beta}
\right]
-
P^{(a,b)}_{\ell,j,t,\alpha,\beta}
\right|                                      \\
&\le
\sum_{C\in\Phi_{\le3}}
\1[\mathsf{Tgt}^{\ell,j,t,\alpha,\beta}_{C}]\Bigl(
\1[\mathsf{Imp}^{a}_{C,j}]
+\1[\mathsf{Imp}^{b}_{C,t}]
+\1[a=a(C,j),\ s^{\le C}_{v_j(C)}\ge\kappa]
+\1[b=a(C,t),\ s^{\le C}_{v_t(C)}\ge\kappa]
\Bigr)
\end{aligned}
\end{equation}

\paragraph{Why overflow remains.}
The coordinate output only queries sampled counts $0,1,\ldots,\kappa-1$. If a true high-degree endpoint has sampled positive-prefix count at least $\kappa$, then that clause may still belong to the true coordinate $P^{(a,b)}_{\ell,j,t,\alpha,\beta}$, but each queried difference of threshold indicators on that side is zero. The final proof bounds the total loss from these endpoints by the overflow weight $W_{\mathrm{of}}$.

\subsubsection{Other cases}\label{sec:othercases}

Assume now that at least one of $D_a$ and $D_b$ is low-degree, namely $D_a<\kappa/2$ or $D_b<\kappa/2$. We keep the same four sub-sketches, the same register layout, and the same final inclusion-exclusion formula as above. Only the levels, the positive-count updates, and the queried positions change.

For each endpoint we use the exact positive-prefix count on a low-degree scale and the sampled positive-prefix count on a high-degree scale. Accordingly, set
\[
\Sigma_a:=
\begin{cases}
\{0,1,\ldots,\kappa-1\},
& D_a\ge\kappa/2,\\
\{(d_1,c_1): D_a<d_1 \le D_{a+1},\ 0\le c_1 \le d_1\},
& D_a< \kappa/2,
\end{cases}
\]
\[
\Sigma_b:=
\begin{cases}
\{0,1,\ldots,\kappa-1\},
& D_b\ge\kappa/2,\\
\{(d_2,c_2): D_b<d_2 \le D_{b+1},\ 0\le c_2 \le d_2 \},
& D_b< \kappa/2.
\end{cases}
\]

\paragraph{Update.}
For every tagged unit on a variable $u$, first apply the degree update $\inc{Z_1,u,1}$ and $\inc{Z_2,u,1}$. Then update the positive count on each side separately. On the first side, apply $\inc{Z_1,u,B_a}$ if the tagged unit is positive and either $D_a<\kappa/2$, or $D_a\ge\kappa/2$ and the unit is selected by $f_a$. On the second side, apply $\inc{Z_2,u,B_b}$ under the same rule with $b$ and $f_b$. Thus, when both sides are low-degree, a positive tagged unit triggers the $B_a$-step update on $Z_1$ and the $B_b$-step update on $Z_2$.

\paragraph{Query.}
If the current clause $C$ is not target-shape, do nothing. Otherwise write
\[
u:=v_j(C),
\qquad
v:=v_t(C).
\]
If the first endpoint is low-degree and $\sigma=(d_1,c_1)\in\Sigma_a$, then the lower and upper queried positions on the first family labels are
\[
  Z_1\bigl(u,B_a c_1 +d_1 \bigr),
  \qquad
  Z_1\bigl(u,B_a c_1 +d_1 +1\bigr).
\]
If the first endpoint is high-degree $D_a\ge\kappa/2$ and $\sigma\in\{0,1,\ldots,\kappa-1\}$, we query like the previous section:
\[
  Z_1\bigl(u,B_a\sigma+D_a+1\bigr),
  \qquad
  Z_1\bigl(u,B_a\sigma+D_{a+1}+1\bigr).
\]
The second endpoint is treated analogously: if $D_b<\kappa/2$ and $\tau=(d_2,c_2)\in\Sigma_b$, use
\[
  Z_2\bigl(v,B_b c_2+d_2 \bigr),
  \qquad
  Z_2\bigl(v,B_b c_2+d_2 +1\bigr),
\]
and if $D_b\ge\kappa/2$ and $\tau\in\{0,1,\ldots,\kappa-1\}$, use
\[
  Z_2\bigl(v,B_b\tau+D_b+1\bigr),
  \qquad
  Z_2\bigl(v,B_b\tau+D_{b+1}+1\bigr).
\]
The four sub-sketches again apply pair-query to the lower/lower, lower/upper, upper/lower, and upper/upper choices. If such a pair-query has output $\pm1$, that sub-sketch stops and stores the nonzero sign, the two variables, and the indices $\sigma,\tau$; the suffix degrees and suffix positive counts are then counted exactly as in the high-high case.

\paragraph{Output.}
After a successful pair-query on a clause $C$, compute $F_{\sigma,\tau}(C)$ by using the following candidate prefix values. If $D_a< \kappa/2$ and $\sigma = (d_1,c_1)\in\Sigma_a$, use
\[
\widehat d_u=d_1,
\qquad
\widehat p_u=c_1.
\]
If $D_a\ge\kappa/2$ and $\sigma\in\{0,1,\ldots,\kappa-1\}$, use
\[
\widehat d_u=D_a,
\qquad
\widehat p_u=\frac{2D_a}{\kappa}\sigma.
\]
If $D_b< \kappa/2$ and $\tau=(d_2,c_2)\in\Sigma_b$, use
\[
\widehat d_v=d_2,
\qquad
\widehat p_v=c_2.
\]
If $D_b\ge\kappa/2$ and $\tau\in\{0,1,\ldots,\kappa-1\}$, use
\[
\widehat d_v=D_b,
\qquad
\widehat p_v=\frac{2D_b}{\kappa}\tau.
\]
Then set
\[
\begin{aligned}
F_{\sigma,\tau}(C):=
\1\Biggl[&\clip_{[-1,1]}
\left(
\frac{2(\widehat p_u+p_u^{>C})}{\widehat d_u+d_u^{>C}}
-1+g(u)
\right)\in I_{i_\alpha},\\
&\clip_{[-1,1]}
\left(
\frac{2(\widehat p_v+p_v^{>C})}{\widehat d_v+d_v^{>C}}
-1+g(v)
\right)\in I_{i_\beta}\Biggr].
\end{aligned}
\]

The four final outputs are
\[
Y^{\ell\ell}_{\sigma,\tau}(C)
:=
\frac{M}{2}\chi^{\ell\ell}_{\sigma,\tau}(C)F_{\sigma,\tau}(C),
\qquad
Y^{\ell u}_{\sigma,\tau}(C)
:=
\frac{M}{2}\chi^{\ell u}_{\sigma,\tau}(C)F_{\sigma,\tau}(C),
\]
\[
Y^{u\ell }_{\sigma,\tau}(C)
:=
\frac{M}{2}\chi^{u\ell}_{\sigma,\tau}(C)F_{\sigma,\tau}(C),
\qquad
Y^{uu}_{\sigma,\tau}(C)
:=
\frac{M}{2}\chi^{uu}_{\sigma,\tau}(C)F_{\sigma,\tau}(C).
\]
If the sub-sketch never has a successful pair-query, its output is $0$.

\medskip

We now give some notations that cover all remaining cases.

\paragraph{Endpoint formulas.}
For the analysis of a target-shape clause $C$, write
\[
u:=v_j(C),
\text{ and }
v:=v_t(C).
\]
For the first endpoint, define
\[
N_a(C) :=
\begin{cases}
\Gamma_a(d_u^{\le C},s_u^{\le C}), & D_a\ge\kappa/2,\\
\Gamma_a(d_u^{\le C},p_u^{\le C}), & D_a< \kappa/2.
\end{cases}
\]
Define $N_b(C)$ analogously for the second endpoint, using $\Gamma_b$, $d_v^{\le C}$, and either $s_v^{\le C}$ or $p_v^{\le C}$ according to whether $D_b$ is high- or low-degree.

If $D_a\ge\kappa/2$ and $\sigma\in\{0,1,\dots,\kappa-1\}$, set
\[
L_a(\sigma)=B_a\sigma+D_a+1,
\qquad
U_a(\sigma)=B_a\sigma+D_{a+1}+1.
\]
Define
\[
\Delta_a(C,\sigma)
:=
\1[N_a(C)\ge L_a(\sigma)]
-
\1[N_a(C)\ge U_a(\sigma)].
\]
Whenever the first endpoint does not impersonate scale $a$, we have 
\[
\Delta_a(C,\sigma)
=
\begin{cases}
\mathbf{1}[D_a<d_u^{\le C}\le D_{a+1},\ s_u^{\le C}=\sigma],
& a=a(C,j)\text{ and }s_u^{\le C}<\kappa,\\
0,
& a\neq a(C,j)\text{ or }s_u^{\le C}\ge\kappa.
\end{cases}
\]
For the zero case with $a\neq a(C,j)$, the same check applies: degrees below the queried scale make the two threshold indicators equal, while degrees above the queried scale would force impersonation if the difference of threshold indicators were nonzero.

If $D_a< \kappa/2$ and $\sigma=(d_1,c_1)$, set
\[
L_a(\sigma)=B_a c_1+d_1,
\qquad
U_a(\sigma)=B_a c_1+d_1+1.
\]
Whenever the first endpoint does not impersonate scale $a$, we have 
\[
\begin{aligned}
&\Delta_a(C,\sigma)
=\begin{cases}
\mathbf{1}[d_u^{\le C}=d_1,\ p_u^{\le C}=c_1], & a=a(C,j),\\
0, & a\neq a(C,j).
\end{cases}
\end{aligned}
\]
The second endpoint is identical; the functions $L_b(\tau)$ and $U_b(\tau)$ are defined by the same formulas with $b,v,\tau$ in place of $a,u,\sigma$.

On $\mathcal G_2$, Lemma~\ref{lem:nonwrapping-threshold-invariant} says that each queried label is live exactly when its threshold inequality is true. Thus, for a target-shape clause $C$,
\[
Z_1(u,L_a(\sigma))\in T\iff N_a(C)\ge L_a(\sigma),
\qquad
Z_1(u,U_a(\sigma))\in T\iff N_a(C)\ge U_a(\sigma),
\]
and the same statement holds for $Z_2$ with $N_b,L_b,U_b$. This is the only live-label fact used in the expectation calculation below.

\paragraph{Estimation.}
All the following argument is on the event $\mathcal G_2$ from Lemma~\ref{lem:two-endpoint-label-budget}.

For each clause $C$, define
\[
\widehat P^{(a,b)}_{\ell,j,t,\alpha,\beta}(C)
:=
\sum_{(\sigma,\tau)\in\Sigma_a\times\Sigma_b}
\Bigl(
Y^{\ell\ell}_{\sigma,\tau}(C)
-Y^{\ell u}_{\sigma,\tau}(C)
-Y^{u\ell}_{\sigma,\tau}(C)
+Y^{uu}_{\sigma,\tau}(C)
\Bigr),
\]
and then set
\[
\widehat P^{(a,b)}_{\ell,j,t,\alpha,\beta}
:=
\sum_C
\widehat P^{(a,b)}_{\ell,j,t,\alpha,\beta}(C).
\]

For a clause that is not target-shape, all four outputs are $0$, so the bound below is immediate. Now fix a target-shape clause $C$ and one pair $(\sigma,\tau)\in\Sigma_a\times\Sigma_b$, and write
\[
u:=v_j(C),
\qquad
v:=v_t(C).
\]
Suppose the live set just before the query in the current sub-sketch has size
$M'-1$.

The Lower-Lower-$(\sigma, \tau)$-Sketch queries the pair
\[
\bigl(Z_1(u,L_a(\sigma)),\,Z_2(v,L_b(\tau))\bigr).
\]
So the pair-query formula gives
\[
\begin{aligned}
\mathbb E[Y^{\ell\ell}_{\sigma,\tau}(C)\mid |Q(T)\rangle]
&=
\frac{M}{M'}F_{\sigma,\tau}(C)\cdot 
\1[\mathsf{Tgt}^{\ell,j,t,\alpha,\beta}_{C}] \cdot
\1[N_a(C)\ge L_a(\sigma)]\cdot
\1[N_b(C)\ge L_b(\tau)].
\end{aligned}
\]
Similarly,
\[
\begin{aligned}
\mathbb E[Y^{\ell u}_{\sigma,\tau}(C)\mid |Q(T)\rangle]
&=
\frac{M}{M'}F_{\sigma,\tau}(C)\cdot 
\1[\mathsf{Tgt}^{\ell,j,t,\alpha,\beta}_{C}]\cdot
\1[N_a(C)\ge L_a(\sigma)]\cdot
\1[N_b(C)\ge U_b(\tau)],
\end{aligned}
\]
\[
\begin{aligned}
\mathbb E[Y^{u\ell}_{\sigma,\tau}(C)\mid |Q(T)\rangle]
&=
\frac{M}{M'}F_{\sigma,\tau}(C)\cdot 
\1[\mathsf{Tgt}^{\ell,j,t,\alpha,\beta}_{C}]\cdot
\1[N_a(C)\ge U_a(\sigma)]\cdot
\1[N_b(C)\ge L_b(\tau)],
\end{aligned}
\]
\[
\begin{aligned}
\mathbb E[Y^{uu}_{\sigma,\tau}(C)\mid |Q(T)\rangle]
&=
\frac{M}{M'}F_{\sigma,\tau}(C)\cdot 
\1[\mathsf{Tgt}^{\ell,j,t,\alpha,\beta}_{C}]\cdot
\1[N_a(C)\ge U_a(\sigma)]\cdot
\1[N_b(C)\ge U_b(\tau)].
\end{aligned}
\]

By Lemma~\ref{lem:active_prob}, the probability that this sub-sketch is still active just before the query is $M'/M$. Therefore
\[
\mathbb E\!\left[
Y^{\ell\ell}_{\sigma,\tau}(C)
-Y^{\ell u}_{\sigma,\tau}(C)
-Y^{u\ell}_{\sigma,\tau}(C)
+Y^{uu}_{\sigma,\tau}(C)
\right]
=
F_{\sigma,\tau}(C)\cdot 
\1[\mathsf{Tgt}^{\ell,j,t,\alpha,\beta}_{C}] \cdot
\Delta_a(C,\sigma) \cdot \Delta_b(C,\tau),
\]

Say that the first endpoint is consistent with $\sigma$ on scale $a$ if its prefix data agree with the index $\sigma$ on scale $a$: on a high-degree side this means
\[
D_a<d_u^{\le C}\le D_{a+1},\qquad s_u^{\le C}=\sigma,
\]
and on a low-degree side, if $\sigma=(d_1,c_1)$, it means
\[
d_u^{\le C}=d_1,\qquad p_u^{\le C}=c_1.
\]
Define consistency with $\tau$ on the second endpoint in the same way. Assume that neither endpoint impersonates its queried scale. Then the endpoint formulas above identify the two differences of the two threshold indicators exactly. If, in addition, every relevant high-degree sampled positive-prefix count is below $\kappa$ (that is, if $a=a(C,j)$ and $D_a\ge\kappa/2$ then $s_u^{\le C}<\kappa$, and likewise on the second endpoint), we obtain

\[
\begin{aligned}
&\mathbb E\Bigl[
Y^{\ell\ell}_{\sigma,\tau}(C)
-Y^{\ell u}_{\sigma,\tau}(C)
-Y^{u\ell}_{\sigma,\tau}(C)
+Y^{uu}_{\sigma,\tau}(C)
\Bigr]\\
&=F_{\sigma,\tau}(C)\cdot
\1\bigl[    \mathsf{Tgt}^{\ell,j,t,\alpha,\beta}_{C}\bigr]\cdot \1[\text{ the first endpoint is consistent with  }\sigma]\\
&\qquad\quad\;\;\;\cdot
\1\bigl[\text{the second endpoint is consistent with }\tau\bigr].
\end{aligned}
\]
Whenever both endpoints are consistent with their indices, the prefix
values substituted in $F_{\sigma,\tau}(C)$ are the true prefix values from
Definition~\ref{def:maxksat-pseudobias}. Hence, by the sign-bucket pair definition in
Definition~\ref{def:maxksat-pseudosnapshot},
\[
F_{\sigma,\tau}(C)
=
\1[(\lambda_j(C),\lambda_t(C))=(\alpha,\beta)]
\]
on that event. Summing over all $(\sigma,\tau)\in\Sigma_a\times\Sigma_b$ gives
\[
\mathbb E\bigl[
\widehat P^{(a,b)}_{\ell,j,t,\alpha,\beta}(C)
\bigr]
=
\1\bigl[
C\text{ contributes to }P^{(a,b)}_{\ell,j,t,\alpha,\beta}
\bigr]
\]
whenever neither endpoint impersonates the queried scale and no relevant high-degree endpoint overflows.

In all cases, the absolute value of the contribution of this fixed clause to
this fixed coordinate is at most $1$. For a fixed endpoint and fixed scale, at most one index in $\Sigma_a$ can make the difference of the two threshold indicators nonzero: in the high-degree case the intervals
\[
  [B_a\sigma+D_a+1,\,B_a\sigma+D_{a+1}]
\]
are disjoint, and in the low-degree case the singleton levels
$B_a c_1+d_1$ are distinct because $B_a>D_{a+1}$. The same holds for the
second endpoint, with levels $B_b c_2+d_2$. Hence, for every clause $C$,

\[
\begin{aligned}
&\left|
\mathbb E\bigl[
\widehat P^{(a,b)}_{\ell,j,t,\alpha,\beta}(C)
\bigr]
-
\1\bigl[
C\text{ contributes to }P^{(a,b)}_{\ell,j,t,\alpha,\beta}
\bigr]
\right|\\
&\le
\1[\mathsf{Tgt}^{\ell,j,t,\alpha,\beta}_{C}]\cdot \Bigl(
\1[\mathsf{Imp}^{a}_{C,j}]
+\1[\mathsf{Imp}^{b}_{C,t}]
+\1[a=a(C,j),\ D_a\ge\kappa/2,\ s^{\le C}_{v_j(C)}\ge\kappa]\\&\qquad\qquad\qquad\quad\quad
+\1[b=a(C,t),\ D_b\ge\kappa/2,\ s^{\le C}_{v_t(C)}\ge\kappa]
\Bigr).
\end{aligned}
\]
Consequently,
\begin{equation}\label{eq:two-endpoint-remaining-coordinate-total}
\begin{aligned}
&\left|
\mathbb E\bigl[
\widehat P^{(a,b)}_{\ell,j,t,\alpha,\beta}
\bigr]
-
P^{(a,b)}_{\ell,j,t,\alpha,\beta}
\right|\\
&\le
\sum_{C\in\Phi_{\le3}}
\1[\mathsf{Tgt}^{\ell,j,t,\alpha,\beta}_{C}]\cdot \Bigl(
\1[\mathsf{Imp}^{a}_{C,j}]
+\1[\mathsf{Imp}^{b}_{C,t}]
+\1[a=a(C,j),\ D_a\ge\kappa/2,\ s^{\le C}_{v_j(C)}\ge\kappa]\\&\qquad\qquad\qquad\qquad\qquad\quad
+\1[b=a(C,t),\ D_b\ge\kappa/2,\ s^{\le C}_{v_t(C)}\ge\kappa]
\Bigr).
\end{aligned}
\end{equation}

The first two terms upper-bound possible errors from endpoints counted by the
impersonation condition for one of the queried scales. The last two terms record
overflow on true high-degree endpoints.

\subsection{Estimating the weighted pseudosnapshot score}
\label{sec:proof_of_quantum_algorithm}

Run in parallel, during the same pass over the stream, all sketches defined for the one-endpoint and two-endpoint coordinates, together with the exact classical counter for $M_3$. Let $Z$ denote the output of one such full run after applying exactly the same linear combination as 
$L_{\le3}(\PsSnap(\Phi_{\le3}))$: 
\[
\begin{aligned}
Z
&: =
\sum_a\sum_\alpha q_\alpha\widehat U^{(a)}_{1,1,\alpha}
+
\sum_{a,b}\sum_{\alpha,\beta}
\bigl(1-(1-q_\alpha)(1-q_\beta)\bigr)
\widehat P^{(a,b)}_{2,1,2,\alpha,\beta}\\
&\quad
+c_0M_3
+
\sum_{j=1}^3\sum_a\sum_\alpha u_j(\alpha)\widehat U^{(a)}_{3,j,\alpha}
+
\sum_{1\le j<t\le3}\sum_{a,b}\sum_{\alpha,\beta}
  p_{jt}(\alpha,\beta)
  \widehat P^{(a,b)}_{3,j,t,\alpha,\beta}.
\end{aligned}
\]

Fix the public randomness used for the hash functions, smoothing noise, and the random bases $B_a$, and call it $\omega$. All quantities defined from the pseudosnapshot are then deterministic; we suppress their dependence on $\omega$. In the rest of the proof, whenever $\omega$ is fixed, every expectation and variance is taken only over the outcomes of the quantum measurements.

The constants are fixed in the following order. First, choose the positive integer $\kappa$ large enough for the sampled-prefix estimates and the overflow bound used below. With this $\kappa$ fixed, choose $K_{\mathrm{base}}$ for the impersonation bound, choose $K_\eps$ for the no-wrap tail bounds, and finally choose $C_\eps$ in $M=\lceil C_{\eps} m\rceil$ so that every update count and queried level fits in the finite register.

Let $\mathcal G_1$ be the good event from Lemma~\ref{lem:good_label_budget}, and let $\mathcal G_2$ be the good event from Lemma~\ref{lem:two-endpoint-label-budget}. Define
\[
\mathcal G:=\mathcal G_1\cap \mathcal G_2.
\]
By the two lemmas and a union bound,
\[
\Pr[\mathcal G^c]\le
\Pr[\mathcal G_1^c]+\Pr[\mathcal G_2^c]
\le
\frac1{160}+\frac1{160}
=
\frac1{80}.
\]

For every short-clause endpoint $(C,r)$, let $a(C,r)$ be its true scale, and let
\[
\mathcal I(C,r)
:=
\{s:\ s\text{ is nonempty},\ s\neq a(C,r),\
(C,r)\text{ impersonates scale }s\}
\]
be the set of wrong queried scales for which $(C,r)$ satisfies the
impersonation condition. Also write
\[
\mathcal R(C,r):=\{a(C,r)\}\cup\mathcal I(C,r)
\]
for the true scale of $(C,r)$ together with these wrong queried scales. Recall that $\mathsf{Imp}^{s}_{C,r}$ denotes the event that endpoint $(C,r)$
impersonates scale $s$. Define
\[
W_{\mathrm{imp}}
:=
\sum_{C\in\Phi_{\le3}}\sum_{r=1}^{|C|}
|\mathcal I(C,r)|
=
\sum_{C\in\Phi_{\le3}}\sum_{r=1}^{|C|}
\sum_{\substack{s\text{ nonempty}\\s\neq a(C,r)}}
\1[\mathsf{Imp}^{s}_{C,r}].
\]
Thus $W_{\mathrm{imp}}$ counts all endpoint-scale pairs that may contribute to a wrong queried scale. It may count too many endpoint-scale pairs, but it does not miss any endpoint-scale pair that gives a nonzero contribution to a wrong queried scale.

Let $W_{\mathrm{of}}$ be the total tagged weight of short-clause endpoints whose true scale $a(C,r)$ satisfies $2D_{a(C,r)}\ge\kappa$ and whose sampled positive-prefix count at that scale is at least $\kappa$:
\[
W_{\mathrm{of}}
:=
\sum_{C\in\Phi_{\le3}}\sum_{r=1}^{|C|}
w_{|C|}\,
\1\!\left[
  2D_{a(C,r)}\ge\kappa
  \text{ and }
  s_{v_r(C)}^{\le C}\ge\kappa
\right].
\]

The expectations in the next two auxiliary bounds are over the public randomness before $\omega$ is fixed. After these bounds, expectations again follow the convention stated at the start of the proof.

\begin{lemma}[Overflow weight bound]\label{lem:maxksat-overflow-weight-bound}
\[
\mathbb E[W_{\mathrm{of}}]\le e^{-\Omega(\kappa)}m .
\]
\end{lemma}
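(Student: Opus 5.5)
The plan is linearity of expectation over endpoints, followed by a per-endpoint binomial tail bound. The same computation already appears in the overflow part of the proof of Lemma~\ref{lem:maxksat-short-stability}; here it is applied to the definition of $W_{\mathrm{of}}$ with $a(C,r)$ the true scale.

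\emph{Step 1: linearity.} Since $w_{|C|}\le 4$,
\[
\mathbb E[W_{\mathrm{of}}]\le 4\sum_{C\in\Phi_{\le3}}\sum_{r=1}^{|C|}\Pr\!\left[2D_{a(C,r)}\ge\kappa,\ s^{\le C}_{v_r(C)}\ge\kappa\right].
\]
The number of endpoints is at most $3m$. So it suffices to show that each probability is $e^{-\Omega(\kappa)}$, uniformly over endpoints.

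\emph{Step 2: the per-endpoint distribution.} Fix an endpoint $(C,r)$ and write $v=v_r(C)$ and $a=a(C,r)$. The true scale $a$ is determined by the prefix degree $d_v^{\le C}$, which is a deterministic function of the stream. It does not depend on the public randomness. If $2D_a<\kappa$, the indicator is $0$. Otherwise $s_v^{\le C}$ counts the positive tagged units of $v$ up to $C$ that are selected by $f_a$. Since $f_a$ is fully independent with rate $\kappa/(2D_a)$, we get $s_v^{\le C}\sim\mathrm{Bin}(p_v^{\le C},\kappa/(2D_a))$. Using $p_v^{\le C}\le d_v^{\le C}\le D_{a+1}\le(1+\eps^3)D_a+1$ and $D_a\ge\kappa/2$, the mean is at most $\kappa(1+\eps^3)/2+1/2\le 0.6\kappa$ once $\kappa$ is a large constant.

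\emph{Step 3: the tail bound.} A multiplicative Chernoff bound with deviation factor at least $1/0.6-1=2/3$ gives $\Pr[s_v^{\le C}\ge\kappa]\le e^{-\Omega(\kappa)}$. Summing over all endpoints yields $\mathbb E[W_{\mathrm{of}}]\le 12e^{-\Omega(\kappa)}m=e^{-\Omega(\kappa)}m$. If a variable has no positive prefix units, then $s_v^{\le C}=0$ and the bound is trivial.

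\emph{Main obstacle.} The only subtle point is to confirm that conditioning on the true scale introduces no dependence on the hash. It does not, because $a(C,r)$ depends only on the stream, so the binomial law in Step 2 is exact rather than conditional. The other delicate point is the constant: the $+1$ from the floor in $D_{a+1}$ must be absorbed, and this is exactly why $2D_a\ge\kappa$ together with $\eps<1/100$ keeps the mean below $0.6\kappa$.
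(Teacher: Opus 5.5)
Your proof is correct and follows the paper's argument: linearity of expectation over endpoints, the exact binomial law of $s^{\le C}_{v_r(C)}$ at the stream-determined true scale with mean at most $0.6\kappa$, and a Chernoff tail, with your $4\cdot 3m$ weight count playing the role of the paper's total tagged weight bound $4m$. One harmless slip: $D_a\ge\kappa/2$ gives $\kappa/(2D_a)\le 1$, not $1/2$, so the mean bound is $\kappa(1+\eps^3)/2+1$, which is still below $0.6\kappa$ once $\kappa$ is a large constant.
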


\begin{proof}
Fix one short-clause endpoint $(C,r)$, and let $a=a(C,r)$. If $2D_a<\kappa$, then this endpoint is not counted by $W_{\mathrm{of}}$. If $2D_a\ge\kappa$, then
\[
s_{v_r(C)}^{\le C}
\sim
\operatorname{Bin}\!\left(p_{v_r(C)}^{\le C},\frac{\kappa}{2D_a}\right),
\qquad
\mathbb E[s_{v_r(C)}^{\le C}]
\le \frac{\kappa D_{a+1}}{2D_a}
\le 0.6\kappa
\]
for sufficiently small fixed $\eps$. A Chernoff bound gives $\Pr[s_{v_r(C)}^{\le C}\ge\kappa]\le e^{-\Omega(\kappa)}$. The sum of the tagged weights of all short-clause endpoints is at most $4m$, so linearity of expectation gives the claim.
\end{proof}

\begin{lemma}[Impersonation-count bound]\label{lem:maxksat-impersonation-weight}
There is a constant $C^\ast_\eps>0$ depending only on $\eps$ such that
\[
\mathbb E[W_{\mathrm{imp}}]
\le
C^\ast_\eps\,\frac{m\log(K_{\mathrm{base}}+2)}{K_{\mathrm{base}}}.
\]
\end{lemma}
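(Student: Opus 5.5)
The plan is to bound $\mathbb E[W_{\mathrm{imp}}]$ using only the randomness of the bases $B_s$, which are independent across scales and independent of the stream. Fix an endpoint $(C,r)$ with prefix degree $d:=d^{\le C}_{v_r(C)}$. By linearity of expectation it suffices to show
\[
\sum_{s\ne a(C,r)}\Pr_{B_s}\bigl[\mathsf{Imp}^s_{C,r}\bigr]\le C'_\eps\frac{\log(K_{\mathrm{base}}+2)}{K_{\mathrm{base}}},
\]
and then sum over the at most $3m$ endpoints. Only scales with $D_{s+1}<d$ can contribute. For such a scale, impersonation means that for some $1\le h\le H_s$ we have $d-hB_s\in[D_s+1,D_{s+1}]$, i.e. $B_s\in\bigl[(d-D_{s+1})/h,\,(d-D_s-1)/h\bigr]$. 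This is an interval of length at most $(D_{s+1}-D_s)/h$.

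For a fixed $h$, the number of admissible integers $B_s$ is at most $(D_{s+1}-D_s)/h+1$. Each integer has probability at most $1/(K_{\mathrm{base}}D_{s+1})$, since the support of $B_s$ has size about $K_{\mathrm{base}}D_{s+1}$. Moreover $B_s\in[K_{\mathrm{base}}D_{s+1},2K_{\mathrm{base}}D_{s+1}+1]$, so a solution requires $hK_{\mathrm{base}}D_{s+1}\le d$. Hence only $h\le d/(K_{\mathrm{base}}D_{s+1})$ matter, and also $h\le H_s$. A union bound over $h$ gives
\[
\Pr[\mathsf{Imp}^s_{C,r}]\le\sum_{h\le \min\{H_s,\,d/(K_{\mathrm{base}}D_{s+1})\}}\frac{(D_{s+1}-D_s)/h+1}{K_{\mathrm{base}}D_{s+1}}.
\]
The next step is to sum this over the scales $s$ with $D_{s+1}<d$.

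\textbf{High-degree scales.} Here $D_{s+1}-D_s\le \eps^3D_s+1$, so the first part contributes $O(\eps^3\log H_s)/K_{\mathrm{base}}$ per scale, with $H_s=\kappa$. The ``$+1$'' part contributes $\min\{\kappa,d/(K_{\mathrm{base}}D_{s+1})\}/(K_{\mathrm{base}}D_{s+1})$. Summed over the geometric scales, this is a geometric-type sum dominated by the scales with $D_{s+1}\approx d/K_{\mathrm{base}}$. The number of such scales is $O_\eps(1)$ per factor of two in $D_s$. Because $h\ge1$ forces $D_{s+1}\le d/K_{\mathrm{base}}$, only scales at least a factor $K_{\mathrm{base}}$ below $d$ contribute at all.

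\textbf{Main obstacle.} The hard part is making the total over all scales $O_\eps(\log(K_{\mathrm{base}}+2)/K_{\mathrm{base}})$ rather than $O(\log m/K_{\mathrm{base}})$. A naive sum of $\eps^3\log\kappa/K_{\mathrm{base}}$ over all $A=O(\log m)$ scales is too large. I would handle this by bounding the probability more carefully. For fixed $s$, the admissible $B_s$ form the union over $h$ of intervals centred at $d/h$. Since $B_s\approx K_{\mathrm{base}}D_{s+1}$, we get $h\approx d/(K_{\mathrm{base}}D_{s+1})$, so only $h$ in a window $[d/(2K_{\mathrm{base}}D_{s+1}+1),\,d/(K_{\mathrm{base}}D_{s+1})]$ is relevant. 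The constraint $h\le H_s$ (which is $\kappa$, or $D_{s+1}$, an $O_\eps(1)$ quantity) therefore forces $d\le 2K_{\mathrm{base}}D_{s+1}H_s+O(1)$. So only scales with $D_{s+1}\ge d/(3K_{\mathrm{base}}\kappa)$ can contribute, and these span a ratio range of $O(K_{\mathrm{base}}\kappa)$. That range contains $O_\eps(\log(K_{\mathrm{base}}\kappa))$ scales.

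\textbf{Per-scale bound and conclusion.} On each of these scales, the window contains about $d/(K_{\mathrm{base}}D_{s+1})$ values of $h$, each contributing $\bigl((D_{s+1}-D_s)/h+1\bigr)/(K_{\mathrm{base}}D_{s+1})$. This gives a per-scale probability of $O_\eps(1/K_{\mathrm{base}})$: the measure of the union of the target intervals inside the range of $B_s$, divided by $K_{\mathrm{base}}D_{s+1}$, is comparable to the density $(D_{s+1}-D_s+h)/(d/h)\cdot$ per unit length, which I would bound by $O_\eps(1/K_{\mathrm{base}})$ using $D_{s+1}\le d/K_{\mathrm{base}}$. Low-degree scales are handled the same way with $H_s=D_{s+1}=O_\eps(1)$. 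Since $\kappa$ is fixed before $K_{\mathrm{base}}$, multiplying the number of relevant scales by the per-scale bound gives $O_\eps(\log(K_{\mathrm{base}}+2)/K_{\mathrm{base}})$ per endpoint. Summing over the endpoints yields the lemma.
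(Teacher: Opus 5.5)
Your proposal is correct and follows essentially the same route as the paper. You get the per-scale bound $\Pr[\mathsf{Imp}^s_{C,r}]=O_\eps(1/K_{\mathrm{base}})$ by counting admissible integer values of $B_s$ for each $h\le H_s$ and union-bounding over $h$; your displayed sum already gives this directly because $h\le H_s\le 2D_{s+1}$, so the density heuristic is not needed. You then use $d\le H_s(2K_{\mathrm{base}}D_{s+1}+1)+D_{s+1}$ to keep only $O_\eps(\log(K_{\mathrm{base}}+2))$ geometric scales and sum over the $O(m)$ endpoints, exactly as the paper does.

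As a side remark, you could combine that lower bound on $D_{s+1}$ with your own upper bound $D_{s+1}\le d/K_{\mathrm{base}}$, which confines $D_{s+1}$ to a window of ratio $O(\kappa)$. Then only $O_\eps(1)$ scales are relevant, and the logarithmic factor could be dropped.
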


\begin{proof}
Fix one endpoint $(C,r)$ with prefix degree $d$ and true scale $a(C,r)$. Fix a wrong nonempty scale $s\neq a(C,r)$. If $(C,r)$ impersonates scale $s$, then for some $1\le h\le H_s$,
\[
d\in[hB_s+D_s+1,\ hB_s+D_{s+1}].
\]
Equivalently,
\[
B_s\in \left[\frac{d-D_{s+1}}{h},\ \frac{d-D_s-1}{h}\right].
\]
For a fixed $h$, this interval contains $O((D_{s+1}-D_s+1)/h+1)=O_\eps(D_{s+1}/h)$ integers, whereas $B_s$ is uniform on an interval of length $\Theta(K_{\mathrm{base}}D_{s+1})$. Hence
\[
\Pr[ (C,r) \text{ impersonates scale }s\text{ through this }h]
=
O_\eps\!\left(\frac{1}{K_{\mathrm{base}}h}\right).
\]
Summing over $1\le h\le H_s$ gives
\[
\Pr[\mathsf{Imp}^{s}_{C,r}]
=
O_\eps\!\left(\frac{1}{K_{\mathrm{base}}}\right).
\]
The harmonic factor is absorbed into $O_\eps(\cdot)$, because $H_s\le\kappa = \text{poly} (\eps)$ on high-degree scales and $H_s=D_{s+1}=O_\eps(1)$ on low-degree scales.

Moreover, impersonating scale $s$ implies
\[
d\le H_s(2K_{\mathrm{base}}D_{s+1}+1)+D_{s+1}.
\]
Notice that $D_s$ grows exponentially by $s$. If $s$ is high-degree then $H_s=\kappa$, so only the $O_\eps(\log(K_{\mathrm{base}}+2))$ scales with
\[
D_{s+1}\in
\left[
\frac{d}{2K_{\mathrm{base}}\kappa+O_\eps(1)},\,d
\right)
\]
are relevant. If $s$ is low-degree then $D_{s+1}<\kappa=O_\eps(1)$, so there are only $O_\eps(1)$ such scales. Therefore
\[
\mathbb E[|\mathcal I(C,r)|]
=
\sum_{\substack{s\text{ nonempty}\\s\neq a(C,r)}}
\Pr[\mathsf{Imp}^{s}_{C,r}]
=
O_\eps\!\left(\frac{\log(K_{\mathrm{base}}+2)}{K_{\mathrm{base}}}\right).
\]
Since there are at most $O(m)$ short-clause endpoints,
\[
\mathbb E[W_{\mathrm{imp}}]
\le
C^\ast_\eps\,\frac{m\log(K_{\mathrm{base}}+2)}{K_{\mathrm{base}}}
\]
for some constant $C^\ast_\eps$ depending only on $\eps$.
\end{proof}

The same geometric argument also bounds the number of scales in $\mathcal R(C,r)$: there is a constant $C^{\mathrm{sc}}_\eps>0$ such that
\begin{equation}\label{eq:maxksat-visible-scale-count}
|\mathcal R(C,r)|
\le
C^{\mathrm{sc}}_\eps\log(K_{\mathrm{base}}+2)
\qquad\text{for every endpoint } (C,r).
\end{equation}
Indeed, applying the same argument to a fixed endpoint $(C,r)$, with
$d=d^{\le C}_{v_r(C)}$, shows that low-degree wrong scales contribute only
$O_\eps(1)$ possibilities, while the bound above leaves only
$O_\eps(\log(K_{\mathrm{base}}+2))$ possible high-degree scales.

On $\mathcal G$, Lemma~\ref{lem:nonwrapping-threshold-invariant} says that each queried label is live exactly when its threshold inequality is true. The bounds for fixed coordinates proved in
\eqref{eq:one-endpoint-high-coordinate},
\eqref{eq:one-endpoint-low-coordinate-total},
\eqref{eq:two-endpoint-high-coordinate-total}, and
\eqref{eq:two-endpoint-remaining-coordinate-total} give the following facts:
\begin{enumerate}
  \item every high-degree one-endpoint coordinate output is unbiased for its true coordinate;
\item a low-degree one-endpoint coordinate output can differ from its target only when
the queried endpoint satisfies the impersonation condition for the queried scale;
\item a two-endpoint coordinate output can differ from its target only when an endpoint
satisfies the impersonation condition for one of the queried scales, or when a
true high-degree endpoint overflows.
\end{enumerate}

We now combine these fixed-coordinate bounds into a bound for the full linear
form. Recall that  $\mathsf{Tgt}^{\ell,j,\alpha}_{C}$ denotes the event that $C$ is the target-type for $U^{(a)}_{\ell,j,\alpha}$; $\mathsf{Tgt}^{\ell,j,t,\alpha,\beta}_{C}$ denotes the event that  $C$ is the target-type for $P^{(a,b)}_{\ell,j,t,\alpha,\beta}$.

\paragraph{One-endpoint contribution.}
The high-degree one-endpoint coordinates are unbiased, and the counter $M_3$ is exact on $\mathcal G$. Therefore only the low-degree one-endpoint coordinates can contribute here. \eqref{eq:one-endpoint-low-coordinate-total} gives, for every fixed low-degree coordinate $(\ell,j,\alpha,a)$,
\[
\left|
\mathbb E\!\left[\widehat U^{(a)}_{\ell,j,\alpha}\right]
-
U^{(a)}_{\ell,j,\alpha}
\right|
\le
\sum_{C\in\Phi_{\le3}}
\1\!\left[ \mathsf{Tgt}^{\ell,j,\alpha}_{C}
\right]\cdot\1\left[\mathsf{Imp}^{a}_{C,j}\right].
\]

Fix an event that $(C,r)$ impersonates scale $a$, where $a$ is a wrong queried scale in $\mathcal I(C,r)$. Once the clause length, the endpoint position, and the sign of that literal are fixed, only a constant number of one-endpoint coordinates can use this event: the bucket index ranges over a constant-size set, and the coefficients $q_\alpha$ and $u_j(\alpha)$ in \eqref{eq:maxksat-pseudosnap-score-decomp} are fixed constants. Therefore there is a constant $C^{\mathrm{one}}_\eps>0$ such that the total weighted one-endpoint error is at most $C^{\mathrm{one}}_\eps W_{\mathrm{imp}}$.

\paragraph{Two-endpoint error from wrong scale pairs.}
\eqref{eq:two-endpoint-high-coordinate-total} and
\eqref{eq:two-endpoint-remaining-coordinate-total} give, for every fixed
coordinate $(\ell,j,t,\alpha,\beta,a,b)$,

\[
\begin{aligned}
&\left|
\mathbb E\bigl[
\widehat P^{(a,b)}_{\ell,j,t,\alpha,\beta}
\bigr]
-
P^{(a,b)}_{\ell,j,t,\alpha,\beta}
\right|\\
&\qquad\le
\sum_{C\in\Phi_{\le3}}
\1[\mathsf{Tgt}^{\ell,j,t,\alpha,\beta}_{C}]\Bigl(
\1[\mathsf{Imp}^{a}_{C,j}]
+\1[\mathsf{Imp}^{b}_{C,t}]
+\1[a=a(C,j),\ D_a\ge\kappa/2,\ s^{\le C}_{v_j(C)}\ge\kappa]\\&\qquad\qquad\qquad\qquad\qquad\qquad
+\1[b=a(C,t),\ D_b\ge\kappa/2,\ s^{\le C}_{v_t(C)}\ge\kappa]
\Bigr).
\end{aligned}
\]
We first sum the part caused by endpoints counted by the impersonation condition for a scale
that is not their true scale.

Fix a clause $C$ and one ordered pair of relevant endpoints $(C,j)$ and $(C,t)$ with $1\le j<t\le |C|$. A queried scale pair $(a,b)$ can receive a nonzero term from this clause even
though $(a,b)$ is not the true scale pair only in one of the following two ways:
\begin{itemize}
  \item the first endpoint impersonates the queried first scale, so $a\in\mathcal I(C,j)$, while the second queried scale must still belong to $\mathcal R(C,t)$;
  \item the second endpoint impersonates the queried second scale, so $b\in\mathcal I(C,t)$, while the first queried scale must belong to $\mathcal R(C,j)$.
\end{itemize}
Hence the total number of queried scale pairs that can receive such an extra
nonzero term from this ordered pair is at most
\[
|\mathcal I(C,j)|\,|\mathcal R(C,t)|
+
|\mathcal I(C,t)|\,|\mathcal R(C,j)|.
\]
Using \eqref{eq:maxksat-visible-scale-count}, this is at most
\[
C^{\mathrm{sc}}_\eps\log(K_{\mathrm{base}}+2)
\bigl(|\mathcal I(C,j)|+|\mathcal I(C,t)|\bigr).
\]
The clause length, the ordered pair $(j,t)$, and the bucket indices $(\alpha,\beta)$ all range over constant-size sets, and the coefficients in \eqref{eq:maxksat-pseudosnap-score-decomp} are fixed constants. Therefore there is a constant $C^{\mathrm{two}}_\eps>0$ such that the total weighted wrong-scale error in all two-endpoint coordinates is at most
\[
C^{\mathrm{two}}_\eps\log(K_{\mathrm{base}}+2)\,W_{\mathrm{imp}}.
\]

\paragraph{Overflow contribution.}
The weight $W_{\mathrm{of}}$ counts the high-degree endpoints whose sampled positive-prefix count is at least $\kappa$. Overflow does not affect one-endpoint coordinates. For two-endpoint coordinates, an overflowing endpoint matters only on its true queried scale, not on wrong scales. Therefore each overflowing endpoint can affect only the constantly many true two-endpoint coordinates that use that endpoint inside its clause. Since all coefficients in \eqref{eq:maxksat-pseudosnap-score-decomp} are fixed constants, there is a constant $C_{\mathrm{of}}>0$ such that the total weighted overflow error is at most $C_{\mathrm{of}}W_{\mathrm{of}}$.

Combining the one-endpoint, two-endpoint, and overflow bounds, there is a constant $C_{\mathrm{imp}}=O_\eps(\log(K_{\mathrm{base}}+2))$ such that for every
$\omega\in\mathcal G$,
\begin{equation}\label{eq:maxksat-one-run-mean}
\left|
\mathbb E[Z\mid \omega]
-
L_{\le3}(\PsSnap(\Phi_{\le3}))
\right|
\le
C_{\mathrm{imp}}W_{\mathrm{imp}}
+
C_{\mathrm{of}}W_{\mathrm{of}}.
\end{equation}

Choose $K_{\mathrm{base}}$ large enough so that the $O_\eps(\log^2(K_{\mathrm{base}}+2)/K_{\mathrm{base}})$ bound from Lemma~\ref{lem:maxksat-impersonation-weight} gives
\[
C_{\mathrm{imp}}\,\mathbb E[W_{\mathrm{imp}}]
\le \frac{\eps m}{320},
\qquad\text{hence}\qquad
\mathbb E[W_{\mathrm{imp}}]\le \frac{\eps m}{320C_{\mathrm{imp}}}.
\]
The initial choice of $\kappa=\poly(1/\eps)$ also ensures
\[
\mathbb E[W_{\mathrm{of}}]\le \frac{\eps m}{320C_{\mathrm{of}}};
\]
this follows from Lemma~\ref{lem:maxksat-overflow-weight-bound}.

Now define
\[
\mathcal S
:=
\left\{
W_{\mathrm{imp}}\le \frac{\eps m}{4C_{\mathrm{imp}}}
\right\},
\text{ and }
\mathcal O
:=
\left\{
W_{\mathrm{of}}\le \frac{\eps m}{4C_{\mathrm{of}}}
\right\}.
\]
By Markov's inequality,
\[
\Pr[\mathcal S^c]\le \frac{1}{80},
\text{ and }
\Pr[\mathcal O^c]\le \frac{1}{80}.
\]
On $\mathcal S\cap\mathcal O$, \eqref{eq:maxksat-one-run-mean} gives
\begin{equation}\label{eq:maxksat-one-run-mean-small}
\left|
\mathbb E[Z\mid \omega]
-
L_{\le3}(\PsSnap(\Phi_{\le3}))
\right|
\le
\frac{\eps m}{2}
\qquad
\text{for every }\omega\in\mathcal G\cap\mathcal S\cap\mathcal O.
\end{equation}

We now bound the variance of one full run, still with the public randomness fixed. There are $A=O(\log (n+m))$ scales. The number of one-endpoint coordinates is $O(\log (n+m))$, and the number of two-endpoint coordinates is $O(\log^2 (n+m))$, because the clause lengths, endpoint positions, signs, and bucket indices all range over constant-size sets.

For a fixed coordinate, the number of elementary sub-sketches is $O_\eps(1)$: for one-endpoint coordinates it is $2\kappa+2$, and for two-endpoint coordinates it is $4|\Sigma_a||\Sigma_b|=O_\eps(1)$. Every elementary sub-sketch outputs either $0$ or a number whose absolute value is at most $M=O_\eps(m)$. Hence the output for one fixed coordinate has variance $O_\eps(m^2)$. Since different coordinates use independent work registers,
\begin{equation}\label{eq:maxksat-one-run-var}
\operatorname{Var}(Z\mid \omega)=O_\eps(m^2\log^2 (n+m))
\qquad
\text{for every }\omega\in\mathcal G.
\end{equation}

Take $R=C^{\mathrm{rep}}_\eps\log^2 (n+m)$ independent full runs, in parallel during the same pass, with the same public randomness $\omega$, and average their outputs:
\[
\overline Z:=\frac1R\sum_{r=1}^R Z_r.
\]
Then, for every $\omega\in\mathcal G$,
\[
\mathbb E[\overline Z\mid \omega]
=
\mathbb E[Z\mid \omega],
\text{ and }
\operatorname{Var}(\overline Z\mid \omega)=O_\eps(m^2).
\]
Choosing $C^{\mathrm{rep}}_\eps$ large enough, Chebyshev's inequality gives
\[
\Pr\left[
\left|
\overline Z-\mathbb E[Z\mid \omega]
\right|>\frac{\eps m}{4}
\ \middle|\ \omega
\right]
\le \frac{1}{40}
\qquad
\text{for every }\omega\in\mathcal G.
\]
Hence on the event $\mathcal G\cap\mathcal S\cap\mathcal O\cap \{|\overline Z-\mathbb E[Z\mid\omega]|\le \eps m/4\}$, the triangle inequality and~\eqref{eq:maxksat-one-run-mean-small} give
\[
\left|
\overline Z-L_{\le3}(\PsSnap(\Phi_{\le3}))
\right|
\le \frac{3\eps m}{4}<\eps m.
\]

Also, as shown above,
\[
\Pr[\mathcal G^c]\le \frac{1}{80}.
\]
Even on $\mathcal G^c$, the algorithm still runs legally. What fails there is only the estimation guarantee stated above. Therefore
\[
\begin{aligned}
&\Pr\left[
\left|
\overline Z-L_{\le3}(\PsSnap(\Phi_{\le3}))
\right|>\eps m
\right]\\
&\qquad\le
\Pr[\mathcal G^c]
+
\Pr[\mathcal S^c]
+
\Pr[\mathcal O^c]
+
\mathbb E_\omega\!\left[
\mathbf 1[\omega\in\mathcal G\cap\mathcal S\cap\mathcal O]
\Pr\left[
\left|
\overline Z-\mathbb E[Z\mid \omega]
\right|>\frac{\eps m}{4}
\ \middle|\ \omega
\right]
\right].
\end{aligned}
\]
Using the four bounds above,
\[
\Pr\left[
\left|
\overline Z-L_{\le3}(\PsSnap(\Phi_{\le3}))
\right|>\eps m
\right]
\le
\frac{1}{80}
+
\frac{1}{80}
+
\frac{1}{80}
+
\frac{1}{40}
=
\frac{1}{16}.
\]
So the averaged output $\widehat L_{\le3}:=\overline Z$ succeeds with probability at least $15/16$.

\paragraph{Space complexity.}
One sub-sketch uses a quantum register of size $O(\log (n+m))$ qubits and, after a successful query, only $O(\log (n+m))$ additional classical bits for storing the variable name, the relevant indices, and the suffix total/positive degree. One full run contains $O_\eps(\log^2 (n+m))$ elementary sub-sketches, so one full run uses $O_\eps(\log^3 (n+m))$ qubits and the same order of classical working bits. The averaging factor $R=\Theta_\eps(\log^2 (n+m))$ multiplies this by another factor $\Theta_\eps(\log^2 (n+m))$. Therefore, the overall algorithm uses $O_\eps(\log^5 (n+m))$ qubits, and the classical working space is of the same asymptotic order. Public random coins are shared and are not charged to the working space, as in the rest of the paper. This proves Lemma~\ref{lem:maxksat-pairwise-estimator}.

\section{Quantum algorithm for Max-\texorpdfstring{$k$}{k}SAT}
\label{subsec:maxksat-final-algorithm}
We now prove the main theorem.

\begin{proof}[Proof of Theorem~\ref{thm:main}]
Let $\rho=0.717275$. Since the theorem only needs a constant $0.7172$, we fix $\eps>0$ small enough to absorb all $O(\eps)$ losses below.

Let $\Phi_{\mathrm{orig}}$ be the input instance. As the stream arrives, the algorithm applies the online simplifications from Subsection~\ref{subsec:maxksat-snapshot-reduction}: it removes repeated literals, discards tautological clauses while counting them in an exact counter $T$, and keeps an exact counter $m_{\ge4}$ for simplified clauses of length at least four. These counters use $O(\log m)=O(\log n)$ classical bits, and we let $\Phi$ denote the remaining nonconstant instance.

We first describe one constant-success trial.  The trial samples fresh public randomness for the hash functions and smoothing noise used in the pseudosnapshot construction.  It then runs the quantum estimator of Lemma~\ref{lem:maxksat-pairwise-estimator} with a fixed constant failure probability, say $1/16$, obtaining an estimate $\widehat L_{\le3}$.  The trial outputs
\[
  Y:=
  \max\left\{0,\,
  \widehat L_{\le3}+\rho m_{\ge4}-C\eps m
  \right\},
\]
where $C$ is a sufficiently large constant depending only on the fixed certificate and on the constants in Lemma~\ref{lem:maxksat-pseudosnap-to-value}.

Call a trial valid if both Lemma~\ref{lem:maxksat-pseudosnap-to-value} holds and the estimator error is at most $\eps m$. By choosing $\kappa=\poly(1/\eps)$ large enough, the first event holds with probability at least $15/16$, and Lemma~\ref{lem:maxksat-pairwise-estimator} gives the second event with probability at least $15/16$. Thus, a trial is valid with probability at least $9/10$.

On every valid trial, the estimator error is absorbed by the shift $C\eps m$.  Lemma~\ref{lem:maxksat-pseudosnap-to-value} gives
\[
  \rho\,\OPT(\Phi)-O(\eps m)\le Y\le \OPT(\Phi).
\]
Every remaining nonconstant clause is satisfied by a uniformly random assignment with probability at least $1/2$, so $\OPT(\Phi)\ge m/2$.  Thus the additive loss is an $O(\eps)\OPT(\Phi)$ loss, and our choice of $\eps$ gives
\[
  0.7172\,\OPT(\Phi)\le Y\le \OPT(\Phi)
\]
on every valid trial.

To boost the success probability, run $R_{\mathrm{out}}=\Theta\!\left(\log\frac1\delta\right)$
independent trials in parallel during the same pass, using independent public seeds and independent quantum estimator randomness.  Return the median of their outputs, denoted $Z_0$.  By a Chernoff bound, with probability at least $1-\delta$, a strict majority of the trials are valid.  In that case, the median also lies in the interval
\[
  0.7172\,\OPT(\Phi)\le Z_0\le \OPT(\Phi).
\]
Finally, return $Z:=T+Z_0$. 

Since the removed tautological clauses contribute exactly $T$ to every assignment, adding $T$ preserves both inequalities:
\[
  0.7172\,\OPT(\Phi_{\mathrm{orig}})\le Z\le \OPT(\Phi_{\mathrm{orig}}).
\]
One trial uses $O(\log^5 n)$ qubits because Lemma~\ref{lem:maxksat-pairwise-estimator} is invoked with constant failure probability.  The outer amplification runs $R_{\mathrm{out}}$ trials in parallel, so the total space is $O\!\left(\log^5 n\log\frac1\delta\right)$ qubits, plus $O(\log n)$ classical bits for the exact counters.  Public random coins are shared and are not charged to the working space.
\end{proof}

\bibliographystyle{alpha}
\bibliography{ref}

\newpage
\appendix
\section{Quantum space lower bound for Max-\texorpdfstring{$k$}{k}SAT}
\label{app:maxksat-lower-bound}

We use the following promise-gap form of the quantum streaming lower bound for Max-Cut of Kallaugher and Parekh~\cite{KP22}.

\begin{theorem}\cite{KP22}\label{max-cut}
For every constant $\varepsilon>0$, any one-pass quantum streaming algorithm that, on an $n$-vertex graph with $m_E$ edges, distinguishes
\[
  \MaxCut(G)=m_E
  \qquad\text{from}\qquad
  \MaxCut(G)\le (1/2+\varepsilon)m_E
\]
with constant advantage requires $\Omega(n)$ qubits of space.
\end{theorem}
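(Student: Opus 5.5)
This statement is quoted from Kallaugher and Parekh~\cite{KP22}, and in the paper it is used as a black box. If I were to reprove it, I would follow the standard route through a multi-player one-way quantum communication game. The argument below is a plan: the exact matching density, the number of players, and the form of the per-player Fourier bound are my choices, and I have not checked them against \cite{KP22}.

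\emph{Step 1: hard distribution and streaming-to-communication reduction.} Fix a hidden $x\in\{0,1\}^n$ and $T=\Theta(1/\alpha)$ players, each receiving a uniformly random matching $M_i$ of $\alpha n$ edges.
\begin{itemize}
  \item In the YES distribution, each player keeps only edges crossing the cut defined by $x$. The union graph is then bipartite, so $\MaxCut(G)=m_E$.
  \item In the NO distribution, the edges carry no correlation with $x$. A first-moment bound over all $2^n$ cuts, together with Chernoff concentration, then gives $\MaxCut(G)\le(1/2+\varepsilon)m_E$ with high probability once $m_E=C_\varepsilon n$.
\end{itemize}
To make the YES/NO difference appear in the edge set itself, I would use a vertex-doubling gadget: connect $u$ to $v$ or to $v'$ according to the parity $x_u\oplus x_v$. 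A quantum streaming algorithm with $s$ qubits then yields a one-way protocol in which each player applies a channel to an $s$-qubit message and passes it on. So it suffices to show that such a protocol has $o(1)$ advantage unless $s=\Omega(n)$.

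\emph{Step 2: hybrid argument.} Replace the players' inputs by NO inputs one at a time. It then suffices to bound, for each $i$, the advantage of distinguishing the $i$-th player's input when the incoming message is a state $\rho_x$ depending on $x$. Conditioned on $M_i$, the $i$-th player's input reveals only the parities $M_ix$. Distinguishing these from uniform is controlled by the Fourier mass of $\rho_x$ on characters $\chi_S$ with $S$ a union of matched pairs, summed over $|S|=2k$ and weighted by the probability $\approx(\alpha^2k/n)^{k}$-type factor that a random matching covers $S$.

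\emph{Step 3: Fourier weight of the message.} I would bound the level-$k$ Fourier weight of the map $x\mapsto\rho_x$ using matrix-valued hypercontractivity (Ben-Aroya--Regev--de Wolf), which gives weight at most $(O(s)/k)^{k}$. Summed against the matching probabilities, this makes each hybrid step contribute $O(\sqrt{s/n})$-type advantage at low levels.

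\emph{Main obstacle.} The message is not a fixed function of $x$ alone: it has already absorbed earlier matchings, and $\Omega(n)$, rather than the $\Omega(\sqrt n)$ of Boolean Hidden Matching, requires that the $T$ rounds not accumulate advantage. I would handle this by tracking a potential, the total Fourier mass of $\rho_x$ at levels $\le k$, and showing that each player can increase it only by a factor $1+O(\alpha)$ in expectation. This needs the hypercontractive bound to be applied conditionally on the random matchings, then a martingale or union bound over the $T$ rounds, with $\alpha$ and $k$ chosen as small constants depending on $\varepsilon$. I expect this round-by-round Fourier-growth analysis to be the genuinely hard part, and it is where \cite{KP22} does its work.
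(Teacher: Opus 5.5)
The paper does not prove this statement. It is imported from \cite{KP22} and used only as a black box in the reduction for Theorem~\ref{thm:maxksat-quantum-lb}, so your first sentence already matches everything the paper does. The reproof you sketch, however, would not reach $\Omega(n)$, and the problem starts before the step you flag as the obstacle.

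\emph{The per-step estimate in Step~3 is wrong.} At the level that matters, the hypercontractive bound is $\sum_{|S|=2k}\|\hat\rho(S)\|_1^2\le (O(s)/k)^{2k}$, not $(O(s)/k)^{k}$. The probability that a random $\alpha n$-edge matching covers a fixed $2k$-set is $(O(\alpha k/n))^{k}$. Multiplying gives $(O(\alpha s^2/(kn)))^{k}$, so already the $k=1$ term gives a per-step advantage of order $s\sqrt{\alpha/n}$, not $\sqrt{s/n}$. This is the Boolean Hidden Matching barrier, and it is tight for generic messages. If $\rho_x$ simply stores $x_1,\dots,x_s$ with $s=C\sqrt{n/\alpha}$, then $M_i$ has about $C^2$ edges inside $\{1,\dots,s\}$ in expectation, and comparing their labels with the stored bits gives constant advantage. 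So no per-step bound that uses only ``$\rho_x$ is an $s$-qubit state'' can give more than $\Omega(\sqrt n)$. The difficulty is therefore not accumulation over the $T=O(1/\alpha)$ hybrid steps, which costs only a constant factor. Each single step must already use that $\rho_x$ was produced by players who saw $x$ only through the parities $M_jx$, $j<i$. For fixed matchings, this means the Fourier mass of $\rho_x$ lies on odd-degree vertex sets of subgraphs of $\bigcup_{j<i}M_j$, i.e.\ on endpoint sets of path systems in that sparse random graph.

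\emph{The proposed potential does not supply this structure.} The total Fourier mass of $\rho_x$ at levels $\le k$ is bounded by $\sum_{j\le k}(O(s)/j)^{j}$ at every round, whatever the history, so controlling its growth adds nothing. It also cannot grow by only a factor $1+O(\alpha)$ per player. It is $0$ before the first player, and the first player alone can make the level-$2$ mass $\Theta(s)$ by writing down the labels of $s$ of its own edges. What must be tracked across rounds is the mass on the specific sets that later matchings can close. The missing argument must show that $o(n)$ qubits, passed through players who each see one sparse matching, cannot build up such mass. That graph-structured Fourier analysis, due to Kapralov and Krachun classically and extended to quantum channels in \cite{KP22}, is the whole content of the theorem, and your sketch leaves it open.

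A minor point: the vertex-doubling gadget is unnecessary. Keeping only the label-$1$ edges, as in your first bullet, already puts the YES/NO difference into the edge set.
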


\begin{proof}[Proof of Theorem \ref{thm:maxksat-quantum-lb}]
We reduce from this Max-Cut promise problem.  For a graph $G=(V,E)$ and an assignment $\sigma:V\to\{0,1\}$, define
\[
  \Cut_G(\sigma)
  :=
  \bigl|\{\{u,v\}\in E:\sigma(u)\ne\sigma(v)\}\bigr|,
  \qquad
  \MaxCut(G):=\max_\sigma \Cut_G(\sigma).
\]
Let $m_E:=|E|$.  Given the edge stream of $G$, we construct a Max-$k$-SAT stream $\Phi_G$ online by replacing each edge $\{u,v\}$ with the two clauses
\[
  (x_u\vee x_v),
  \qquad
  (\neg x_u\vee \neg x_v).
\]
Since $k\ge2$, these are valid Max-$k$-SAT clauses.  The reduction is one-pass and online, and uses only $O(\log (n+m_E))$ additional classical space. The resulting instance has $M=2m_E$ clauses.  In what follows, $m_E$ always denotes the number of Max-Cut edges, not the number of Max-$k$SAT clauses.

For every assignment $\sigma$, each edge contributes
$1+\mathbf 1[\sigma(u)\ne\sigma(v)]$ satisfied clauses: if $\sigma(u)\ne\sigma(v)$, both clauses are satisfied, whereas if $\sigma(u)=\sigma(v)$, exactly one is satisfied.  Summing over all edges gives
\[
  \Val_{\Phi_G}(\sigma)=m_E+\Cut_G(\sigma).
\]
Taking the maximum over $\sigma$, we obtain
$\OPT(\Phi_G)=m_E+\MaxCut(G)$.

Now suppose, toward contradiction, that there is a one-pass quantum streaming algorithm $\mathcal A$ using $o(n)$ qubits and achieving an approximation ratio
$3/4+\gamma$ for Max-$k$-SAT.  Choose $\varepsilon<2\gamma$, and run $\mathcal A$ on the stream $\Phi_G$ produced above.

If $\MaxCut(G)=m_E$, then
\[
  \OPT(\Phi_G)=2m_E=M,
\]
so a successful run of $\mathcal A$ outputs
\[
  Z\ge
  \left(\frac34+\gamma\right)2m_E
  =
  \left(\frac32+2\gamma\right)m_E.
\]
If instead $\MaxCut(G)\le (1/2+\varepsilon)m_E$, then
\[
  \OPT(\Phi_G)
  \le
  m_E+\left(\frac12+\varepsilon\right)m_E
  =
  \left(\frac32+\varepsilon\right)m_E,
\]
and every valid approximation output satisfies $Z\le\OPT(\Phi_G)$.

Because $\varepsilon<2\gamma$, the two output ranges are separated by a threshold.  Hence $\mathcal A$ would yield an $o(n)$-space quantum streaming algorithm distinguishing the corresponding Max-Cut gap instances.  This contradicts Theorem \ref{max-cut}.  Therefore, any one-pass quantum streaming algorithm beating $3/4$ for Max-$k$SAT requires $\Omega(n)$ qubits of space.
\end{proof}

\section{Exact finite LP certificate for Max-\texorpdfstring{$k$}{k}SAT}
\label{app:exact-certificate}
The proof of Lemma~\ref{lem:maxksat-finite-short-certificate} uses a finite rational certificate. The complete certificate data and verifier are available in the repository: \url{https://github.com/Guangxu-Yang/maxksat-lp-certificate}.

The verifier contains the exact rational data
\[
  \mathcal I,\quad r,\quad c_0,\quad
  \{u_j\}_{j=1}^3,\quad
  \{p_{jt}\}_{1\le j<t\le3},\quad
  \{\lambda^{\rm lo}_{i,\tau},\lambda^{\rm up}_{i,\tau}\}_{i,\tau}.
\]
All arithmetic in the verifier is performed over $\mathbb Q$, using rational integer numerators and denominators.  Floating-point numbers are printed only for readability. The bucket partition and rounding vector used by the certificate are
\[
\begin{array}{c|c|c}
i & I_i & r_i\\
\hline
0 & [-1,-3/4) & 0.27083\\
1 & [-3/4,-1/2) & 0.27083\\
2 & [-1/2,-1/4) & 0.27083\\
3 & [-1/4,0) & 0.4069\\
4 & [0,1/4) & 0.5931\\
5 & [1/4,1/2) & 0.72917\\
6 & [1/2,3/4) & 0.72917\\
7 & [3/4,1] & 0.72917 .
\end{array}
\]
Thus $L=8$, and the final bucket is closed at $1$.  As in Definition~\ref{def:maxksat-rounding-data}, a positive literal in bucket $I_i$ is satisfied by the independent rounding with probability $r_i$, and a
negative literal in bucket $I_i$ is satisfied with probability $1-r_i$. We use codes to check all inequalities exactly over $\mathbb Q$. We can get $\rho=0.717275$.
\section{Finite LP certificate for approximation ratio of Max-2OR }
\label{app:max2sat-certificate}
We only focus on the step that reduces Max-$2$OR to the problem of estimating a snapshot. This appendix improves only the approximation constant for the Max-2OR; the streaming and quantum estimation arguments are inherited from the Max-$k$SAT framework.

The optimization is instead over a Max-2SAT-specific rounding certificate: the unary endpoint coefficient, the nonuniform bias partition, and the finite list of nine rounding profiles are chosen so that the best profile certifiably beats every feasible mixture of
unary and binary typed atoms.  

We consider simplified unweighted Max-2SAT instances, meaning finite multisets of unary clauses $(x_v=s)$ and binary clauses $(x_u=s)\vee(x_v=t)$ with $s,t\in\sgnset$, after replacing repeated-literal binary clauses by the corresponding unary clause and removing tautologies.  Let $ |\Phi|$ denote the number of remaining clauses.

Before defining the "snapshot" of instances, we first define the bias of the variables in $\Phi$.  The certificate uses the unary endpoint coefficient $w_{\mathrm u}:=\frac{67}{20}$; this is part of the bias definition, not an input clause weight.
\begin{definition}
For each variable $v$, let $P_v^{(1)},\ N_v^{(1)}$ and 
$P_v^{(2)},\ N_v^{(2)}$ be its positive/negative unary occurrence counts and positive/negative binary-endpoint counts.  Define
\[
p_v:=w_{\mathrm{u}}P_v^{(1)}+P_v^{(2)}
\qquad
q_v:=w_{\mathrm{u}}N_v^{(1)}+N_v^{(2)}
\qquad
d_v:=p_v+q_v,
\]
and the signed bias
\[
b_v:=
\begin{cases}
\dfrac{p_v-q_v}{d_v},& d_v>0,\\[1ex]
0,& d_v=0.
\end{cases}
\]
\end{definition}

Fix the nonuniform partition$-1=t_0<t_1<\cdots<t_{450}=1$
from the artifact file \footnote{\noindent\path{artifacts/max2sat/nonuniform_L450_focus0_str36_sig012_plus009_s6_w335.json} in the codes}. Let $I_i=[t_i,t_{i+1})\quad (0\le i\le 448)$ and 
$I_{449}=[t_{449},t_{450}]$. 
\begin{definition}
For $s,t\in\sgnset$ and $0\le i,j\le 449$, define the unary and binary snapshot coordinates
\[
U_{s,i}(\Phi):=
\sum_{C=(x_v=s)\in\Phi}\ind[b_v\in I_i],
\]
\[
S_{s,t,i,j}(\Phi):=
\sum_{C=(x_u=s)\vee(x_v=t)\in\Phi}
\ind[b_u\in I_i]\ind[b_v\in I_j].
\]
Define the snapshot as 
\[
\Snap_t(\Phi):=
\Bigl(U_{s,i}(\Phi),\,S_{s,t,i,j}(\Phi)\Bigr)_{
s,t\in\sgnset,\ 0\le i,j\le 449}
\]
\end{definition}

Now, we can define the snapshot score.
\begin{definition}[Snapshot score]
The certificate evaluates nine fixed rounding profiles
\[
p^{(1)},\dots,p^{(9)}\in[0,1]^{450}
\]
from the codes \footnote{The file \path{artifacts/max2sat/multirounding_L450_9_rounding_curves.json} }.  For each rule $r$, let $q^{(r)}_+(i):=p_i^{(r)}$ and $q^{(r)}_-(i):=1-p_i^{(r)}$, define the rule-$r$ snapshot score by
\begin{align}
A_r(\Snap_t(\Phi))
&:= \sum_{s\in\sgnset}\sum_{i=0}^{449} q^{(r)}_s(i)U_{s,i}(\Phi)
\notag
&+
\sum_{s,t\in\sgnset}\sum_{i,j=0}^{449}
\Bigl(1-(1-q^{(r)}_s(i))(1-q^{(r)}_t(j))\Bigr)S_{s,t,i,j}(\Phi),
\label{m2s:eq:Arule-lite}
\end{align}
and let
\[
A_{\max}(\Snap_t(\Phi)):=\max_{1\le r\le 9}A_r(\Snap_t(\Phi)).
\]
\end{definition}
We use nine profiles rather than one because no single oblivious rounding curve is best for all mixtures of unary and binary clauses.  The profiles are fixed
in advance by the certificate.  Each $A_r$ is the expected number of satisfied clauses under the corresponding independent rounding rule, expressed as a linear function of the snapshot; $A_{\max}$ then chooses the best of these constant many certified rules.

\begin{lemma}\label{prop:max2sat-certificate}
For every nonempty simplified unweighted Max-2SAT instance $\Phi$,
\[
\rho_2\,\OPT(\Phi)
\le
A_{\max}(\Snap_t(\Phi))
\le
\OPT(\Phi),
\]
where $\rho_2=0.74252$.
\end{lemma}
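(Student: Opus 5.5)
The plan follows the template of Lemma~\ref{lem:maxksat-finite-short-certificate}, with one change: the single rounding rule is replaced by a maximum over nine rules, so the lower-bound certificate uses a convex combination of the rules.

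\emph{Upper bound.} Fix any rule $r$. Each unary clause with endpoint label $(s,i)$ is satisfied by the independent rounding $x_v\sim\mathrm{Bern}(p^{(r)}_{i(v)})$ with probability exactly $q^{(r)}_s(i)$. Each binary clause on distinct variables is satisfied with probability exactly $1-(1-q^{(r)}_s(i))(1-q^{(r)}_t(j))$. After simplification no binary clause repeats a variable, since repeated-literal clauses were turned into unary clauses and tautologies were removed, so independence applies. Hence $A_r(\Snap_t(\Phi))=\E[\Val_\Phi(X^{(r)})]\le\OPT(\Phi)$. Taking the maximum over $r$ gives $A_{\max}\le\OPT$.

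\emph{Lower bound.} Since $A_{\max}\ge\sum_r\mu_r A_r$ for any probability vector $\mu$ over the nine rules, it suffices to exhibit a fixed rational mixture $\mu$ and nonnegative multipliers $\lambda^{\rm lo}_{i,\tau},\lambda^{\rm up}_{i,\tau}$ such that every typed atom $c$ satisfies
\[
\sum_r\mu_r h_r(c)-\rho_2\, o(c)+\sum_{i,\tau}\bigl(\lambda^{\rm lo}_{i,\tau}g^{\rm lo}_{i,\tau}(c)+\lambda^{\rm up}_{i,\tau}g^{\rm up}_{i,\tau}(c)\bigr)\ge0 .
\]
Here a typed atom is a length $\ell\in\{1,2\}$, signs, buckets among the $450$ intervals, and optimal bits. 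The bucket-feasibility functionals are built exactly as before, but with endpoint weights $w_1=67/20$ and $w_2=1$ matching the new bias. Summing the inequality with multiplicities $N_c$ for an optimal $x^*$ gives $\sum_r\mu_rA_r\ge\rho_2\OPT(\Phi)$, because the aggregated constraint terms are nonpositive on any valid instance, exactly as in Step~2 of Lemma~\ref{lem:maxksat-finite-short-certificate}.

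\emph{Main obstacle.} The difficulty is that one fixed mixture may not reach $0.74252$. The reason nine rules are used is that the best rule depends on the unary/binary mix, and a max dominates a mixture only weakly. If a single $\mu$ fails, I would fall back on LP duality (a minimax argument) over the snapshot itself. Consider the LP that minimizes $\max_r A_r$ over nonnegative atom masses $N_c$, subject to the bucket constraints and the normalization $\sum_c N_c o(c)=1$. By LP duality, its optimum at least $\rho_2$ is equivalent to the existence of, for each instance, some mixture $\mu$ certifying the bound. Finiteness of the atom set makes this a finite LP whose dual solution yields a single $\mu$ and single multipliers $\lambda$. Indeed, $\min_N\max_r=\max_\mu\min_N$ by convexity, since $\{N\}$ is a cone with linear constraints. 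So a single $\mu$ does exist whenever the certified value is at least $\rho_2$. The remaining step is computational: solve the LP numerically, round $(\mu,\lambda)$ to rationals with slack, and verify all atom inequalities exactly over $\mathbb Q$ with the artifact data. The number of atoms is $O(450^2)$, which is large but finite and checkable.
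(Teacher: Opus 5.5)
Your proposal is correct and follows essentially the same route as the paper. The upper bound uses the fact that each $A_r$ is an expected independent-rounding value. The lower bound uses a convex mixture of the nine rules together with nonnegative bucket multipliers satisfying every typed-atom inequality, which is exactly the paper's Farkas-dual witness. Your minimax remark reproduces the paper's primal/dual LP pair, and your weights $67/20$ and $1$ differ from the paper's $w_{\mathrm u}/2$ and $1/2$ only by a factor absorbed into $\lambda$. As in the paper, the substantive step is the exact rational verification of that witness at $\rho_2=0.74252$ for the fixed partition and profiles, which must be done computationally.
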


\begin{proof}
The upper bound is immediate: for every profile $r$,
$A_r(\Snap_t(\Phi))$ is the expected number of satisfied clauses in $\Phi$ under the
corresponding independent rounding rule, and is therefore at most
$\OPT(\Phi)$.  Hence $A_{\max}(\Snap_t(\Phi))\le\OPT(\Phi)$.

Fix an optimal assignment $x^*$.  Refine every unary or binary clause according to its signs, bucket labels, and endpoint bits under $x^*$.  Thus, a typed atom
is either
\[
c=(\mathrm U,s,i,\tau)
\qquad\text{or}\qquad
c=(\mathrm S,s,t,i,j,\tau,\upsilon),
\]
where $s,t\in\sgnset$, $i,j\in\{0,\dots,449\}$, and
$\tau,\upsilon\in\bits$.  Let $x_c$ be the number of clauses of type $c$.

Write $\chi(+)=+1$, $\chi(-)=-1$, and
\[
\chi_x(\tau)=
\begin{cases}
+1,&\tau=1,\\
-1,&\tau=0.
\end{cases}
\]
Define the optimum contribution of an atom by
\[
o(\mathrm U,s,i,\tau)=\ind[\chi(s)=\chi_x(\tau)]
\]
and
\[
o(\mathrm S,s,t,i,j,\tau,\upsilon)
=
1-\ind[\chi(s)\ne\chi_x(\tau)]
  \ind[\chi(t)\ne\chi_x(\upsilon)].
\]
For rounding profile $r$, define its atom contribution $R_r(c)$ by
\[
R_r(\mathrm U,s,i,\tau)=q_s^{(r)}(i)
\]
and
\[
R_r(\mathrm S,s,t,i,j,\tau,\upsilon)
=
1-\bigl(1-q_s^{(r)}(i)\bigr)\bigl(1-q_t^{(r)}(j)\bigr).
\]
Then, by construction,
\[
\sum_c x_c o(c)=\OPT(\Phi),
\qquad
\sum_c x_c R_r(c)=A_r(\Snap_t(\Phi)).
\]

We also need the bucket feasibility constraints.  For each bucket $a$ and bit $\theta$, let $D_{a,\theta}(c)$ and $B_{a,\theta}(c)$ be the total bias-weighted endpoint
count and signed bias-weighted endpoint count contributed by atom $c$ to bucket $a$ and bit class $\theta$.  For unary atoms, this contribution uses the coefficient $w_{\mathrm u}/2$, and for binary endpoints, the coefficient $1/2$.  If
$I_a=[\ell_a,u_a]$, set
\[
g_{a,\theta}^{\mathrm{lo}}(c)
:=
\ell_aD_{a,\theta}(c)-B_{a,\theta}(c),
\qquad
g_{a,\theta}^{\mathrm{up}}(c)
:=
B_{a,\theta}(c)-u_aD_{a,\theta}(c).
\]
For the real instance, after grouping variables with $b_v\in I_a$ and
$x_v^*=\theta$, the sums
\[
D_{a,\theta}:=\sum_c x_cD_{a,\theta}(c),
\qquad
B_{a,\theta}:=\sum_c x_cB_{a,\theta}(c)
\]
are common scalar multiples of
\[
\sum_{v:\,b_v\in I_a,\ x_v^*=\theta} d_v
\qquad\text{and}\qquad
\sum_{v:\,b_v\in I_a,\ x_v^*=\theta} d_vb_v.
\]
The common scalar is $1/2$, since unary endpoints carry coefficient $w_{\mathrm u}/2$ and binary endpoints carry coefficient $1/2$ in the atom
bookkeeping.
Hence $B_{a,\theta}/D_{a,\theta}$, when $D_{a,\theta}>0$, is a weighted
average of biases lying in $I_a$.  Therefore
\[
\sum_c x_c g_{a,\theta}^{\mathrm{lo}}(c)\le0,
\qquad
\sum_c x_c g_{a,\theta}^{\mathrm{up}}(c)\le0.
\]

The finite factor-revealing LP behind the certificate is
\begin{equation}\label{eq:max2sat-certificate-primal}
\begin{aligned}
\eta^*:=\min \quad & z \\
\text{s.t.}\quad
& \sum_{c} o(c)x_c = 1,\\
& \sum_{c} g_{i,\tau}^{\mathrm{lo}}(c)x_c \le 0
\qquad \forall i,\tau,\\
& \sum_{c} g_{i,\tau}^{\mathrm{up}}(c)x_c \le 0
\qquad \forall i,\tau,\\
& \sum_{c} R_r(c)x_c \le z
\qquad \forall r=1,\dots,9,\\
& x_c\ge 0\qquad\forall c.
\end{aligned}
\end{equation}
Here, the normalization $\sum_c o(c)x_c=1$ scales the optimum value to one, and $z$ upper-bounds the value achieved by each of the nine rounding rules. This finite LP is solved by the accompanying search code\footnote{\path{scripts/max2sat_multi_rounding_certificate_search.py} in codes}.  The numerical solution is used only to find a candidate certificate; the proof below uses the
rational Farkas-dual witness checked by the exact verifier:
\begin{equation}\label{eq:max2sat-certificate-dual}
\begin{aligned}
\max \quad & \rho \\
\text{s.t.}\quad
& \sum_{r=1}^{9}\alpha_r R_r(c)-\rho o(c)
+\sum_{i,\tau}\lambda_{i,\tau}^{\mathrm{lo}}g_{i,\tau}^{\mathrm{lo}}(c)
+\sum_{i,\tau}\lambda_{i,\tau}^{\mathrm{up}}g_{i,\tau}^{\mathrm{up}}(c)
\ge 0
\qquad \forall c,\\
& \alpha_r\ge 0\ (1\le r\le 9),\qquad \sum_{r=1}^{9}\alpha_r=1,\\
& \lambda_{i,\tau}^{\mathrm{lo}},\lambda_{i,\tau}^{\mathrm{up}}\ge 0.
\end{aligned}
\end{equation}
In particular, the checker certifies nonnegative numbers
$\alpha_1,\dots,\alpha_9$, with $\sum_r\alpha_r=1$, and nonnegative multipliers $\lambda_{a,\theta}^{\mathrm{lo}},\lambda_{a,\theta}^{\mathrm{up}}$, such that for every typed atom $c$,
\[
\sum_{r=1}^{9}\alpha_r R_r(c)
-\rho_2 o(c)
+\sum_{a,\theta}\lambda_{a,\theta}^{\mathrm{lo}}
  g_{a,\theta}^{\mathrm{lo}}(c)
+\sum_{a,\theta}\lambda_{a,\theta}^{\mathrm{up}}
  g_{a,\theta}^{\mathrm{up}}(c)
\ge0.
\]
Multiplying by $x_c$ and summing over all atoms gives
\[
\sum_{r=1}^{9}\alpha_r A_r(\Snap_t(\Phi))
-\rho_2\OPT(\Phi)
+\sum_{a,\theta}\lambda_{a,\theta}^{\mathrm{lo}}
  \sum_c x_cg_{a,\theta}^{\mathrm{lo}}(c)
+\sum_{a,\theta}\lambda_{a,\theta}^{\mathrm{up}}
  \sum_c x_cg_{a,\theta}^{\mathrm{up}}(c)
\ge0.
\]
The two bucket-feasibility sums are nonpositive, and all multipliers are nonnegative.  Hence
\[
\sum_{r=1}^{9}\alpha_r A_r(\Snap_t(\Phi))
\ge
\rho_2\OPT(\Phi).
\]
Since $\alpha$ is a convex combination,
\[
A_{\max}(\Snap_t(\Phi))
\ge
\sum_{r=1}^{9}\alpha_r A_r(\Snap_t(\Phi))
\ge
\rho_2\OPT(\Phi).
\]
The complete certificate data and verifier are also available in the repository: \url{https://github.com/Guangxu-Yang/maxksat-lp-certificate}.
\end{proof}

\end{document}